\newtheorem{theorem}{Theorem}
\newtheorem{proposition}[theorem]{Proposition}
\newtheorem{lemma}[theorem]{Lemma}
\newcounter{algo}
\newenvironment{algo}[2]{\refstepcounter{algo}\label{#2}   \begin{center}
\begin{minipage}{0.49\textwidth}   \hrule\smallskip
\textbf{Algorithm \thealgo: #1}
\par\smallskip\hrule\smallskip\ignorespaces}{\par\smallskip\hrule
\end{minipage}
\end{center}}
\let\oldproofname=\proofname
\renewcommand{\proofname}{\rm\bf{\oldproofname}}
\renewcommand{\Re}{\mathbb{R}}
\newcommand{\Km}{\mathcal{K}_\text{{mx}}}
\newcommand{\z}{\mathbf{z}}
\newcommand{\y}{\mathbf{y}}
\newcommand{\x}{\mathbf{x}}
\newcommand{\X}{\mathbf{X}}
\newcommand{\A}{\mathbf{A}}
\newcommand{\bv}{\mathbf{b}}
\begin{document}

\title{\vspace{-0.5cm}Hybrid Random/Deterministic Parallel Algorithms for  Nonconvex  Big Data Optimization}

\author{Amir   Daneshmand,  Francisco Facchinei, Vyacheslav Kungurtsev, and Gesualdo Scutari\newline (the order of the authors is alphabetical$^\ast$)\vspace{-0.9cm}
\thanks{$^\ast$All the authors contributed equally to the paper.}
\thanks{A. Daneshmand and G. Scutari are with the Dept. of Electrical Engineering, at  the State Univ. of New
York at Buffalo, Buffalo, USA. Email: \texttt{<amirdane,gesualdo>@buffalo.edu}.\newline  \indent F. Facchinei is with the  Dept. of Computer, Control,
and Management Engneering, at Univ.  of Rome La Sapienza, Rome, Italy. Email: \texttt{francisco.facchinei@uniroma1.it}. \newline
\indent V. Kungurtsev is with the Agent Technology Center, Dept. of Computer Science, Faculty of Electrical Engineering, Czech Technical
University in Prague. Email: \texttt{vyacheslav.kungurtsev@fel.cvut.cz}.\newline
\indent Part of this work has been published on arxiv on June 2014.}
%
}
\maketitle
\vspace{-0.4cm}

\begin{abstract}
 We propose  a  decomposition framework for the parallel optimization
of the sum of a differentiable {(possibly nonconvex)} function and a
nonsmooth (possibly  nonseparable), convex one. The latter term is usually employed to enforce structure in the solution, typically sparsity. The main contribution of this work is a  novel \emph{parallel, hybrid random/deterministic} decomposition scheme wherein, at each iteration, a subset of (block) variables is updated at the same time by minimizing local convex approximations of the original nonconvex function. To tackle with huge-scale problems, the (block) variables to be updated are   chosen according to a \emph{mixed random and deterministic} procedure, which captures the advantages  of both pure deterministic and random update-based schemes. Almost sure convergence of the proposed scheme is established.   Numerical results show that on huge-scale problems the  proposed hybrid random/deterministic algorithm outperforms both  random and deterministic schemes.

\end{abstract}\vspace{-0.1cm}

\begin{IEEEkeywords}
 Nonconvex problems, Parallel  and distributed methods,  Random selections, Jacobi method, Sparse solution.
\end{IEEEkeywords}\vspace{-0.3cm}

\section{Introduction}

\label{sec:Introduction}

We consider the minimization of the sum of a smooth (possibly \emph{nonconvex})  function $F$ and of a nonsmooth
(possibly \emph{nonseparable}) convex one $G$:
\vspace{-0.1cm}
\begin{equation}\label{eq:problem 1}
\min_{\x\in X} V(\x) \triangleq F(\x) + G(\x),
\end{equation}\vspace{-0.3cm}

\noindent
where $X$ is a closed convex set with a cartesian product structure: $X = \Pi_{i=1}^N X_i \subseteq \Re^n$.
Our focus is on problems with a huge number of variables, as those that can be encountered, for example, in
machine learning, compressed sensing, data mining, tensor factorization and completion, network optimization,
image processing, genomics, and  meteorology. We refer the reader to
\cite{tibshirani1996regression,qin2010efficient,yuan2010comparison,fountoulakis2013second,necoara2013efficient,nesterov2012gradient,
 tseng2009coordinate,beck_teboulle_jis2009,wright2009sparse,peng2013parallel,
 SlavakisGiannakis_ICASSP14,Slavakis-Giannakis-Mateos_SPMag14,DeSantisLucidiRinaldi14arxiv} and the
 books \cite{Sra-Nowozin-Wright_book11,bach2011optimization}
 as entry points to the literature.
  
Recent years have witnessed a surge of interest in these very large scale problems, and the evocative term Big Data
optimization has  been coined
to denote this new area of research.   Block Coordinate Descent (BCD) methods rapidly emerged as
a winning paradigm to attack Big Data optimization, see e.g. \cite{yuan2010comparison}. At each iteration of a BCD
method one block of variables is updated using first-order information,
while keeping all other variables fixed.  This  dramatically reduces the memory and computational requirements of
each iteration and leads to simple and scalable methods.
One of the key ingredients in a BCD method is the choice of the block of variables to update. This can be
accomplished in several ways, for example using a cyclic order
or   some greedy/opportunistic selection strategy, which aims at selecting the block leading to the largest decrease of the objective
function. The cyclic order has the advantage of being extremely simple, but the greedy strategy usually provides
faster convergence, at the cost of an increased computational effort at each iteration.
However, no matter which block selection rule is adopted, as the  dimensions of the optimization problems increase,  even  BCD methods may result inadequate. To
alleviate the ``curse of dimensionality'',  three different  kind of strategies
have been proposed, namely: (a) \emph{parallelism}, where several blocks of variables are updated simultaneously in
a multicore or distributed computing environment,
see  e.g.
\cite{bradley2011parallel,tseng2009coordinate,patriksson1998cost,FaccSagScu_ICASSP14,
FacSagScuTSP14,fercoq2014fast,fercoq2013accelerated,
LuXiao2013randomCDM,necoara2013efficient,necoara_clipici2014,nesterov2012gradient,nesterov2012efficiency,
 peng2013parallel,richtarik2012parallel,tseng2009coordinate,beck_teboulle_jis2009,wright2009sparse};
(b) \emph{random selection} of the block(s) of variables to update, see e.g.
\cite{shalev2011stochastic,nesterov2012efficiency,
richtarik2012parallel,fercoq2014fast,fercoq2013accelerated,lu2013complexity,LuXiao2013randomCDM,necoara2014random,necoara_clipici2014,
patrascu_necoara2014,richtarik2014iteration};
and (c) use of ``\emph{more-than-first-order}'' information, for example (approximated) Hessians or (parts of) the original
function itself, see e.g.
\cite{dassios2014second,FaccSagScu_ICASSP14,FacSagScuTSP14,fountoulakis2013second, yuan2012improved}.
Point  (a) is self-explanatory and rather intuitive (although the corresponding theoretical analysis is by no means
trivial); here we only remark that the vast majority of parallel BCD methods apply to \emph{convex problems only}.  Points 
(b) and (c)  need further comments.

\noindent \texttt{Point (b)}: The random selection   of variables to update (also termed \emph{random sketching})  is 
essentially as 
cheap as a cyclic selection   while   alleviating  some of the pitfalls of  cyclic rules.
 Moreover, random sketching  is relevant in  distributed environments wherein data are
not available in their entirety, but are acquired either in batches over time or over a network   (and not all nodes are equally 
responsive). In such scenarios,  one might be interested in running the optimization process at a certain instant even with the 
limited, randomly available information. The main limitation of  random selection rules  is   that they  
remain disconnected from the status of the optimization process, which  instead  is exactly the kind of behavior that 
greedy-based updates try to avoid, in favor of  faster convergence, but at the cost of more intensive computation. 


\noindent \texttt{Point (c)}: The use of ``more-than-first-order'' information also has to do with the trade-off between cost-per-iteration and overall cost of the optimization process. Using higher order or structural information may seem unreasonable,
given the huge size of the problems at hand, and in fact the accepted wisdom is that at most first-order information can be used in the Big Data environment. However,
recent studies, as those mentioned above, challenge this  wisdom and suggest that a judicious use of some kind of ``more-than-first-order'' information can lead to substantial overall improvements.


The above    pros $\&$ cons analysis suggests that it would be desirable to design a parallel algorithm for nonconvex problems
combining the benefits of   random sketching \emph{and} greedy updates, possibly using  ``more-than-first-order''
information. To the best of our knowledge, no such  algorithm exists in the literature.
In this paper,  building on our previous deterministic methods 
\cite{scutari_facchinei_et_al_tsp13,FaccSagScu_ICASSP14,FacSagScuTSP14}, we propose   a BCD-like scheme for the 
computation of stationary solutions of Problem
\eqref{eq:problem 1} filling the gap and  enjoying   \emph{all} the following features:
\vspace{-0.08cm}

\begin{enumerate}
\item It uses a random selection rule for the blocks, followed by  a deterministic subselection;
\item It can classically tackle  separable convex function $G$, i.e.,  $G(\x) = \sum_i G_i(\x_i)$, but   also  
\emph{nonseparable} functions $G$;
\item It can deal with a nonconvex functions $F$;
\item It can use both first-order and higher-order information;
\item It is parallel;
\item It can use inexact updates;
\item It converges {\em almost surely}, i.e. our convergence results are  of the form ``with probability one''.
\end{enumerate}

\noindent
As far as we are aware of, this is the first algorithm enjoying all these properties, 
even in the convex case.
The combination  of all the features 1-7 in one single algorithm is  a major achievement in itself, which  offers great flexibility to develop tailored instances of solutions methods within the same framework (and thus all converging under the same unified conditions). 
Last but not least, our experiments show  impressive performance of the proposed methods, outperforming state-of-the-art solution scheme (cf. Sec.  \ref{sec:Num}). 
As a final remark, we underline that, at more methodological level,   the combination of all features 1-7 and, in particular, the need to conciliate random and deterministic strategies,  led   to the development of  a new type of convergence analysis 
(see Appendix \ref{App:random_properties}) which is also of interest {\em per se} and could bring to further   developments.

Below we  further comment on some of features 1-7, compare to existing results, and detail 
our  contributions. 

\noindent \texttt{Feature 1}: As far as we are aware of, the idea of making a random selection and then perform a greedy subselection has been 
previously discussed only in \cite{NIPS2012_4674}. However,  results therein i) are only for 
\emph{convex} problems with a \emph{specific} structure; ii) are based on a regularized first-order model; iii)  require a very stringent 
``spectral-radius-type'' condition, 
which   severely limits the degree   of parallelism$-$the 
maximum  number of variables that can be simultaneously updated at each iteration while guaranteeing  convergence; and iv)  convergence results are in terms of  expected value of the objective function. The proposed algorithmic framework  expands vastly on this 
setting, while enjoying also all  properties 2-7. In  particular,  it is the first hybrid random/greedy scheme for \emph{nonconvex nonseparable} functions, and it allows  \emph{any} degree of parallelism (i.e., the update  of any number of variables); and  all this is achieved under much weaker convergence conditions 
than those in  \cite{NIPS2012_4674}, satisfied by most of practical problems.
 Numerical results  show that the proposed hybrid schemes   updating   greedily 
just some blocks within the  pool of those   selected by  a random rule 
is very effective, 
and seems to preserve 
the advantages of both  random and deterministic selection rules.

\noindent \texttt{Feature 2}: The ability of dealing with some classes of nonseparable convex functions  
has been documented in  \cite{auslender1976optimisation, tseng2001convergence,razaviyayn2013unified}, \emph{but only for deterministic and 
  sequential schemes}; our approach extends also to    \emph{parallel},  \emph{random} schemes. 
 
 \noindent \texttt{Feature 3}: 
  The list    of works dealing with BCD methods for nonconvex $F$'s   is  short:
\cite{patrascu_necoara2014, LuXiao2013randomCDM} for \emph{random sequential} methods; and    
\cite{tseng2009coordinate,patriksson1998cost,FaccSagScu_ICASSP14,FacSagScuTSP14,razaviyayn2014parallel}
for \emph{deterministic parallel} ones.  The only (very recent) paper dealing with \emph{random parallel} methods for 
  nonconvex $F$'s  is the arxiv submission  \cite{razaviyayn2014parallel}, which  however does not enjoy the key
properties  1, 2, and 6.

 \noindent \texttt{Feature 4}: 
 We want to stress  the ability of the proposed algorithm to exploit in a systematic way  ``more-than-first-order''
information. At each iteration of a BCD
method, one block of variables is updated using a (possibly regularized) first-order model of the objective function,
while keeping all other variables fixed. Our method, following the approach first explored in 
\cite{scutari_facchinei_et_al_tsp13,FaccSagScu_ICASSP14,FacSagScuTSP14}  provides the flexibility  of using more sophisticated 
models. For example, i) one could use a Newton-like approximation;  or ii)  suppose that in (\ref{eq:problem 1}) $F= F_1 + F_2$, where
$F_1$ is convex and $F_2$ is not. Then, at iteration $k$,  one could base the update of the $i$-th block on the approximant 
 $F_1(\mathbf{x}_i, \mathbf{x}_{-i}^k) +   \nabla_{\x_i} F_2 (\x^k)^T (\x_i - \x_i^k)   + G(\mathbf{x}_i,\x_{-i}^k)$,
where $\x_{-i}$ denotes the vector obtained from $\mathbf{x}$ by deleting   $\x_i$. The logic here is that instead 
of linearizing the whole function $F$ we only linearize the difficult, nonconvex part $F_2$.
In this light  we can  also better appreciate the importance of feature 6, since if we go for more complex approximants, the 
ability to deal with inexact solutions becomes important. 

\noindent \texttt{Feature 6}: 
Inexact solution methods have been little studied.
Papers \cite{goodman2008exponential,chang2008coordinate,yuan2010comparison} (somewhat indirectly) consider 
some of these issues in the specialized context of $\ell_2$-loss linear support vector machines. A more systematic treatment of 
inexactness of the solution of a first-order model  is documented in \cite{tappenden2013inexact}, in the context of random 
sequential BCD methods for \emph{convex} problems. Our results in this paper are based on our previous works 
\cite{scutari_facchinei_et_al_tsp13,FaccSagScu_ICASSP14,FacSagScuTSP14}, where both the use of ``more-than-first-order'' 
models and inexactness are introduced and rigorously analyzed   in the  context of parallel, deterministic methods.  
This paper  extends results  in \cite{scutari_facchinei_et_al_tsp13,FaccSagScu_ICASSP14,FacSagScuTSP14}   to  the random, parallel schemes for   \emph{nonconvex} objective functions, and constitute the 
first study of these issues in this setting.


As a final remark, we observe that a large portion of works mentioned so far are   interested in (global) complexity analysis. 
Of course this is an 
important  topic, but it is outside the scope of this paper. Note that, with the exception of 
\cite{patrascu_necoara2014}, all papers dealing with complexity analyses, study (regularized) \emph{gradient-type}  
methods for \emph{convex} problems. 
Given our expanded setting, we believe it is more fruitful  to concentrate on proving convergence  and verifying the   practical 
effectiveness  of our algorithms.

The paper is organized as follows. Section \ref{sec:Pro} formally introduces the optimization  problem  along with the main assumptions  under which it is studied and also discusses some  technical points.
 The proposed  algorithmic framework  and its convergence properties are introduced in Section \ref{sec:Main Results},
 while  numerical results are presented in Section  \ref{sec:Num}. Section \ref{sec:conclusions} draws some conclusions. All proofs  are given in the Appendix.\vspace{-0.2cm}



\section{Problem Definition and Preliminaries}\label{sec:Pro}

We consider Problem \eqref{eq:problem 1},
where the feasible set $X=X_1\times \cdots \times  X_N$ is a Cartesian product of lower dimensional convex sets $X_i\subseteq \Re^{n_i}$, and $\mathbf x\in \Re^n$ is partitioned accordingly: $\mathbf x = (\mathbf  x_1, \ldots, \mathbf x_N)$, with each $\mathbf  x_i \in \Re^{n_i}$; we denote by $\mathcal N\triangleq \{1,\ldots ,N\}$ the set of the $N$ blocks. The function $F$ is  smooth (and not necessarily convex and separable)
and $G$  is convex, and  possibly nondifferentiable and nonseparable.  Some widely-used  choices for $G(\x)$ are $c\|\x\|_1$ and  $c \sum_{i=1}^N\|\x_i\|_2$, from which one can see that
Problem \eqref{eq:problem 1} includes many popular Big Data optimization problems, such as Lasso, group Lasso, sparse logistic regression,  $\ell_2$-loss Support Vector Machine,
 Nuclear Norm Minimization, and Nonnegative Matrix (or Tensor) Factorization problems.

\noindent \textbf{Assumptions.} Given \eqref{eq:problem 1}, we make the following  blanket assumptions:

\begin{description}[topsep=-2.0pt,itemsep=-2.0pt]
\item[\rm  (A1)]  Each $X_i$ is nonempty, closed, and convex;\smallskip
\item[\rm  (A2)] $F$ is $C^1$ on an open set containing $X$;\smallskip
\item[\rm  (A3)]  $\nabla F$ is   Lipschitz continuous
on $X$ with constant $L_{F}$;\smallskip
\item[\rm  (A4)] $G$ is continuous and convex on $X$ (possibly nondifferentiable and nonseparable);\smallskip
\item[\rm  (A5)] $V$ is coercive.
\end{description}
\medskip

\noindent
Note that the above assumptions are standard and are satisfied by most of the problems of practical interest. For instance,  A3 holds automatically if $X$ is bounded, whereas  A5 guarantees the existence of a solution.


%


With the advances of multi-core architectures, it is desirable to develop  {\em parallel} solution methods for Problem \eqref{eq:problem 1} whereby operations can be carried out on some or  (possibly) all   (block) variables $\x_i$ at
the \emph{same} time. The most natural   parallel (Jacobi-type) method one can think of is  updating  \emph{all} blocks simultaneously: given $\mathbf{x}^k$, each (block) variable $\mathbf{x}_i$  is updated   by solving the following subproblem
\begin{equation}\label{eq:plain_Jac}
\mathbf{x}_i^{k+1}\in  \underset{\mathbf{x}_i\in X_i}{\text{argmin}}\,\, \left\{ F(\mathbf{x}_i, \mathbf{x}_{-i}^k) + G(\mathbf{x}_i,\x_{-i}^k)\right\}.
\end{equation}
Unfortunately this method converges only under very restrictive conditions  \cite{Bertsekas_Book-Parallel-Comp} that are seldom verified in practice (even in the absence of the nonsmooth part $G$).  Furthermore, the exact computation of $\mathbf{x}_i^{k+1}$ may be difficult and computationally  too expensive.

To cope with these issues, a natural approach is to replace the (nonconvex) function $F(\bullet, x^k_{-i})$ by a suitably chosen local convex approximation $\widetilde{F}_i(\x_i;\x^k)$, and solve instead  the convex problems (one for each block)
\begin{equation}\label{eq:plain_Jac bis}
\mathbf{x}_i^{k+1}\in  \underset{\mathbf{x}_i\in X_i}{\text{argmin}}\,\, \left\{ \tilde{h}_{i}(\mathbf{x}_{i};\x^k)\triangleq \widetilde{F}_i(\x_i;\x^k) + G(\x_i;\x_{-i}^k)\right\},
\end{equation}
with the understanding that the minimization in \eqref{eq:plain_Jac bis} is simpler than that in \eqref{eq:plain_Jac}. Note that the function $G$ has not been touched; this is
  because i) it is generally much more difficult to find a ``good'' approximation of a nondifferentiable function than of a differentiable one; ii) $G$ is already convex; and iii) the functions $G$ encountered in practice do not make the optimization problem  \eqref{eq:plain_Jac bis}  difficult (a closed form solution is available for a large classes of $G$'s, if $\widetilde{F}_i(\x_i;\x^k)$ are properly chosen). In this work we assume that the approximation functions   $\widetilde{F}_i(\mathbf{z};\mathbf{w}) :  X_i \times  X \to \Re$,     have the following properties (we denote by $\nabla \widetilde{F}_i$
the partial gradient of $\widetilde{F}_i$ with respect to the first argument   $\mathbf{z}$):\smallskip
\begin{description}
\item[\rm  (F1)]
 $\widetilde{F}_{i} (\mathbf{\bullet}; \mathbf{w})$ is uniformly strongly  convex with constant $q>0$ on $ X_i$;
\item[\rm  (F2)]  $\nabla \widetilde{F}_{i} (\mathbf{x}_i;\mathbf{x}) = \nabla_{{\mathbf{x}}_i} F(\mathbf{x})$ for all $\mathbf{x} \in  X$;
\item[\rm  (F3)]  $\nabla \widetilde{F}_{i} (\mathbf{z};\mathbf{\bullet})$ is Lipschitz continuous
on $ X$ 
 for all $\mathbf{z} \in  X_i$.
\end{description}
\smallskip

Such a  function  $\widetilde{F}_i$ should be regarded as a (simple) convex approximation of $F$ at the point $\mathbf{x}$ with respect to the block of variables $\mathbf{x}_i$ that
preserves the first order properties of $F$ with respect to  $\mathbf{x}_i$. Note that, contrary to most of the works in the literature (e.g., \cite{razaviyayn2013unified}), we do not require $\widetilde{F}_i$ to be a global  {\em upper} approximation of  $F$, which significantly enlarges the range of applicability of the proposed solution methods.

  The most popular choice for
$\widetilde{F}_i$ satisfying F1-F3  is
\begin{equation}\label{gradient-prox}
\widetilde{F}_i(\x_i;\x^k) = F(\x^k) + \nabla_{\x_i} F(\x^k)^T(\x_i-\x_i^k) + \frac{\tau_i}{2} \| \x_i-\x_i^k\|^2,
\end{equation}
with $\tau_i>0$. This is essentially the way   a new iteration is computed in most  (block-)BCDs for the
solution of (group) LASSO problems and its generalizations. When $G\equiv 0$, this choice gives rise to a gradient-type scheme; in fact we obtain $\mathbf{x}_i^{k+1}$ simply by a shift along the antigradient. As we discussed in the introduction, this is a first-order method, so it seems advisable, at least in some situations, to use more informative $\widetilde{F}_i$-s. If     $F(\x_i, \x_{-i}^k)$ is convex, an alternative is to take  $\widetilde{F}_i(\x_i;\x^k)$ as a second order approximation of $F(\x_i, \x_{-i}^k)$, i.e.,
\begin{equation}
\begin{array}{ll}
\widetilde{F}_i(\x_i; \x^k) = F(\x^k) + \nabla_{\x_i}F(\x^k)^T(\x_i - \x_i^k) \smallskip\\ \hspace{2cm}+ \frac{1}{2} (\x_i - \x_i^k)^T \left( \nabla^2_{\x_i\x_i} F(\x^k)+ qI\right)  (\x_i - \x_i^k),\end{array}
\end{equation}
where $q$ is nonnegative and can be taken to be zero if  $F(\x_i, \x_{-i}^k)$ is actually strongly convex.
When $G\equiv 0$, this essentially corresponds to taking a Newton step in minimizing the ``reduced'' problem $\min_{\x_i\in X_i}F(\x_i, \x_{-i}^k)$. Still in the case of convex  $F(\x_i, \x_{-i}^k)$, one could also take just $$\widetilde{F}_i(\x_i; \x^k) = F(\x_i, \x_{-i}^k),$$ which preserves the whole structure of the function.   Other valuable choices tailored to specific applications are discussed in \cite{scutari_facchinei_et_al_tsp13,FacSagScuTSP14}.  As a guideline, note that our method, as we shall describe in details shortly, is based on the iterative (approximate) solution of problem \eqref{eq:plain_Jac bis} and therefore a balance should be aimed at between the accuracy of the approximation $\tilde F$ and the ease of solution of   \eqref{eq:plain_Jac bis}.  Needless to say, the option (\ref{gradient-prox})
 is the less informative one, although it usually makes the computation of the solution of \eqref{eq:plain_Jac bis} a cheap task.

\noindent \textbf{Best-response map:} Associated with each $i$ and point $ \mathbf{x}^k \in X$, under F1-F3,  we can define the following optimal block solution map:\vspace{-0.1cm}
\begin{equation}\label{eq:decoupled_problem_i}
\widehat{\mathbf{x}}_{i}(\mathbf{x}^k)\triangleq\underset{\mathbf{x}_{i}\in X_{i}}{\mbox{argmin}\,}\tilde{h}_{i}(\mathbf{x}_{i};\mathbf{x}^{k}).\vspace{-0.1cm}
\end{equation}
 Note that $\widehat{\mathbf{x}}_{i}(\mathbf{x}^{k})$
is always well-defined, since the optimization problem in (\ref{eq:decoupled_problem_i})
is strongly convex. Given (\ref{eq:decoupled_problem_i}),
we can then introduce the solution map
\begin{equation}\label{best-response}
X\ni\mathbf{y}\mapsto\widehat{\x}(\mathbf{y})\triangleq\left(\widehat{\x}_{i}(\y)\right)_{i=1}^{N}.
\end{equation}

Our algorithmic framework is based on solving in parallel a suitable selection of  subproblems \eqref{eq:decoupled_problem_i}, converging thus to  {\em fixed-points} of
$\widehat{\x}(\bullet)$ (of course the selection varies at each iteration). It is then natural to ask which relation exists  between these fixed points and the stationary solutions of Problem \eqref{eq:problem 1}. To answer this key question, we recall first a few definitions.

\begin{description}
\item[\textbf{Stationarity:}]
A point $\x^*$  is a stationary point of \eqref{eq:problem 1} if   a subgradient $\boldsymbol{\xi} \in \partial G(\x^*)$ exists such
that
$(\nabla F(\x^*) $ $+ \boldsymbol{\xi} )^T(\y-\x^*) \geq 0$ for all $\y\in X$.  \medskip
\item[\textbf{Coordinate-wise stationarity:}] A point $\x^*$  is a coordinate-wise stationary point
of \eqref{eq:problem 1} if   subgradients $\boldsymbol{\xi}_i \in \partial_{\boldsymbol{\xi}_i} G(\x^*),$ with $i \in \mathcal N$, exist such that
$(\nabla_{\x_i} F(\x^*)  + \boldsymbol{\xi}_i)^T(\y_i-\x^*_i) \geq 0$, for all $\y_i\in X_i$ and $i\in \mathcal N$. 
\smallskip
\end{description}
Of course,  if $F$ is convex, stationary points
coincide with its global minimizers. In words, a coordinate-wise stationary solution is  a point for which $\x^*$ is stationary w.r.t. every block of
variables. It is clear that a stationary point is always a coordinate-wise stationary point;  the converse however  is not always true, unless extra conditions on $G$ are satisfied.
\begin{description}
\item[\textbf{Regularity:}]  Problem \eqref{eq:problem 1} is {\em regular} at a coordinate-wise stationary point $\x^*$ if
$\x^*$ is also a stationary point of the problem.\smallskip\end{description}

Regularity at $\x^*$ is a rather weak requirement,
and is easily seen to be implied, in particular, by the
following two conditions: 
\begin{description} \item[(a)] $G$ is separable (still nonsmooth), i.e., $G(\x) = \sum_i G_i(\x_i)$;\smallskip
\item[(b)] $G$ is continuously differentiable around $\x^*$.\end{description} Note that (a) is assumed in practically all papers dealing with deterministic/random BCD methods (with the exception of \cite{tseng2001convergence,razaviyayn2013unified}, where however only \emph{sequential} schemes are proposed). Regularity can well occur also for nonseparable functions. For instance,  consider the function arising in logistic regression problems $F(\x) = \sum_{j=1}^m \log(1 + e^{-a_ij\y_j^T\x }),$
with $X=\Re^n$, and $\y_j\in \Re^n$ and $a_j\in \{-1, 1\}$ being  given
constants. Now, choose $G(\x) = c\|\x\|_2$;
the resulting function is continuously differentiable, and therefore regular, at any stationary point but $\x^*\neq \mathbf{0}$.
It is easy to verify that $V$ is also regular at $\x=\mathbf{0}$, provided that $c< \log 2$.

The following proposition is elementary and elucidates the connections between stationarity conditions of Problem \eqref{eq:problem 1}
and fixed-points of $\widehat{\x}(\bullet)$.\vspace{-0.1cm}
\begin{proposition}\label{prop_fixed-points}
Given Problem  \eqref{eq:problem 1} under A1-A5 and  F1-F3, the following hold:
\begin{description}
\item[i)] The set of fixed-points of $\widehat{\x}(\mathbf{\bullet})$ coincides with the coordinate-wise stationary points of
Problem \eqref{eq:problem 1};

\item[ii)] If, in addition,  Problem \eqref{eq:problem 1} is regular at a fixed-point of $\widehat{\x}(\bullet)$, then such a  fixed-point is also  a stationary point of the problem.
\end{description}
\end{proposition}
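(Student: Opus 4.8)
The proof of part i) hinges on writing the first-order optimality condition for each block subproblem \eqref{eq:decoupled_problem_i} and matching it, term by term, with the coordinate-wise stationarity condition. First I would fix $\x^*\in X$ and $i\in\mathcal N$ and observe that, by F1, $\tilde h_i(\bullet;\x^*)$ is strongly convex on $X_i$, so $\widehat{\x}_i(\x^*)$ is the unique point of $X_i$ satisfying the minimum principle: $\x^*_i=\widehat{\x}_i(\x^*)$ if and only if there exists a subgradient $\boldsymbol{\xi}_i\in\partial_{\x_i} G(\x^*)$ such that $(\nabla\widetilde{F}_i(\x^*_i;\x^*)+\boldsymbol{\xi}_i)^T(\y_i-\x^*_i)\geq 0$ for all $\y_i\in X_i$. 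Here I use that $\widetilde{F}_i(\bullet;\x^*)$ is differentiable, so the subdifferential sum rule $\partial\big(\widetilde F_i(\bullet;\x^*)+G(\bullet,\x^*_{-i})\big)=\nabla\widetilde F_i(\bullet;\x^*)+\partial G(\bullet,\x^*_{-i})$ applies with no constraint qualification, and that the partial subdifferential $\partial_{\x_i}G(\x^*)$ is, by convexity of $G$, exactly the subdifferential of the restriction $\x_i\mapsto G(\x_i,\x^*_{-i})$ evaluated at $\x^*_i$.

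The second step is to invoke F2, which gives $\nabla\widetilde{F}_i(\x^*_i;\x^*)=\nabla_{\x_i}F(\x^*)$. Substituting, the displayed inequality becomes precisely the $i$-th component of the coordinate-wise stationarity condition, namely $(\nabla_{\x_i}F(\x^*)+\boldsymbol{\xi}_i)^T(\y_i-\x^*_i)\geq 0$ for all $\y_i\in X_i$. Letting $i$ range over $\mathcal N$, we conclude that $\x^*$ is a fixed-point of $\widehat{\x}(\bullet)$, i.e. $\x^*_i=\widehat{\x}_i(\x^*)$ for every $i$, if and only if there exist subgradients $\boldsymbol{\xi}_i\in\partial_{\x_i}G(\x^*)$, $i\in\mathcal N$, making all these inequalities hold, which is exactly the definition of a coordinate-wise stationary point of \eqref{eq:problem 1}. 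This establishes i).

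Part ii) is then immediate: if $\x^*$ is a fixed-point of $\widehat{\x}(\bullet)$, then by i) it is a coordinate-wise stationary point of \eqref{eq:problem 1}, and the assumption that Problem \eqref{eq:problem 1} is regular at $\x^*$ says, by the definition of regularity, exactly that such a point is also a (full) stationary point of the problem.

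I do not expect a genuine obstacle here, since the statement is elementary; the only point demanding a little care is the bookkeeping of subgradients when $G$ is nonseparable. One must phrase the per-block optimality condition in terms of $\partial_{\x_i}G(\x^*)$, the subdifferential of the one-block restriction, and allow the subgradients $\boldsymbol{\xi}_i$ to differ across blocks, exactly as in the definition of coordinate-wise stationarity; no single $\boldsymbol{\xi}\in\partial G(\x^*)$ need serve all blocks simultaneously, which is why, in the absence of regularity, the converse implication toward full stationarity can fail.
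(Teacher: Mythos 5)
Your proof is correct and is precisely the elementary argument the paper has in mind (it omits the proof, calling the proposition "elementary"): the first-order optimality condition of the strongly convex subproblem \eqref{eq:decoupled_problem_i} at $\x^*_i=\widehat{\x}_i(\x^*)$, combined with the gradient-consistency condition F2, is exactly the $i$-th coordinate-wise stationarity inequality, and part ii) is the definition of regularity. Your closing remark about the per-block subgradients $\boldsymbol{\xi}_i$ not needing to come from a single $\boldsymbol{\xi}\in\partial G(\x^*)$ is exactly the right point of care for nonseparable $G$.
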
\vspace{-0.1cm}

Other properties of the best-response map $\widehat{\x}(\mathbf{\bullet})$  that are instrumental to prove convergence of the proposed algorithm are introduced in Appendix \ref{sec:BR_properties}.\vspace{-0.2cm}

\section{Algorithmic Framework\label{sec:Main Results}}

We are   ready to describe our  algorithmic framework. We begin introducing a formal  description  of its salient characteristic,  the novel hybrid  random/greedy  block selection rule. 

The   random block selection works as follows: at each iteration $k$, a random set $\mathcal{S}^k\subseteq \mathcal N$ is generated, and the blocks $i\in \mathcal{S}^k$ are  the potential candidate variables  to    update  in parallel. The set $\mathcal{S}^k$ is a realization of a random set-valued mapping  $\boldsymbol{\mathcal{S}}^k$ with
values in the power set of $\mathcal N$. To keep the proposed scheme as general as possible, we do not constraint $\boldsymbol{\mathcal{S}}^k$ to any specific distribution; we only require   that, at each iteration $k$, each block $i$ has a chance (positive probability, possibly nonuniform) to be
selected.
\begin{description}
\item[\rm  (A6)] The   sets  $\mathcal{S}^k$ are realizations of  independent random set-valued  mappings $\boldsymbol{\mathcal{S}}^k$   such that  $\mathbb{{P}}(i\in \boldsymbol{\mathcal{S}}^k)\geq p$, for all $i=1,\ldots ,N$ and $k\in \mathbb{N}_+$, and some $p>0$.
\end{description}

A random selection rule   $\boldsymbol{\mathcal{S}}^k$ satisfying A6 will be called \emph{proper sampling}. Several proper sampling rules will be discussed in details shortly.

\smallskip

As already discussed in the introduction, the random selection of blocks seems becoming beneficial when the dimensions of the problem increase significantly. But  recent results in \cite{FacSagScuTSP14,peng2013parallel,li2009coordinate,NIPS2011_4425} strongly suggest that a greedy approach updating only the  ``promising'' blocks  is an important ingredient of an efficient algorithm. Of course,  for very large scale problems, checking   whether a block is promising or not might become  computationally demanding  and   thus time consuming. To avoid this burden while capturing the benefits of both strategies, the proposed approach consists in  combining  random and greedy updates in the following form. First,  a random selection is performed$-$the set $\mathcal S^k$ is generated. Second,  a greedy procedure is run to select  {\em in  the pool} $\mathcal S^k$  only the subset of  blocks, say $\hat{\mathcal{S}}^k$, that are ``promising'' (according to a prescribed criterion). Finally all the blocks in $\hat{\mathcal{S}}^k$ are updated  in parallel.
To complete the description of such an hybrid random/greedy selection,  the notion of ``promising'' block needs to be made formal, which is done next.

Since  $\x^k_i$ is an optimal solution of \eqref{eq:decoupled_problem_i} if and only if
$\widehat{\x}_{i}(\x^k)=\x^k_i$, a natural  distance of  $\x^k_i$ from the optimality is  $d_i^{\,k}\triangleq \|\widehat{\x}_{i}(\x^k)-\x^k_i\|$. The blocks  in ${\cal S}^k$ to be updated can be then chosen based on    such an  optimality measure (e.g., opting for blocks exhibiting larger $d_i^{\,k}$'s). However, this choice requires the  computation of  the solutions $\widehat{\x}_{i}(\x^k)$, for all   $i \in {\cal S}^k$, which in some applications might be  still computationally  too expensive.
Building on the same idea, we can introduce alternative, less expensive metrics by replacing the distance   $\|\widehat{\x}_{i}(\x^k)-\x^k_i\|$ with a computationally cheaper {\em error bound},
i.e., a function $E_i(\x)$ such that
\begin{equation}\label{eq:error bound}
\underbar s_i\|\widehat{\x}_{i}(\x^k)-\x^k_i\| \le E_i(\x^k) \leq \bar s_i\|\widehat{\x}_{i}(\x^k)-\x^k_i\|,
\end{equation}
for some $0< \underbar s_i \le \bar s_i$. Of course one can always set  $ E_i(\x^k) = \|\widehat{\x}_{i}(\x^k)-\x^k_i\| $, but other choices are also possible, we refer the interested reader to \cite{FacSagScuTSP14} for more details.

The proposed hybrid random/greedy scheme capturing all the features 1)-6) discussed in Sec. I is formally given in Algorithm 1.   Note that in step S.3 inexact calculations of $\widehat{\mathbf{x}}_i$ are allowed, which is another noticeable and useful feature: one can reduce the cost per iteration without affecting too much, experience shows,
 the empirical convergence speed. 
 In step S.5 we introduced a memory in the variable updates:  the new point $\x^{k+1}$ is a convex combination via $\gamma^k$ of  $\x^k$ and $\widehat{\mathbf{z}}^{k}$.   The step-size $\gamma^k$ plays a key rule in the convergence, and needs to be properly tuned, as  specified  in Theorem \ref{Theorem_convergence_inexact_Jacobi}, which summarizes the convergence properties of Algorithm \ref{alg:general}.\begin{algo}{\textbf{Hybrid Random/Deterministic  Flexible Parallel Algorithm (HyFLEXA)}} S$\textbf{Data}:$ $\{\varepsilon_{i}^{k}\}$
for $i\in\mathcal{N}$, $\boldsymbol{{\tau}}\geq\mathbf{0}$, $\{\gamma^{k}\}>0$,
$\mathbf{x}^{0}\in X$, $\rho \in (0,1]$.

\hspace{1cm} Set $k=0$.

\texttt{$\mbox{(S.1)}:$}$\,\,$If $\mathbf{x}^{k}$ satisfies a termination
criterion: STOP;

\texttt{$\mbox{(S.2)}:$} Randomly generate a set of blocks $\mathcal{S}^k\subseteq \{1,\ldots,N\}$

\texttt{$\mbox{(S.3)}:$}  Set $M^k \triangleq  \max_{i\in \mathcal{S}^k} \{E_i(\x^k)\}$.

\hspace{1.26cm} Choose a  subset  $\hat{\mathcal{S}}^k\subseteq \mathcal{S}^k$  that  contains at least

\hspace{1.24cm} one index $i$ for which
$E_i(\x^k) \geq \rho M^k.$

\texttt{$\mbox{(S.4)}:$} For all $i\in\hat{\mathcal{S}}^k$, solve (\ref{eq:decoupled_problem_i})
with accuracy $\varepsilon_{i}^{k}:$

\hspace{1.96cm} find $\mathbf{z}_{i}^{k}\in X_i$ s.t. $\|\mathbf{z}_{i}^{k}-\widehat{\mathbf{x}}_i\left(\mathbf{x}^{k}\right)\|\leq\varepsilon_{i}^{k}$;

\hspace{1.24cm} Set $\widehat{\z}^k_i =  \mathbf{\z}_{i}^k$ for $i\in \hat{\mathcal{S}}^k$ and
$\widehat{\z}^k_i = \x^k_i$ for $i\not \in \hat{\mathcal{S}}^k$

\texttt{$\mbox{(S.5)}:$} Set $\mathbf{x}^{k+1}\triangleq\mathbf{x}^k+\gamma^{k}\,(\widehat{\mathbf{z}}^{k}-\mathbf{x}^{k})$;

\texttt{$\mbox{(S.6)}:$} $k\leftarrow k+1$, and go to \texttt{$\mbox{(S.1)}.$}
\label{alg:general}
 \end{algo}

\begin{theorem} \label{Theorem_convergence_inexact_Jacobi}Let
$\{\mathbf{x}^{k}\}$ be the sequence generated by
Algorithm \ref{alg:general}, under A1-A6.
 Suppose  that $\{\gamma^{k}\}$
and $\{\varepsilon_{i}^{k}\}$ satisfy the following conditions: i)
$\gamma^{k}\in(0,1]$; ii) $\gamma^{k}\rightarrow0$; iii) $\sum_{k}\gamma^{k}=+\infty$;
iv) $\sum_{k}\left(\gamma^{k}\right)^{2}<+\infty$;
and v)  $\varepsilon_{i}^{k}\leq \gamma^k  \alpha_1\min\{\alpha_2, 1/\|\nabla_{\mathbf{x}_i} F(\mathbf{x}^k)\| \}$
for all $i\in {\cal N}$ and  some nonnegative  constants $\alpha_1$ and $\alpha_2$.
Additionally, if inexact solutions are used in Step 3, i.e.,  $\varepsilon_{i}^{k}>0$ for some $i$ and infinite $k$, then
assume also that $G$ is globally Lipschitz on $X$.
Then, either Algorithm \ref{alg:general} converges in a finite number of iterations to a fixed-point of $\hat{\mathbf{x}}(\bullet)$
of \eqref{eq:problem 1} or there exists at least one limit point of
  $\{\mathbf{x}^{k}\}$ that is   a fixed-point of $\hat{\mathbf{x}}(\bullet)$ w.p.1.
\end{theorem}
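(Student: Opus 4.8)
The plan is to combine a quasi-descent estimate for $V$ along $\{\x^{k}\}$ with the proper-sampling property A6, and then to close the argument with the Robbins--Siegmund almost-supermartingale lemma together with the continuity of the best-response map $\wh{\x}(\bullet)$ (Appendix \ref{sec:BR_properties}). If the stopping test in (S.1) ever fires, the algorithm has halted at an $\x^{k}$ with $\wh{\x}_{i}(\x^{k})=\x^{k}_{i}$ for every $i$, i.e.\ at a fixed-point of $\wh{\x}(\bullet)$, and nothing is left to prove; so I assume an infinite run and write $d_{i}^{\,k}\triangleq\|\wh{\x}_{i}(\x^{k})-\x^{k}_{i}\|$. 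First I would establish a \emph{sufficient-decrease inequality}. Since $\wh{\z}^{k}-\x^{k}$ is supported on $\hat{\mathcal S}^{k}$, one applies the descent lemma to $F$ (A2--A3), the convexity of $G$ to $\x^{k+1}=(1-\gamma^{k})\x^{k}+\gamma^{k}\wh{\z}^{k}$ (A4), and the uniform strong convexity of $\wt{F}_{i}$ together with $\nabla\wt{F}_{i}(\x_{i};\x)=\nabla_{\x_{i}}F(\x)$ (F1--F2) -- and, when inexact solves are used in (S.4), the global Lipschitz continuity of $G$ to control the perturbation $|G(\mathbf z^{k}_{i},\x^{k}_{-i})-G(\wh{\x}_{i}(\x^{k}),\x^{k}_{-i})|\le L_{G}\varepsilon^{k}_{i}$ -- to obtain, along the lines of \cite{FacSagScuTSP14}, an inequality of the form
\begin{equation*}
V(\x^{k+1})\;\le\;V(\x^{k})-\gamma^{k}\big(c_{1}-\gamma^{k}c_{2}\big)\sum_{i\in\hat{\mathcal S}^{k}}\big(d_{i}^{\,k}\big)^{2}+c_{3}\,(\gamma^{k})^{2},
\end{equation*}
with $c_{1},c_{2},c_{3}>0$, where the last term collects the $O((\gamma^{k})^{2})$ contributions of $L_{F}$, of $\|\wh{\z}^{k}-\x^{k}\|$, and of the tolerances $\varepsilon_{i}^{k}$ (the latter being $O((\gamma^{k})^{2})$ thanks to (v): $\gamma^{k}\|\nabla_{\x_{i}}F(\x^{k})\|\,\varepsilon_{i}^{k}\le\alpha_{1}(\gamma^{k})^{2}$ and $\gamma^{k}L_{G}\varepsilon_{i}^{k}\le L_{G}\alpha_{1}\alpha_{2}(\gamma^{k})^{2}$). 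Because $\gamma^{k}\to0$, there is $\bar k$ with $c_{1}-\gamma^{k}c_{2}\ge c_{1}/2$ for $k\ge\bar k$; dropping the nonpositive term then yields $V(\x^{k+1})\le V(\x^{k})+c_{3}(\gamma^{k})^{2}$ for $k\ge\bar k$, so by (iv) $\{V(\x^{k})\}$ is bounded above and, by coercivity (A5), $\{\x^{k}\}$ stays in a compact set; also $V_{\min}\triangleq\min_{X}V>-\infty$.

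Next I would convert the \emph{greedy subselection} into a bound involving all blocks and then condition. By (S.3) there is $\hat\imath\in\hat{\mathcal S}^{k}$ with $E_{\hat\imath}(\x^{k})\ge\rho M^{k}=\rho\max_{j\in\mathcal S^{k}}E_{j}(\x^{k})$, so, using the error bounds \eqref{eq:error bound} on both sides (with $\underline{s}_{\min}\triangleq\min_{i}\underline{s}_{i}>0$, $\bar s_{\max}\triangleq\max_{i}\bar s_{i}$),
\begin{equation*}
\sum_{i\in\hat{\mathcal S}^{k}}\big(d_{i}^{\,k}\big)^{2}\;\ge\;\big(d_{\hat\imath}^{\,k}\big)^{2}\;\ge\;\Big(\tfrac{\rho\,\underline{s}_{\min}}{\bar s_{\max}}\Big)^{\!2}\max_{j\in\mathcal S^{k}}\big(d_{j}^{\,k}\big)^{2}.
\end{equation*}
I would then condition on the history $\mathcal F^{k}$ (so that $\x^{k}$, and hence each $d_{j}^{\,k}$, is $\mathcal F^{k}$-measurable) and take the expectation over the independent mapping $\boldsymbol{\mathcal S}^{k}$: for every fixed $j$ one has $\max_{i\in\mathcal S^{k}}(d_{i}^{\,k})^{2}\ge(d_{j}^{\,k})^{2}\mathbf{1}_{\{j\in\mathcal S^{k}\}}$ and $\mathbb P(j\in\boldsymbol{\mathcal S}^{k})\ge p$ (A6), hence $\mathbb E[\max_{j\in\mathcal S^{k}}(d_{j}^{\,k})^{2}\mid\mathcal F^{k}]\ge p\max_{j\in\mathcal N}(d_{j}^{\,k})^{2}$. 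Combining with the sufficient-decrease inequality, for $k\ge\bar k$,
\begin{equation*}
\mathbb E\big[V(\x^{k+1})\mid\mathcal F^{k}\big]\;\le\;V(\x^{k})-c_{4}\,\gamma^{k}\max_{j\in\mathcal N}\big(d_{j}^{\,k}\big)^{2}+c_{3}\,(\gamma^{k})^{2},\qquad c_{4}\triangleq\tfrac{c_{1}}{2}\Big(\tfrac{\rho\,\underline{s}_{\min}}{\bar s_{\max}}\Big)^{\!2}p>0.
\end{equation*}

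Finally I would invoke the Robbins--Siegmund lemma with $Y_{k}=V(\x^{k})-V_{\min}\ge0$ and the summable perturbation $c_{3}(\gamma^{k})^{2}$ (by (iv)): w.p.1, $\{V(\x^{k})\}$ converges and $\sum_{k}\gamma^{k}\max_{j}(d_{j}^{\,k})^{2}<+\infty$. Since $\sum_{k}\gamma^{k}=+\infty$ (by (iii)), this forces $\liminf_{k}\max_{j\in\mathcal N}d_{j}^{\,k}=0$ w.p.1, so along a (random) subsequence $d_{j}^{\,k}\to0$ for every $j$; as $\{\x^{k}\}$ is bounded, I may refine the subsequence so that $\x^{k}\to\x^{\infty}\in X$. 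By continuity of $\wh{\x}(\bullet)$ (Appendix \ref{sec:BR_properties}) one gets $\wh{\x}_{i}(\x^{k})\to\wh{\x}_{i}(\x^{\infty})$, whence $\|\wh{\x}_{i}(\x^{\infty})-\x^{\infty}_{i}\|=\lim_{k}d_{i}^{\,k}=0$ for all $i$, i.e.\ $\x^{\infty}$ is a fixed-point of $\wh{\x}(\bullet)$. Therefore, w.p.1, either Algorithm \ref{alg:general} stops finitely at a fixed-point or $\{\x^{k}\}$ admits one as a limit point, which is the claim.

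The hard part will be the sufficient-decrease estimate. For a \emph{nonseparable} $G$ under a simultaneous (Jacobi-type) update of several blocks, the block-by-block telescoping used for separable $G$ is not available, and one must exploit the convexity of $G$ on the segment $[\x^{k},\wh{\z}^{k}]$ together with the $O((\gamma^{k})^{2})$ slack -- which is exactly where the vanishing step-size $\gamma^{k}\to0$ is indispensable -- and, in the inexact regime, a uniform control of the $\varepsilon_{i}^{k}$-perturbation, which is the reason global Lipschitz continuity of $G$ is imposed there. A secondary, purely bookkeeping point in the conditioning step is that $\hat{\mathcal S}^{k}$ is a \emph{deterministic} function of $(\x^{k},\mathcal S^{k})$, so that conditionally on $\mathcal F^{k}$ the only surviving randomness is that of $\boldsymbol{\mathcal S}^{k}$; this is what makes the expectation step legitimate.
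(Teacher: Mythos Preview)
Your handling of the randomness is correct and takes a genuinely different route from the paper's. You condition on $\mathcal F^{k}$, use A6 to get $\mathbb E[\max_{j\in\mathcal S^{k}}(d_j^{\,k})^2\mid\mathcal F^{k}]\ge p\max_{j\in\mathcal N}(d_j^{\,k})^2$, and then apply the \emph{stochastic} Robbins--Siegmund lemma to obtain $\sum_k\gamma^k\max_j(d_j^{\,k})^2<\infty$ a.s.\ directly. The paper instead works pathwise: it applies the \emph{deterministic} three-sequence lemma (Lemma~\ref{lemma_Robbinson_Siegmunt}) to get $\sum_k\gamma^k\|(\widehat{\x}(\x^k)-\x^k)_{\hat{\mathcal S}^k}\|^2<\infty$ on every sample path, then introduces the random index set $\mathcal K_{\text{mx}}=\{k:i^k_{\text{mx}}\in\boldsymbol{\mathcal S}^k\}$ with $i^k_{\text{mx}}\in\arg\max_j d_j^{\,k}$, proves via a Chebyshev-type estimate that $\sum_{k\in\mathcal K_{\text{mx}}}\gamma^k=\infty$ a.s.\ (Lemma~\ref{lemma:almost_surely_nonsum}), combines this with Lemma~\ref{lem:sampledesc} on $\mathcal K_{\text{mx}}$, and only then extracts $\liminf\|\widehat{\x}(\x^k)-\x^k\|=0$. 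Your conditioning step is legitimate because under A6 $\boldsymbol{\mathcal S}^k$ is independent of $\mathcal F^k$, so $\mathbb P(j^\ast\in\boldsymbol{\mathcal S}^k\mid\mathcal F^k)\ge p$ even though $j^\ast$ is $\mathcal F^k$-measurable. Your route is shorter and avoids the $\mathcal K_{\text{mx}}$ machinery entirely; what the paper's route buys is that the purely pathwise descent inequality is later reused verbatim in the stronger Theorem~\ref{Theorem_G_separable}.

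There is, however, a real gap in your sufficient-decrease sketch for nonseparable $G$, and it is not fixed by the mechanism you name. Convexity of $G$ along the segment $[\x^k,\widehat{\z}^k]$ gives only $G(\x^{k+1})-G(\x^k)\le\gamma^k[G(\widehat{\z}^k)-G(\x^k)]$, whereas the optimality of each $\widehat{\x}_i(\x^k)$ (your F1--F2 step, i.e.\ Lemma~\ref{lemma on errors}) produces the \emph{per-block} surplus $\sum_{i\in\hat{\mathcal S}^k}[G(\x^k)-G(\widehat{\x}_i(\x^k),\x^k_{-i})]$. For these to cancel you would need $G(\widehat{\z}^k)-G(\x^k)\le\sum_{i\in\hat{\mathcal S}^k}[G(\widehat{\x}_i(\x^k),\x^k_{-i})-G(\x^k)]$, which is false in general for nonseparable $G$, and no $O((\gamma^k)^2)$ slack repairs an $O(\gamma^k)$ mismatch. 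The paper closes this via Lemma~\ref{bounds_on_G}: one rescales $\bar\gamma^k=N\gamma^k$ (this is where $\gamma^k\to0$ is actually used, to ensure $\bar\gamma^k<1$), writes $\bar{\x}^k=\tfrac{N-1}{N}\x^k+\tfrac{1}{N}\check{\x}^k$, and applies a \emph{recursive} convexity decomposition block-by-block to obtain $G(\bar{\x}^k)-G(\x^k)\le\gamma^k\sum_{i\in\hat{\mathcal S}^k}[G(\widehat{\x}_i(\x^k),\x^k_{-i})-G(\x^k)]$. Once you invoke Lemma~\ref{bounds_on_G} (or reproduce that argument) in place of ``convexity on the segment'', the rest of your proof goes through.
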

\begin{proof} See Appendix \ref{proof_Th1}.\end{proof}\vspace{-0.1cm}

The convergence results in Theorem \ref{Theorem_convergence_inexact_Jacobi} can be strengthened  when $G$ is  separable.

\begin{theorem} \label{Theorem_G_separable} In the setting of Theorem \ref{Theorem_convergence_inexact_Jacobi}, suppose in addition that $G(\x)$ is separable, i.e., $G(\x)=\sum_{i\in {\mathcal N}} G_i(\x_i)$. Then,  either Algorithm \ref{alg:general} converges in a finite number of iterations to a stationary solution of Problem
 \eqref{eq:problem 1} or \emph{every} limit point of
  $\{\mathbf{x}^{k}\}$ is      a stationary solution of Problem
 \eqref{eq:problem 1}  w.p.1.
\end{theorem}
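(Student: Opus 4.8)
The plan is to leverage the convergence result of Theorem \ref{Theorem_convergence_inexact_Jacobi} together with the additional structure provided by separability of $G$, upgrading the conclusion from ``one limit point is a fixed-point of $\widehat{\x}(\bullet)$, w.p.1'' to ``every limit point is a stationary solution, w.p.1''. The starting observation is Proposition \ref{prop_fixed-points}: when $G$ is separable, Problem \eqref{eq:problem 1} is regular at every coordinate-wise stationary point (condition (a) after the definition of Regularity), so fixed-points of $\widehat{\x}(\bullet)$ and stationary points of \eqref{eq:problem 1} coincide. Hence it suffices to prove that \emph{every} limit point of $\{\x^k\}$ is a fixed-point of $\widehat{\x}(\bullet)$, w.p.1.

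The natural strategy is a contradiction/measure-theoretic argument built on top of the sufficient-decrease and summability estimates already established in the proof of Theorem \ref{Theorem_convergence_inexact_Jacobi}. First I would recall, from that proof, the key almost-sure descent inequality: there are constants such that, conditioned on the history up to iteration $k$, the expected decrease of $V$ at step $k$ is bounded below by a positive multiple of $\gamma^k \sum_{i\in\widehat{\mathcal S}^k}\|\widehat\x_i(\x^k)-\x_i^k\|^2$ (minus $O((\gamma^k)^2)$ error terms that are summable by condition (iv)), and that the Robbins--Siegmund / supermartingale machinery then yields that $V(\x^k)$ converges and that $\sum_k \gamma^k \big(\text{selected-block optimality gap}\big) < \infty$ w.p.1. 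Because of the greedy rule in S.3, the selected block realizes at least a fraction $\rho$ of the maximum error bound over $\mathcal S^k$, and by the error-bound property \eqref{eq:error bound} this controls $\max_{i\in\mathcal S^k}\|\widehat\x_i(\x^k)-\x_i^k\|$; combined with the proper-sampling assumption A6 (each block is in $\mathcal S^k$ with probability at least $p$), a conditional-expectation argument transfers the control to \emph{all} blocks, giving $\sum_k \gamma^k \sum_{i=1}^N \|\widehat\x_i(\x^k)-\x_i^k\|^2 < \infty$ w.p.1, and hence (since $\sum_k\gamma^k=\infty$) a subsequence along which $\|\widehat\x(\x^k)-\x^k\|\to 0$.

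To get the \emph{every}-limit-point conclusion I would run the standard ``telescoping between two levels'' argument: suppose, for contradiction, that on a set of positive probability there is a limit point $\bar\x$ with $\|\widehat\x(\bar\x)-\bar\x\|\ge \delta>0$. Using continuity of $\widehat\x(\bullet)$ (established among the best-response properties in Appendix \ref{sec:BR_properties}) and the fact, just shown, that $\liminf_k\|\widehat\x(\x^k)-\x^k\|=0$, the iterates must pass infinitely often from a region where $\|\widehat\x(\x^k)-\x^k\|\ge\delta$ to a region where it is $\le\delta/2$; since consecutive iterates differ by $\gamma^k\|\widehat\z^k-\x^k\|$ with $\gamma^k\to 0$ and $\|\widehat\z^k-\x^k\|$ bounded (coercivity A5 keeps $\{\x^k\}$ bounded, and the $\varepsilon_i^k$ are bounded by condition (v)), the number of iterations spent crossing each such ``band'' tends to infinity, forcing $\sum_k \gamma^k\|\widehat\x(\x^k)-\x^k\|^2$ over those indices to diverge — contradicting the summability obtained above. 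The only place separability is genuinely used is in invoking regularity to identify fixed-points with stationary points; everything else is the reinforcement of the ``$\liminf$'' statement of Theorem \ref{Theorem_convergence_inexact_Jacobi} to a ``$\lim$ over all subsequences'' statement.

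The main obstacle is the probabilistic bookkeeping in the band-crossing argument: unlike the deterministic case, the descent at iteration $k$ only controls the \emph{randomly selected} blocks $\widehat{\mathcal S}^k$, so one must argue conditionally on the sampling and carefully combine A6 with the greedy rule S.3 and the error bound \eqref{eq:error bound} to make sure that the accumulated expected decrease over a band-crossing stretch is bounded below by a deterministic positive quantity times $\sum\gamma^k$, and then apply a suitable conditional Borel--Cantelli / supermartingale argument to conclude that the contradiction holds w.p.1 (not merely in expectation). Handling the inexactness (the $\varepsilon_i^k$ terms) and the possibly nonuniform, non-independent-across-blocks structure of $\boldsymbol{\mathcal S}^k$ cleanly — which is exactly the ``new type of convergence analysis'' advertised in the introduction and developed in Appendix \ref{App:random_properties} — is where the real work lies; once those estimates are in hand, the deduction of stationarity from separability via Proposition \ref{prop_fixed-points} is immediate.
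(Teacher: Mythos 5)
Your outline is sound and, at the level of the contradiction argument, it is the same two-level band-crossing scheme the paper uses; but you handle the randomness differently. The paper never upgrades the summability to all blocks: it keeps the pathwise bound $\sum_k \gamma^k \| (\widehat{\x}(\x^k)-\x^k)_{\hat{\mathcal S}^k}\|^2 <\infty$ from Theorem \ref{Theorem_convergence_inexact_Jacobi} and injects the probability only inside the crossing argument, via the set $\mathcal{K}_{\text{mx}}$ of iterations at which the worst block is sampled: it shows that w.p.1 there is a subsequence of crossings along which $\sum_{\mathcal{K}_{\text{mx}}\ni t=k}^{i_k-1}\gamma^t \ge \beta \sum_{t=k}^{i_k-1}\gamma^t$, and on those iterations Lemma \ref{lem:sampledesc} converts the selected-block gap into the full gap. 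Your route instead takes conditional expectations of the one-step descent inequality, uses A6 plus the greedy rule and \eqref{eq:error bound} to get $\mathbb{E}\bigl[\|(\widehat{\x}(\x^k)-\x^k)_{\hat{\mathcal S}^k}\|^2 \mid \mathcal{F}_k\bigr]\ge p\,c_1^2\|\widehat{\x}(\x^k)-\x^k\|^2$, and then invokes the stochastic Robbins--Siegmund theorem to obtain $\sum_k\gamma^k\|\widehat{\x}(\x^k)-\x^k\|^2<\infty$ w.p.1, after which the crossing argument is purely deterministic. That is a legitimate and arguably cleaner alternative; just be aware that the conditioning must be done on the \emph{one-step} inequality before summing (you cannot ``transfer'' an already-realized summable series to all blocks by conditioning), and that you need the almost-supermartingale version of Lemma \ref{lemma_Robbinson_Siegmunt}, not the deterministic one stated in the paper.

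One substantive correction: your claim that separability of $G$ is used \emph{only} to invoke regularity is wrong. Separability is also what makes the best-response map $\widehat{\x}(\cdot)$ Lipschitz continuous (Lemma \ref{Prop_x_y} explicitly assumes $G=\sum_i G_i$), and that Lipschitz property is the backbone of the band-crossing step: you need $\bigl|\,\|\widehat{\x}(\x^{k+1})-\x^{k+1}\|-\|\widehat{\x}(\x^k)-\x^k\|\,\bigr|\le (1+\hat L)\|\x^{k+1}-\x^k\|$ to conclude that each crossing consumes a step-size budget bounded away from zero. Mere continuity (uniform continuity on the bounded iterate set) does not give a linear modulus and the lower bound on $\sum_{t=k}^{i_k-1}\gamma^t$ would not follow. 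This is precisely why the nonseparable case in Theorem \ref{Theorem_convergence_inexact_Jacobi} only yields a $\liminf$ statement.
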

\begin{proof} See Appendix \ref{proof_Th2}.\end{proof}
 
\noindent \textbf{On the random choice of ${\cal S}^k$.}  We discuss next some proper sampling rules $
\boldsymbol{\mathcal{S}}^k$ that can be used in   Step 3 of the algorithm to generate the random sets  $\mathcal S^k$;
for notational simplicity the iteration index $k$ will be omitted.  The sampling rule $\boldsymbol{\mathcal{S}}$ is uniquely
characterized by the probability mass function $$\mathbb{{P}}({\mathcal{S}})\triangleq\mathbb{{P}}
\left(\boldsymbol{\mathcal{S}}=\mathcal S\right),\,\,\mathcal S\subseteq \mathcal N,
 $$
 which assign probabilities to the subsets $\mathcal{S}$ of $\mathcal N$. Associated with $\boldsymbol{\mathcal{S}}$, define the
 probabilities $q_j\triangleq \mathbb{{P}}(\left |\boldsymbol{\mathcal{S}}\right|=j)$, for $j=1,\ldots ,N$. The following
 proper   sampling rules, proposed in \cite{richtarik2012parallel} for convex problems with separable $G$,
 are instances of rules satisfying A6, and are used in our computational experiments.

\noindent $-$ \emph{Uniform (U) sampling.} All blocks get selected with the same (non zero) probability:
\[\mathbb{{P}}(i\in \boldsymbol{\mathcal{S}})=\mathbb{{P}}(j\in \boldsymbol{\mathcal{S}})=\dfrac{\mathbb{E}\left[ \left|
\boldsymbol{\mathcal{S}} \right|\right]}{N},\quad \forall i\neq j\in \mathcal N.\]

 \noindent $-$ \emph{Doubly Uniform (DU) sampling}.  All sets $\mathcal{S}$ of equal cardinality are generated with equal
 probability,   i.e., $\mathbb{P}(\mathcal{S})=\mathbb{P}(\mathcal{S}^{'})$, for all  $\mathcal{S}, \mathcal{S}^{'}\subseteq
 \mathcal N$ such that $|\mathcal S|=|\mathcal S^{'}|$. The density function is then $$\mathbb{{P}}
 ({\mathcal{S}})=\dfrac{{q_{\left|\mathcal{S}\right|}}}{\left(\begin{array}{c}
n\\
\left|\mathcal{S}\right|
\end{array}\right)}.$$
\noindent $-$ \emph{Nonoverlapping Uniform (NU) sampling}. It is a uniform sampling rule assigning positive probabilities only to sets forming a partition of $\mathcal N$. Let $\mathcal{S}^1,\ldots ,\mathcal{S}^P$ be a partition of $\mathcal{N}$,  with each $\left|\mathcal{S}^{i}\right|>0$, the density function of the NU sampling is:
 $$\mathbb{{P}}(\mathcal{S})=\left\{ \begin{array}{ll}
\dfrac{{1}}{P},\quad & \mbox{if }\mathcal{S}\in\left\{ \mathcal{S}^{1},\ldots,\mathcal{S}^{P}\right\}\smallskip \\
0 & \mbox{otherwise}
\end{array}\right.$$
 which corresponds to $\mathbb{{P}}
 (i\in {\mathcal{S}})=N/P$, for all $i\in \mathcal N$.

 A special case of the DU sampling  that we found very effective in our   experiments is the so called ``nice sampling''.

\noindent $-$ \emph{Nice Sampling (NS)}. Given an integer $0\leq \tau \leq N$, a $\tau$-nice sampling is a DU sampling with $q_{\tau}=1$ (i.e., each subset of $\tau$ blocks is chosen with the same probability).

The NS allows us to control the degree of parallelism of the algorithm by tuning   the cardinality $\tau$ of the   random sets generated at each iteration, which makes this rule  particularly appealing in a multi-core environment. Indeed, one can set $\tau$  equal to the  number of available cores/processors, and assign each block coming out from the greedy selection (if implemented) to a dedicated processor/core.

As a final remark, note that the DU/NU   rules contain as special cases fully parallel and sequential updates, wherein at each iteration a \emph{single} block is updated uniformly at random, or \emph{all} blocks are updated.

\noindent $-$ \emph{Sequential sampling}: It is a DU sampling with $q_1=1$, or a NU sampling with $P=N$ and $\mathcal S^j={j}$, for $j=1,\ldots,P$.

 \noindent $-$ \emph{Fully parallel  sampling}: It is a DU sampling with $q_N=1$, or a NU sampling with $P=1$ and $\mathcal S^1=\mathcal N$.

 Other interesting uniform and nonuniform practical rules (still satisfying A6) can be found in    \cite{richtarik2012parallel, richtarik2013optimal}, to which we refer the interested reader for further details.. \smallskip

\noindent \textbf{On the choice of the step-size $\gamma^k$.}
An example of step-size rule satisfying   Theorem \ref{Theorem_convergence_inexact_Jacobi}i)-iv)  is:
given $0 < \gamma^{0} \le 1$, let\vspace{-0.1cm}
\begin{equation}\label{eq:gamma}
\gamma^{k}=\gamma^{k-1}\left(1-\theta\,\gamma^{k-1}\right),\quad k=1,\ldots,\vspace{-0.1cm}
\end{equation}
where $\theta\in(0,1)$ is a given constant. Numerical results in Section \ref{sec:Num} show  the effectiveness of   \eqref{eq:gamma} on specific problems.
We remark that it is   possible to
prove convergence of Algorithm 1 also using other step-size rules, including  a standard Armijo-like line-search procedure or a (suitably small) constant step-size. Note that differently from most of the schemes in the literature, the tuning of the step-size  does not require the knowledge of the problem parameters (e.g.,  the Lipschitz constants of $\nabla F$ and $G$).
\vspace{-0.3cm}

\section{Numerical Results}\label{sec:Num}\vspace{-0.1cm}

In this section we present some  preliminary experiments providing a solid evidence of the viability of our approach; they clearly show  that our  framework  leads to practical methods that exploit well parallelism and compare favorably to existing schemes, both deterministic and random.

Because of space limitation, we present results only for    (synthetic) LASSO problems, one of the most   studied instances of (the convex version of) Problem (\ref{eq:problem 1}), corresponding to $F(\x) = \|\A\x -\bv\|^2 $,  $G(\x)= c \|\x\|_1$, and $X=\Re^n$. Extensive experiments on more varied (nonconvex) classes of Problem (\ref{eq:problem 1}) are the subject of a separate work. 

All codes have been written in C++ and use the Message Passing Interface for parallel operations. All algebra is performed by using the Intel Math Kernel Library (MKL). The
algorithms were tested on the General Compute Cluster of the
Center for Computational Research at the SUNY Buffalo.   In particular for our experiments we used a partition composed of 372 DELL 32x2.13GHz Intel
E7-4830  Xeon Processor  nodes with
512 GB of DDR4 main memory and QDR InfiniBand 40Gb/s
network card.


\noindent \textbf{Tuning of Algorithm 1}:  The most successful class of random and deterministic methods for LASSO problem are (proximal) gradient-like schemes, based on a  linearization of  $F$. As a major departure from current schemes,  here we propose  to   better exploit the structure  of $F$ and  use  in Algorithm 1 the following best-response: given a scalar partition of the variables (i.e., $n_i=1$ for all $i$), let 
\begin{equation}\label{eq:proposal 2}
 \widehat x_i(\x^k) \triangleq \underset{x_{i}\in \mathbb{R}}{\mbox{argmin}\,} \left\{ F(x_i,  \x^k_{-i}) + \frac{\tau_i}{2}
( x_i- x_i^k)^2 + \lambda|x_i|\right\}.
\end{equation}
Note that $\widehat x_i(\x^k)$ has a closed form expression (using a soft-thresholding operator \cite{beck_teboulle_jis2009}).

The free
parameters of Algorithm 1 are chosen as follows. The proximal gains  $\tau_i$ and the step-size $\gamma$ are tuned as in \cite[Sec. VI.A]{FacSagScuTSP14}.
 The error bound function  is chosen as  $E_i(\x^k) =\|\widehat \x_i (\x^k) - \x_i^k\|$, and, for any realization    $\mathcal S^k$,  the subsets   $\hat{\mathcal S}^k$  in   S.3 of the algorithm  are chosen as
\begin{equation}\label{S^k_simulations}
\hat{\mathcal S}^k= \{ i\in \mathcal S^k: E_i(\x^k) \geq \sigma M^k\}.
\end{equation}
 We denote by $\texttt{c}_{\mathcal S^k}$ the  cardinality of $\mathcal S^k$  normalized to the overall number of variables (in our experiments, all sets $\mathcal S^k$ have the same cardinality, i.e., $\texttt{c}_{\mathcal S^k}=\texttt{c}_{\mathcal S}$, for all $k$).  We considered the following options for  $\sigma$ and $\texttt{c}_{\mathcal S}$: i) $\texttt{c}_{\mathcal S}=0.01, 0.1, 0.2, 0.5, 0.8$; ii)  $\sigma=0$, which leads to a \emph{fully parallel} pure random scheme   wherein at each iteration \emph{all} variables in $\hat{\mathcal S}^k$ are updated; and iii) different positive values of $\sigma$ ranging from $0.01$ to $0.5$, which corresponds to updating in a greedy manner only a subset of  the variables in $\hat{\mathcal S}^k$ (the smaller the $\sigma$ the larger the number of potential variables to be updated at each iteration). 
  We termed     Algorithm 1 with $\sigma=0$  ``Random FLEXible parallel Algorithm''  (RFLEXA), whereas the other instances with  $\sigma>0$  as ``Hybrid FLEXA'' (HyFLEXA). 

\noindent \textbf{Algorithms in the literature}: We compared our versions of (Hy)FLEXA
with the most representative  \emph{parallel}   random and deterministic  algorithms proposed in the literature to solve the \emph{convex} instance of Problem (1) (and thus also LASSO). More specifically, we consider the following schemes.

\noindent $\bullet$ {\bf  PCDM $\&$ PCDM2}: These are  (proximal) gradient-like parallel randomized BCD methods proposed in \cite{richtarik2012parallel} for convex optimization problems. 
Since the authors recommend to use PCDM instead of PCDM2 for LASSO problems, we do so (indeed, our experiments show that PCDM outperforms PCDM2). We simulated PCDM under different sampling rules and   we set the  parameters $\beta$ and $\omega$ as in \cite[Table 4]{richtarik2012parallel}, which guarantees convergence of the algorithm in \emph{expected value}.

\noindent $\bullet$ {\bf Hydra $\&$ Hydra$^2$}: Hydra is a parallel and distributed random gradient-like CDM,  proposed in \cite{takavc2013Hydra}, wherein different cores in parallel update a randomly chosen subset of variables from those they own; a closed form solution of the scalar updates is available.    Hydra$^2$   \cite{fercoq2014fast} is the accelerated version of Hydra; indeed,  in all our experiments, it outperformed Hydra; therefore, we will report the results only for  Hydra$^2$. The free parameter $\beta$ is set to   $\beta=2\beta_1^\ast$ (cf. Eq. (15) in \cite{takavc2013Hydra}), with $\sigma$ given by   Eq. (12) in \cite{takavc2013Hydra} (according to the authors, this seems one of the best choices for $\beta$).

\noindent $\bullet$ {\bf FLEXA}: This is the parallel deterministic scheme we proposed in \cite{FacSagScuTSP14,FaccSagScu_ICASSP14}. We use   FLEXA as a benchmark of deterministic algorithms, since it  has been shown in \cite{FacSagScuTSP14,FaccSagScu_ICASSP14} that it    outperforms current (parallel) first-order (accelerated) gradient-like schemes, including FISTA \cite{beck_teboulle_jis2009}, SparRSA \cite{wright2009sparse}, GRock \cite{peng2013parallel}, parallel BCD \cite{tseng2009coordinate}, and parallel ADMM.  The free parameters of FLEXA,  $\tau_i$ and   $\gamma$, are tuned as in \cite[Sec. VI.A]{FacSagScuTSP14}, whereas the set $\mathcal{S}^k$ is chosen as in \eqref{S^k_simulations}.

\noindent $\bullet$ {\bf Other algorithms}: We tested also other random algorithms, including  \emph{sequential} random BCD-like methods  and Shotgun \cite{bradley2011parallel}. However,  since they were not competitive, to not overcrowd the figures,    we do not report results for  these algorithms.

In all the experiments,  the data matrix $\A = [ \A_1 \,\cdots\,\A_P]$ of the LASSO problem is stored in a column-block manner, uniformly across the $P$ parallel processes.
Thus the computation of each product $\A\x$ (required to evaluate $\nabla F$) and the norm $\|\x\|_1$ (that is $G$) is divided into
the parallel jobs of computing $\A_i \x_i$ and $\|\x_i\|_1$, followed by a reduce operation.
Also, for all the algorithms, the initial point  was set to the zero vector.

\noindent \textbf{Numerical Tests}: We generated synthetic  LASSO problems  using the random generation technique proposed by Nesterov  \cite{nesterov2012gradient}, which we properly modified    following \cite{richtarik2012parallel}  to generate instances of the problem with  different levels of sparsity of the solution as well as density of the  data matrix $\A\in \mathbb{R}^{m\times n}$; we introduce  the following two control parameters: $\texttt{s}_{\A}=$ average $\%$ of nonzeros in each column of $\A$ (out of $m$); and   $\texttt{s}_{\text{sol}}=$ $\%$ of nonzeros in the solution (out of $n$). We tested the algorithms on two groups of LASSO problems,   $\A\in \mathbb{R}^{10^4\times 10^5}$  and  $\A\in \mathbb{R}^{10^5\times 10^6}$, and several degrees of density of $\A$ and sparsity of the solution, namely $\texttt{s}_{\text{sol}}=0.1\%, 1\%, 5\%, 15\%, 30\%$, and $\texttt{s}_{\A}=10\%, 30\%, 50\%, 70\%, 90\%$. Because of the space limitation, we report  next only the most representative results; we refer to \cite{Amir_Report14} for more details and experiments.  Results  for the LASSO instance with  100,000 variables are reported in Fig. \ref{fig1} and \ref{fig2}. Fig. \ref{fig1}  shows the behavior of HyFLEXA as a function of the design parameters $\sigma$ and $\texttt{c}_{\mathcal S}$, for different values of the solution sparsity ($\texttt{s}_{\text{sol}})$, whereas in Fig. \ref{fig2} we compare the proposed RFLEXA and HyFLEXA with FLEXA, PCDM, and  Hydra$^2$, for different values of $\texttt{s}_{\text{sol}}$ and $\texttt{s}_{\A}$ (ranging from ``low'' dense  matrices  and ``high'' sparse solutions to ``high'' dense  matrices  and ``low'' sparse solutions).  Finally, in Fig. \ref{fig3} we consider   larger problems with $1$M   variables.  In all the figures,  we plot the relative error $\texttt{re}(\x) \triangleq  (V(\x)-V^*)/V^*$ versus the CPU time, where
 $V^*$ is the optimal value of the objective function $V$ (in our experiments $V^*$ is known). All  the curves are  averaged over ten independent random realizations. Note that  the CPU time includes communication times  and the initial time needed by the methods to perform all pre-iterations computations
(this explains why the  curves associated with   Hydra$^2$ start  after the others; in fact  Hydra$^2$ requires some  nontrivial computations to estimates $\beta$). Given Fig. \ref{fig1}-\ref{fig3}, the following comments are in order. \vspace{-0.2cm}
 
\begin{figure}[h]
\centering
        \begin{subfigure}[ ]{0.28\textwidth}
\hspace*{-0.6cm}
                \includegraphics[width=\textwidth]{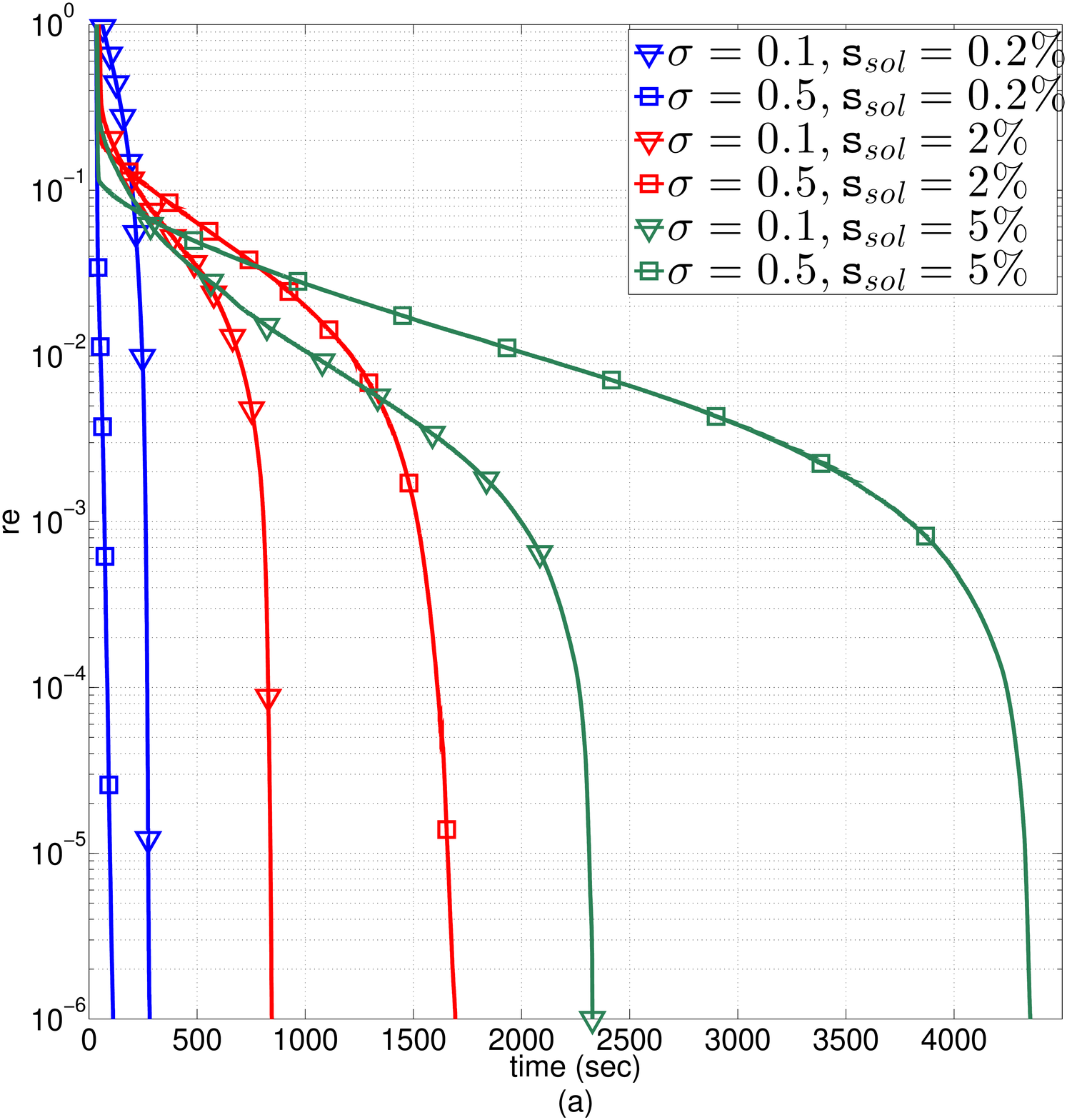}
        \end{subfigure}
        \begin{subfigure}[ ]{0.28\textwidth}
\hspace{-0.8cm}
               \hspace{-0.3cm}  \includegraphics[width=\textwidth]{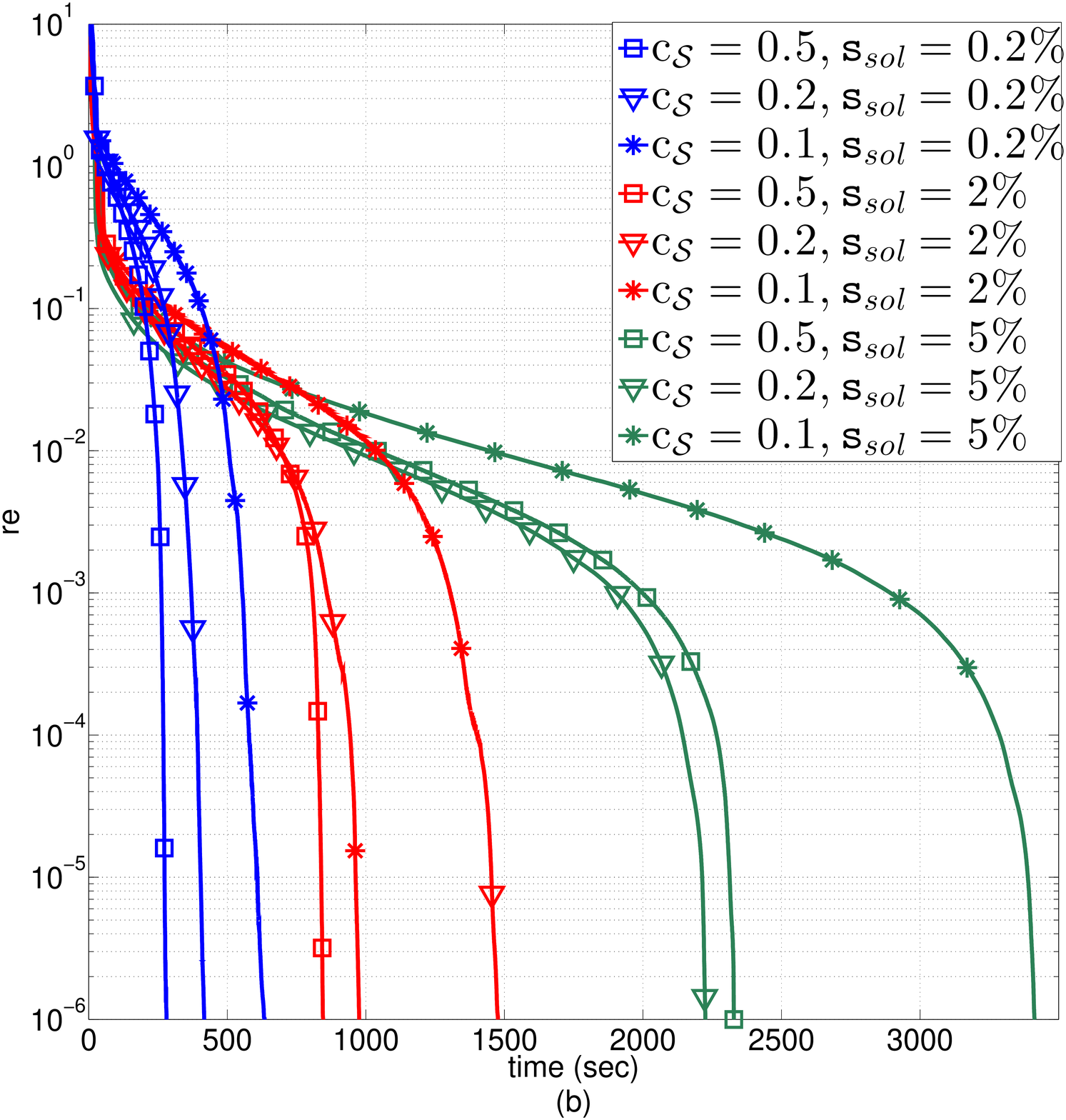}
        \end{subfigure}\vspace{-0.3cm}
        \caption{HyFLEXA for different values of $\texttt{c}_{\mathcal S}$ and $\sigma$:  Relative error vs. time; $\texttt{s}_{\text{sol}}=0.2\%, 2\%, 5\%$, $\texttt{s}_{\A}=70\%$, 100.000 variables, NU sampling, 8 cores; (a) $\texttt{c}_{\mathcal S}=0.5$, and $\sigma =0.1, 0.5$ - (b) $\sigma =0.5$, and $\texttt{c}_{\mathcal S}=0.1, 0.2, 0.5$.\label{fig1}}\vspace{-0.2cm}
        \end{figure}
        
\begin{figure}[]
\begin{subfigure}[]{\textwidth}
         { \vspace{-1.5cm} \hspace{0.8cm}   \includegraphics[width=0.35\textwidth]{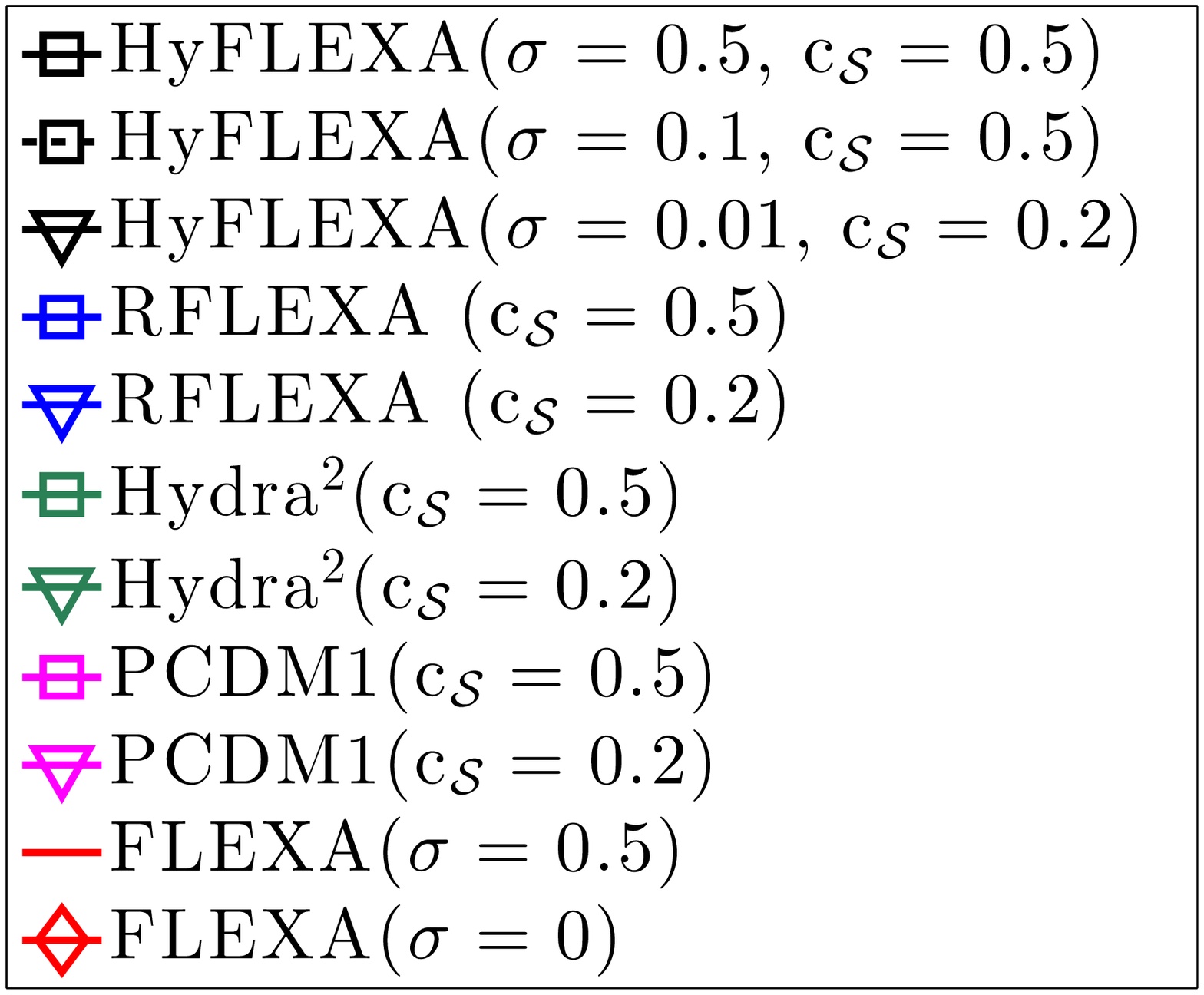}\vspace{-0.8cm}}\end{subfigure}  
      \begin{subfigure}[]{0.27\textwidth}
            \hspace{-0.4cm}   \includegraphics[width=\textwidth]{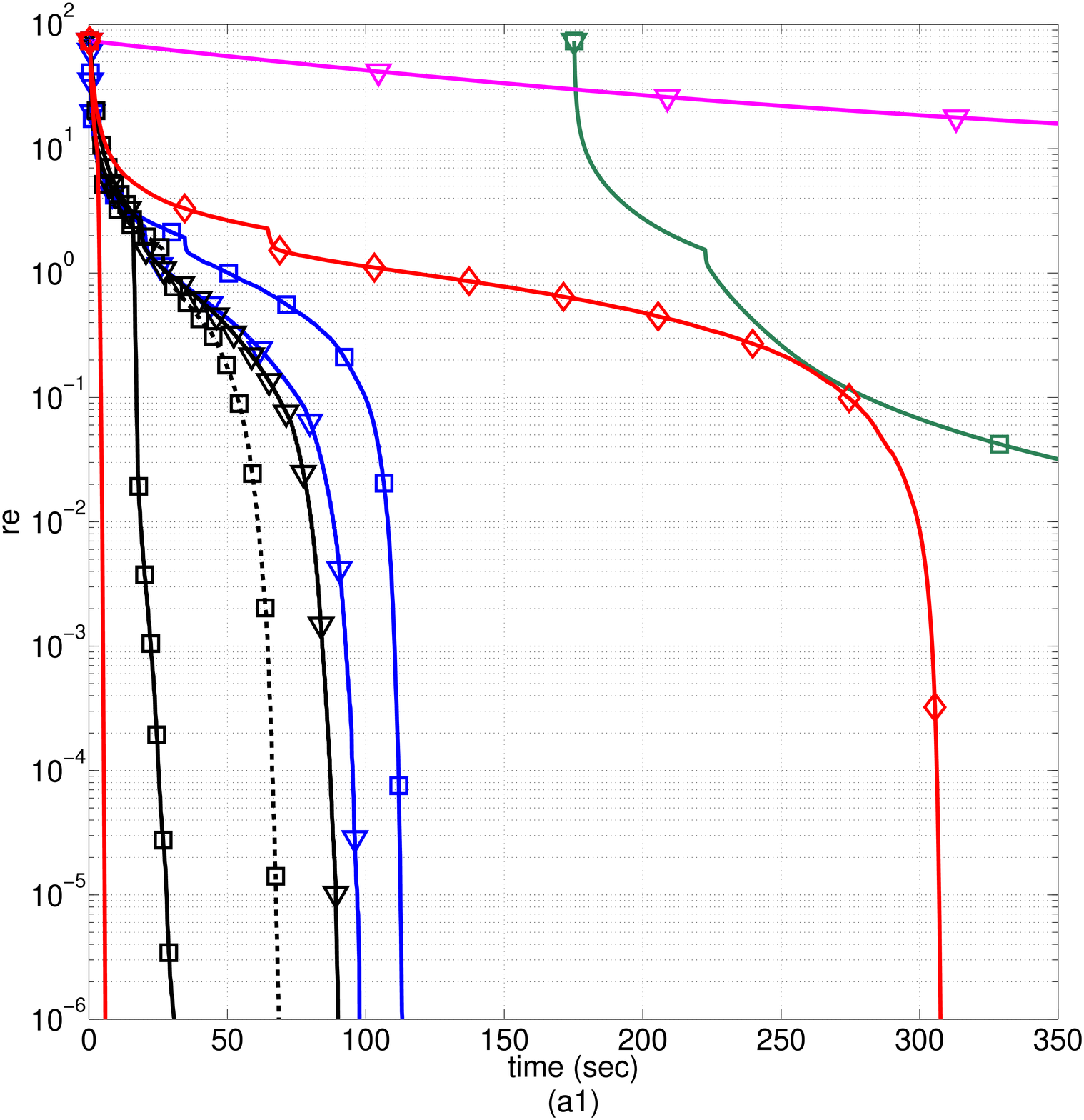}
      \end{subfigure}
      \begin{subfigure}[]{0.27\textwidth}
               \hspace{-0.8cm} \includegraphics[width=\textwidth]{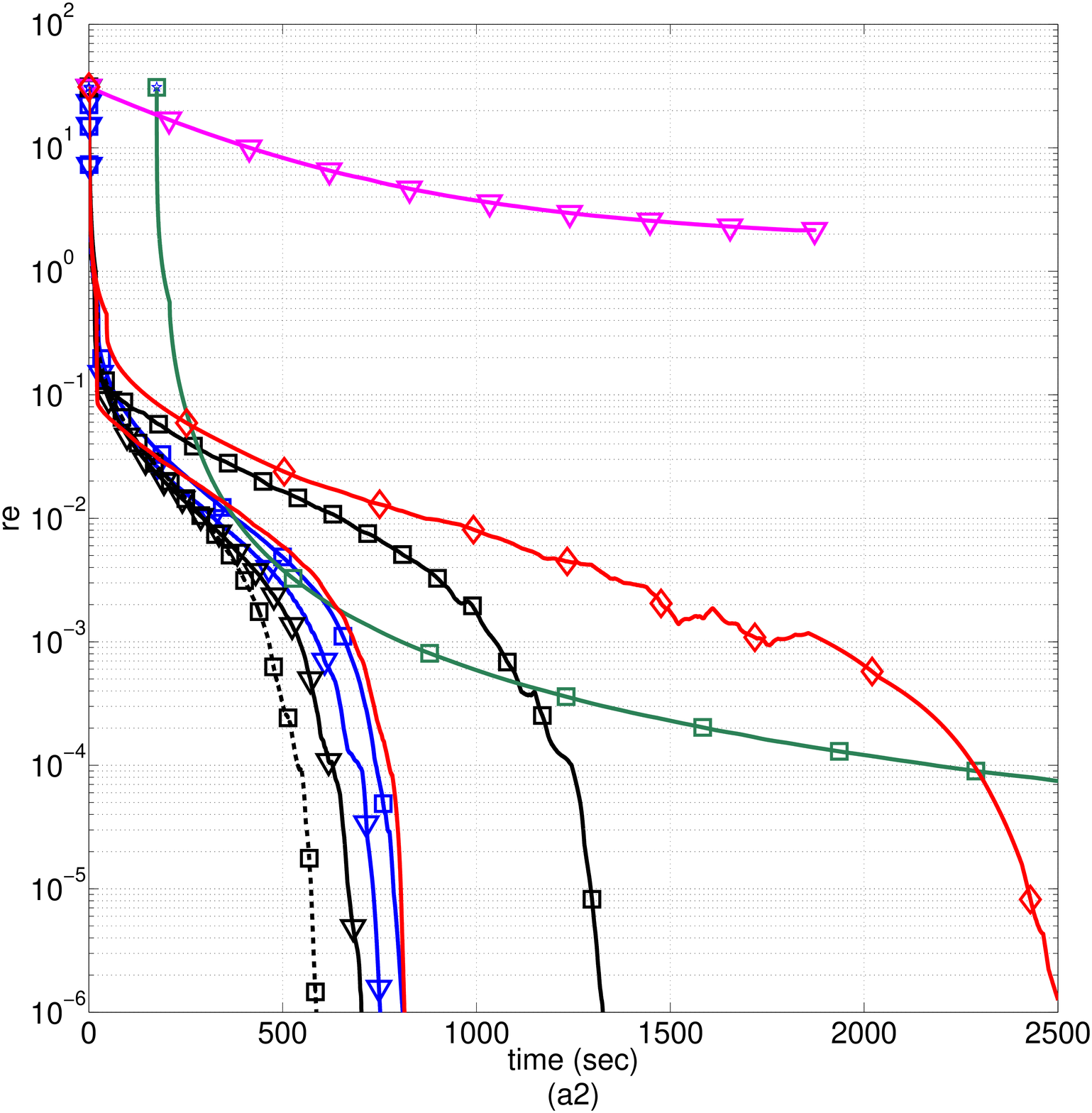}
      \end{subfigure}  
      \begin{subfigure}[]{0.27\textwidth}
            \hspace{-0.4cm}   \includegraphics[width=\textwidth]{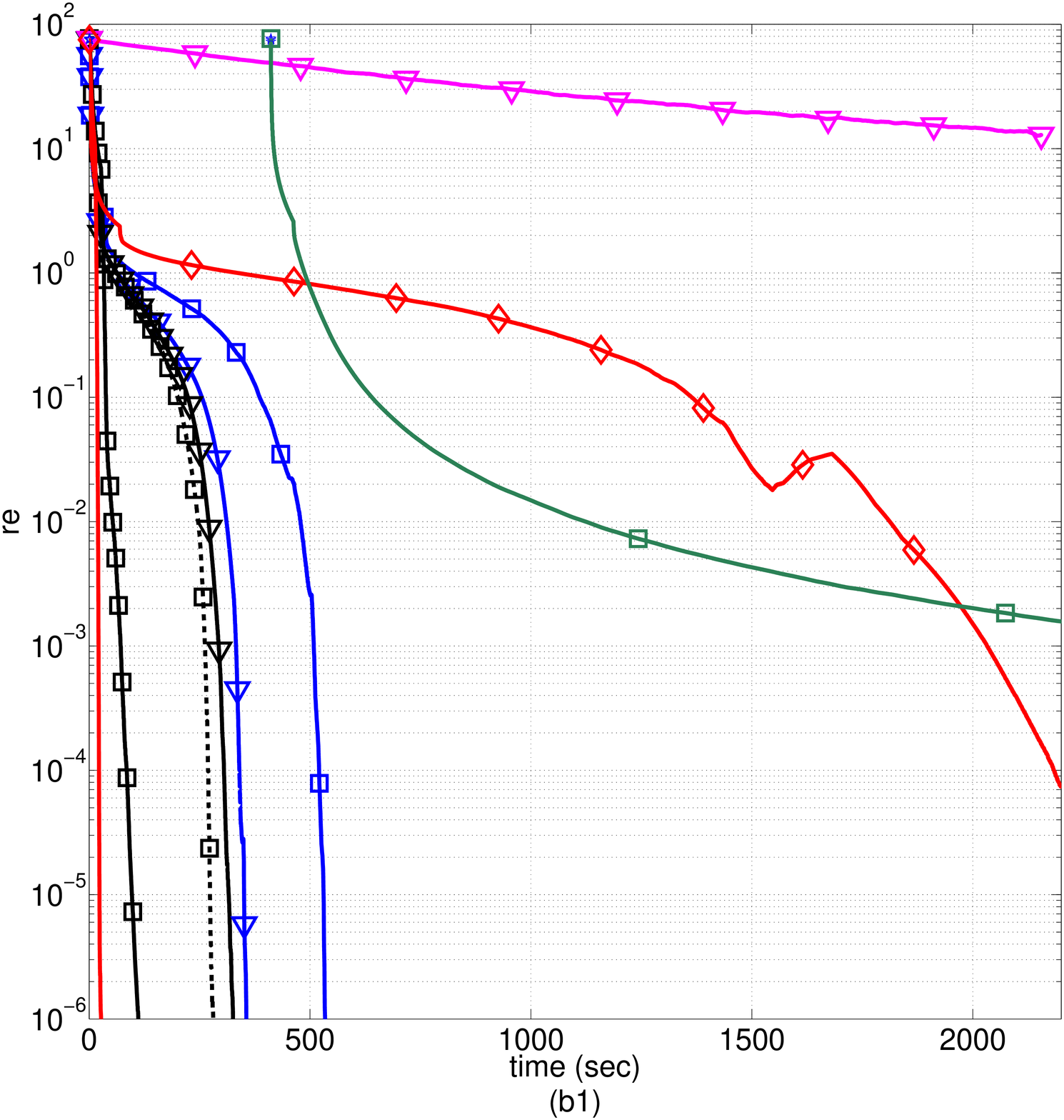}
      \end{subfigure}
      \begin{subfigure}[]{0.27\textwidth}
               \hspace{-0.8cm} \includegraphics[width=\textwidth]{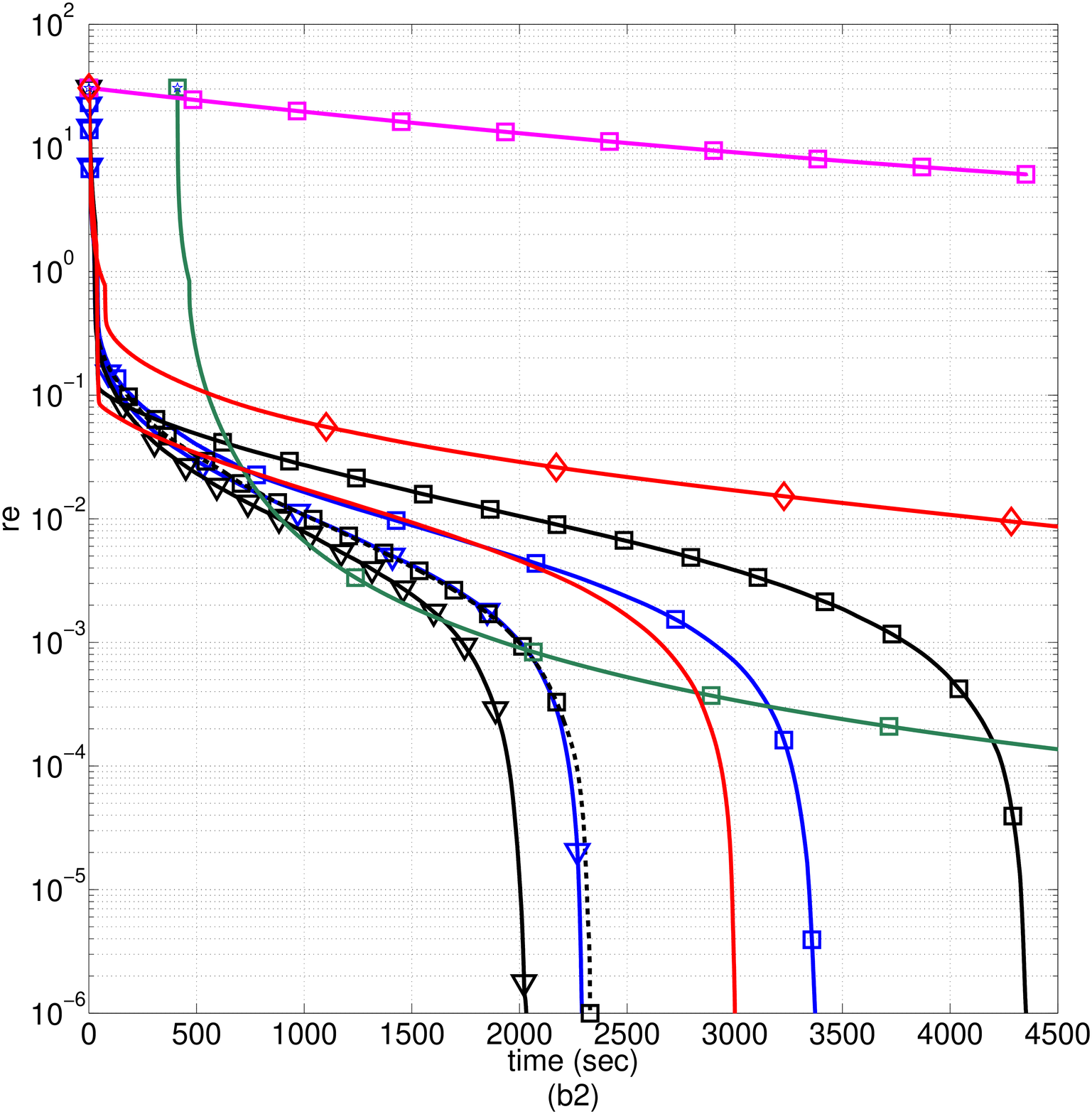}
      \end{subfigure}  
      \begin{subfigure}[]{0.27\textwidth}
            \hspace{-0.4cm}    \includegraphics[width=\textwidth]{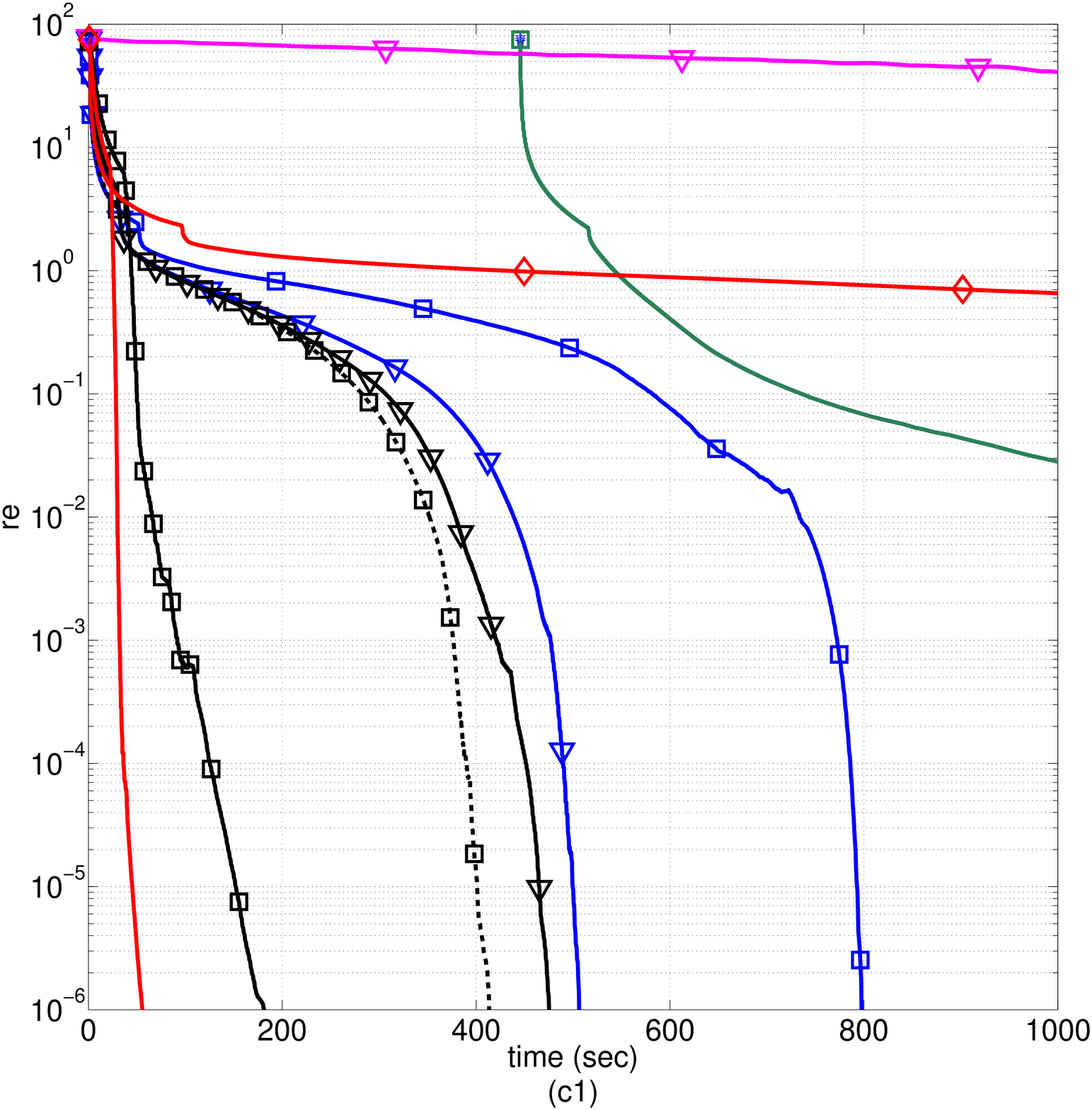}
      \end{subfigure}
      \begin{subfigure}[]{0.27\textwidth}
               \hspace{-0.8cm} \includegraphics[width=\textwidth]{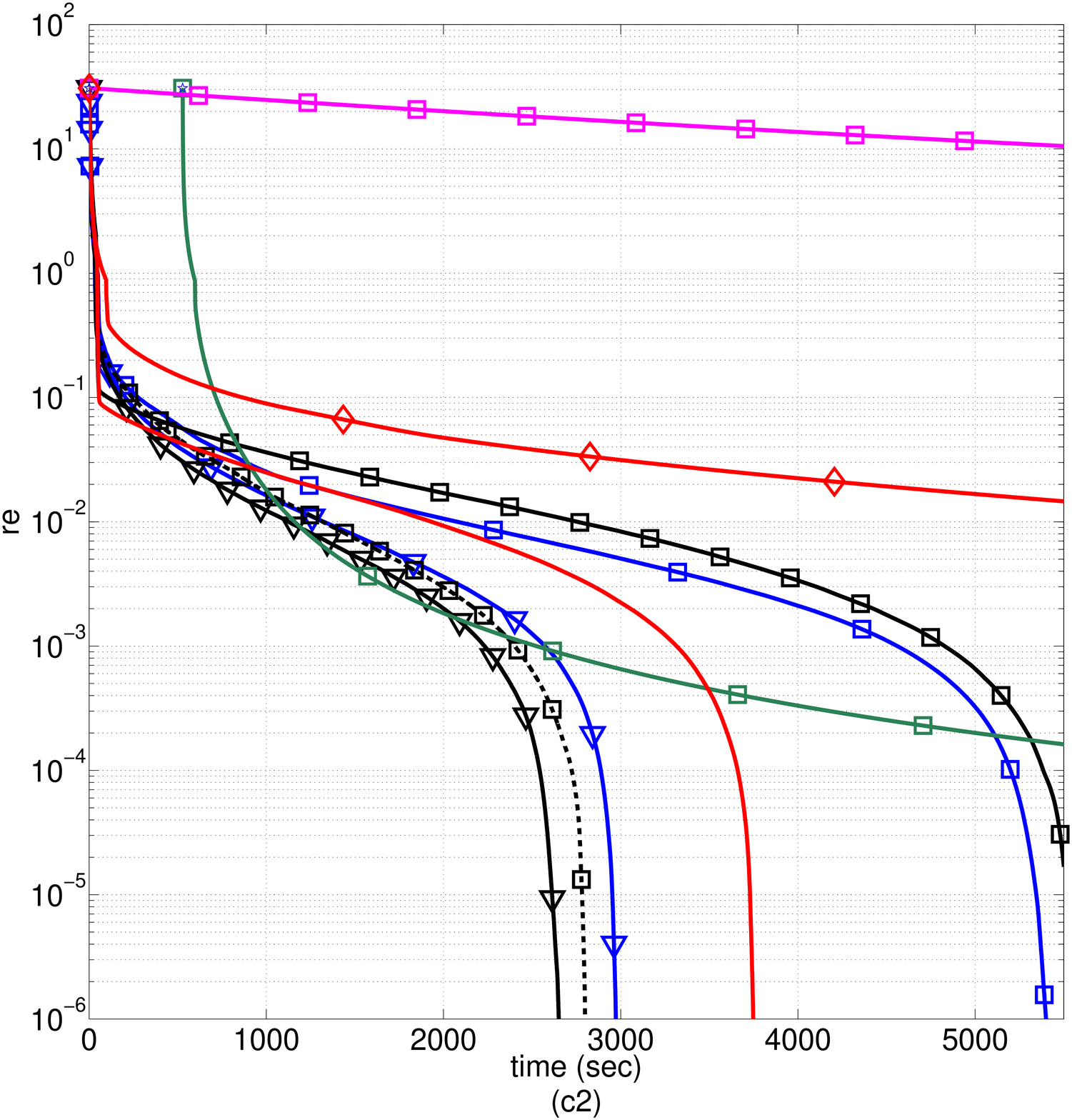}
      \end{subfigure}\vspace{-0.3cm}
\caption{LASSO with 100.000 variables, 8 cores; Relative error vs. time for:
(a1) $\texttt{s}_{\A}=30\%$ and $\texttt{s}_{\text{sol}}=0.2\%$ - (a2)  $\texttt{s}_{\A}=30\%$ and $\texttt{s}_{\text{sol}}=5\%$ - (b1)  $\texttt{s}_{\A}=70\%$ and $\texttt{s}_{\text{sol}}=0.2\%$ - (b2)  $\texttt{s}_{\A}=70\%$ and $\texttt{s}_{\text{sol}}=5\%$ - (c1)  $\texttt{s}_{\A}=90\%$ and $\texttt{s}_{\text{sol}}=0.2\%$ - (c2)  $\texttt{s}_{\A}=90\%$ and $\texttt{s}_{\text{sol}}=5\%$. \label{fig2}}\vspace{-0.5cm}
\end{figure}

\begin{figure}[t]
 \vspace{-0.3cm}
\centering
        \begin{subfigure}[ ]{0.28\textwidth}
\hspace*{-0.6cm}
                \includegraphics[width=\textwidth]{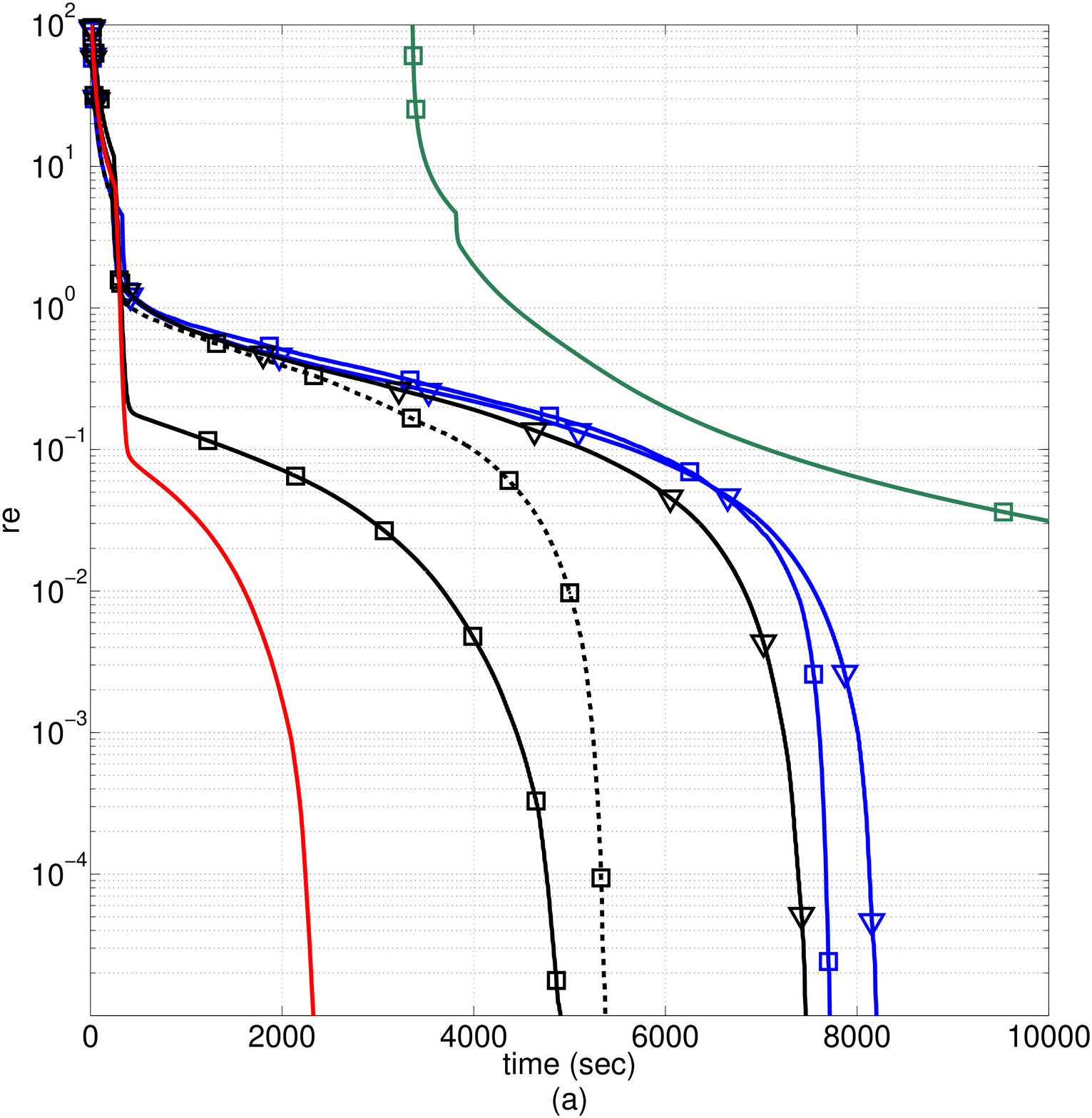}
        \end{subfigure}
        \begin{subfigure}[ ]{0.28\textwidth}
\hspace{-0.8cm}
               \hspace{-0.3cm}  \includegraphics[width=\textwidth]{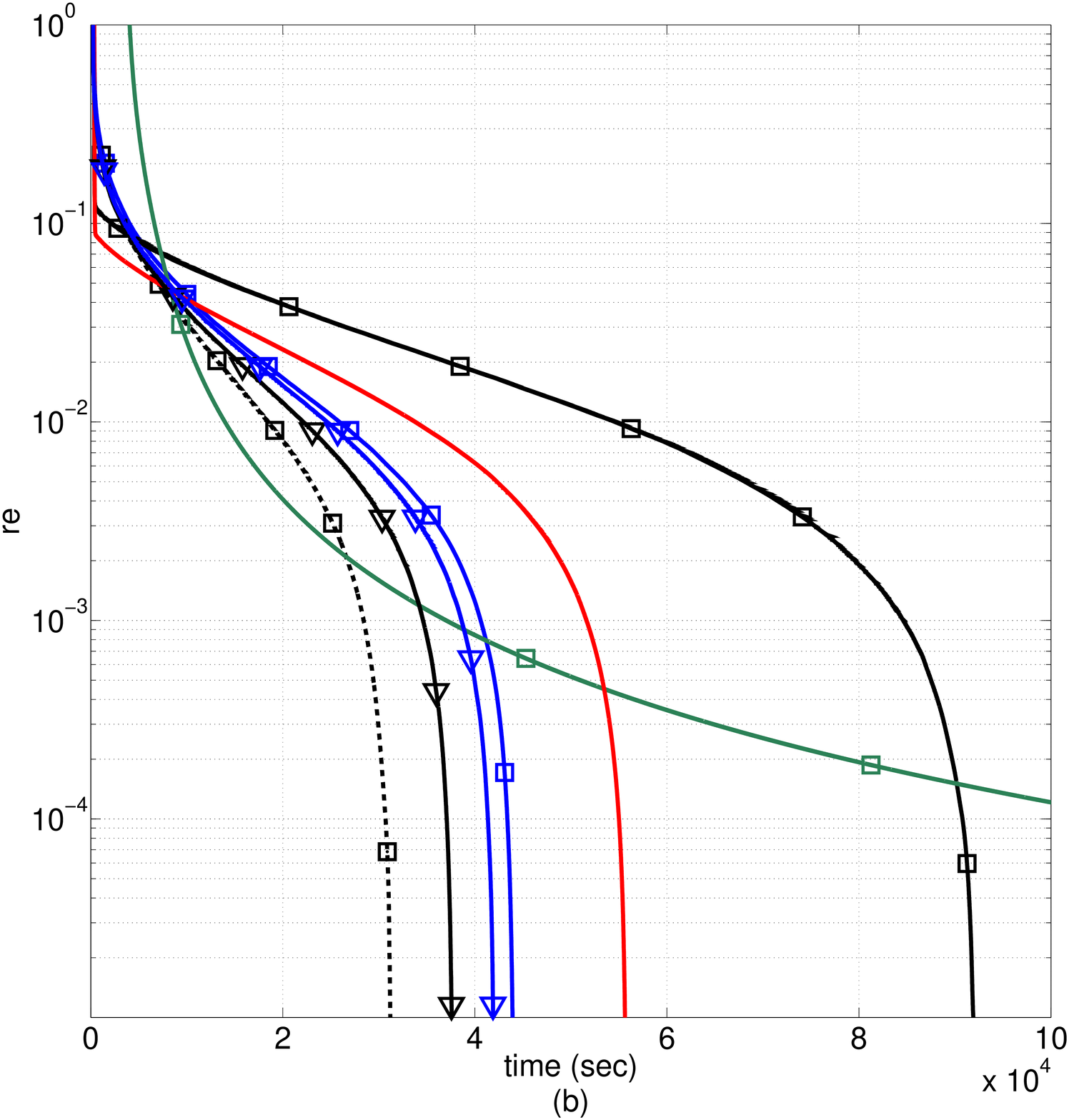}
        \end{subfigure}\vspace{-0.2cm}
        \caption{LASSO with  1M variables, $\texttt{s}_{\A}=10\%$, 16 cores; Relative error vs. time for: (a)$\texttt{s}_{\text{sol}}=1\%$ - (b)  $\texttt{s}_{\text{sol}}=5\%$. The legend is as in Fig. 2.
        \label{fig3}}\vspace{-0.5cm}
        \end{figure}

\noindent \emph{HyFLEXA: On the choice of $(\texttt{c}_{\cal S},\sigma)$, and the sampling strategy.} All the  experiments (including those that we cannot report here because of lack of space) show the following trend in the behavior of HyFLEXA as a function of  $(\texttt{c}_{\cal S}, \sigma)$. For ``low'' density problems (``low'' $\texttt{s}_{\text{sol}}$ and   $\texttt{s}_{\A}$), ``large'' pairs  $(\texttt{c}_{\cal S},\sigma)$ are  preferable, which   corresponds to updating at each iteration    only \emph{some}  variables by performing a  (heavy) greedy search over  a \emph{sizable} amount of variables. This is in agreement with \cite{FacSagScuTSP14} (cf. Remark 5): by the greedy selection, Algorithm 1 is able to identify those variables that will be zero at the a solution; therefore updating only variables that we have ``strong'' reason to believe will not be zero at a solution is  a better strategy than updating them all, especially if the solutions are very sparse.  Note that this behavior can be obtained    using  either   ``large''   or   ``small'' $(\texttt{c}_{\cal S},\sigma)$. However, in the case of ``low'' dense problems,  the former strategy outperforms the latter. We observed that this is mainly due to the fact that when $\texttt{s}_{\A}$ is ``small'',  estimating  $\hat{x}_i$ (computing the products $\A^T\A$) is computationally affordable, and thus performing a  greedy search over more variables 
enhances the practical  convergence. 
 When the sparsity of the solution decreases and/or the density of $\A$ increases (``large'' $\texttt{s}_{\A}$ and/or $\texttt{s}_{\text{sol}}$),  one can see from the figures that 
  ``smaller'' values of $(\texttt{c}_{\cal S}, \sigma)$ are more effective than  larger ones, which corresponds to  using  a ``less  aggressive'' greedy selection  while searching over a smaller pool of variables. In fact, when $\A$ is  dense, computing all $\hat{x}_i$ might be prohibitive and thus nullify the potential benefits of a greedy procedure. 
 For instance, it follows from Fig. 1-3 that, as the density of the  solution ($\texttt{s}_{\text{sol}}$) increases   the preferable  choice for $(\texttt{c}_{\cal S}, \sigma)$ progressively moves from $(0.5, 0.5)$ to $(0.2, 0.01)$, with both $\texttt{c}_{\cal S}$ and  $\sigma$ decreasing. Interesting, a tuning that works quite well in practice for all the classes of problems we simulated (different densities of $\A$, solution sparsity, number of cores, etc.)  is $(\texttt{c}_{\cal S}, \sigma)=(0.5, 0.1)$, which seems to strike a good balance between not updating variables that are probably zero at the optimum and nevertheless update a sizable amount of variables when needed in order to enhance convergence.. 

As a final remark, we report that, according to our experiments, the most effective sampling rule    among  U, DU, NU, and NS is  the NU (which is actually the one the figures refers to); NS becomes competitive only when the solutions are very sparse, see \cite{Amir_Report14} for a detailed comparison of the different rules.

\noindent \emph{Comparison of the algorithms}. 
For low dense matrices $\A$  and very sparse solutions, FLEXA $\sigma=0.5$ is faster than   its random counterparts (RFLEXA and HyFLEXA) as well as its fully parallel version,  FLEXA $\sigma=0$ [see Fig 2 a1), b1) c1) and Fig. 3a)]. Nevertheless, HyFLEXA [with $(\texttt{c}_{\cal S}, \sigma)=(0.5,0.5)]$ remains close. As already pointed out, this is mainly due to   the fact that in these scenarios i) estimating \emph{all} $\hat{x}_i$ is computationally cheap (and thus performing a greedy selection over a sizable set of variable is beneficial, see Fig. 1); and ii) updating only some variables at each iteration is more effective than updating all (FLEXA $\sigma=0.5$  outperforms FLEXA $\sigma=0$).  However, as the density of $\A$ and/or the size of the problem increase, computing all the products $[\A^T\A]_{ii}$ (required to estimate $\hat{x}_i$) becomes too costly; this is when  a random selection of the variables becomes beneficial: indeed, RFLEXA and HyFLEXA consistently outperform FLEXA [see Fig 2 a2), b2) c2) and Fig. 3b)]. Among the random algorithms, Hydra$^2$ is capable to approach relatively fast low accuracy, especially when the solution is not too sparse, but has difficulties in reaching high accuracy. RFLEXA and HyFLEXA are always much  faster than  current state-of-the-art schemes (PCDM and  Hydra$^2$), especially if high accuracy of the solutions is required. Between RFLEXA and HyFLEXA (with the same $\texttt{c}_{\cal S}$), the latter  consistently outperforms the former (about up to five time faster),  with a gap that is more significant when  solutions are sparse. This provides a solid evidence of the effectiveness of the proposed hybrid random/greedy selection method. 

In conclusion, our experiments  indicate  that the proposed framework   leads to very efficient and practical solution  methods for large  and very large-scale (LASSO)    
 problems, with the flexibility to adapt to many different problem characteristics.\vspace{-0.2cm}

\section{Conclusions}\label{sec:conclusions} 
We  proposed a highly parallelizable  hybrid random/deterministic decomposition algorithm for  the minimization of the sum of a possibly noncovex  differentiable function $F$ and a possibily nonsmooth nonseparable  convex function $G$. 
The proposed  framework is the first scheme enjoying all the following features: i) it allows for pure greedy, pure random, or  mixed random/greedy updates of the variables, all converging under the same unified set of convergence conditions; ii) it can tackle via parallel updates also nonseparable convex functions $G$; iii) it can deal with nonconvex nonseparable $F$; iv) it is parallel; v) it can incorporate both first-order or higher-order information; and vi) it can use inexact solutions.  Our preliminary experiments on LASSO problems  showed the superiority of the proposed scheme with respect to state-of-the-art random and deterministic algorithms. 
Experiments on more varied classes of problems   are the subject of our current research.\vspace{-0.2cm}

\section{ Acknowledgments}\vspace{-0.1cm}
The authors are   very  grateful to Prof. Peter Richt\`{a}rik  for his invaluable comments; we also thank  Dr. Martin Tak\'{a}\v{c} and  Prof. Peter Richt\`{a}rik   for providing the C++ code of PCDM and Hydra$^2$ (that we modified in order to use the  MPI library).

The work of Daneshmand and Scutari was supported by the USA NSF Grants CMS 1218717 and  CAREER Award No. 1254739.
The work of Facchinei was supported by the MIUR project PLATINO (Grant Agreement n. PON01\_01007).
The work of Kungurtsev was supported by the European Social Fund
under the Grant  CZ.1.07/2.3.00/30.0034.\vspace{-0.1cm}

\appendix\vspace{-0.1cm}
\section{Appendix: Proof of Theorem \ref{Theorem_convergence_inexact_Jacobi} and 3}
We first introduce some preliminary results instrumental to prove both Theorem \ref{Theorem_convergence_inexact_Jacobi} and Theorem 3.  Given $\hat{\mathcal S}^k\subseteq \mathcal N$ and $\x\triangleq (\x_i)_{i\in \mathcal N}$,  for   notational simplicity, we will  denote by $(\mathbf x)_{\hat{\mathcal S}^k}^k$ (or interchangeably $\mathbf x_{\hat{\mathcal{S}}^k}^k$) the vector whose component  $i$ is equal to $\x_i$ if $i\in \hat{\mathcal{S}}^k$, and zero otherwise. With a slight abuse of notation we will also use $(\x_i,\y_{-i})$ to denote the ordered tuple $(\y_1,\ldots, \y_{i-1},\x_i,\y_{i+1},\ldots, \y_N)$; similarly $(\x_i,\x_j,\y_{-(i,j)})$, with $i<j$ stands for $(\y_1,\ldots, \y_{i-1},\x_i,\y_{i+1},\ldots, \y_{j-1},\x_j,\y_{j+1},\ldots, \y_N)$.\vspace{-0.1cm}

\subsection{On the random sampling and its properties}\label{App:random_properties}
We introduce some properties associated with the random sampling rules $\boldsymbol{\mathcal{S}}^k$ satisfying assumption A6. A key role in our proofs is played by the following  random set: let  $\{\x^k\}$ be the sequence generated by Algorithm 1, and\vspace{-0.1cm}
\begin{equation}\label{i_max_k}
i^k_{\text{mx}}  =\underset{i\in\{1,...,N\}}{\mbox{argmax}\,} ||\widehat{\x}_i(\x^k)-\x^k_i||,
\end{equation}
define the set   $\mathcal{K}_{\text{mx}}$ as\vspace{-0.1cm}
\begin{equation}\label{K_max}
\mathcal{K}_\text{mx}\triangleq \left\{k\in \mathbb{N}_+\,:\, i^k_\text{mx}\in \boldsymbol{\mathcal{S}}^k\right\}.
\end{equation}

The key properties of this set are summarized in the following two lemmata.

\begin{lemma}[Infinite cardinality] Given the set  $\mathcal{K}_{\text{mx}}$ as in \eqref{K_max}, it holds that\vspace{-0.2cm} \[ \mathbb{P} \left( \left |\mathcal{K}_{\text{\emph{mx}}}\right|=\infty  \right)=1, \] where $\left|\mathcal{K}_{\text{\emph{mx}}}\right|$ denotes the cardinality of $\mathcal{K}_\text{\emph{mx}}$.\vspace{-0.1cm}
\end{lemma}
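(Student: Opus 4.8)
The plan is to show that the event $\{|\mathcal{K}_{\text{mx}}| < \infty\}$ has probability zero by a Borel--Cantelli--type argument, exploiting assumption A6 together with the independence of the mappings $\boldsymbol{\mathcal{S}}^k$. The subtlety is that the index $i^k_{\text{mx}}$ is itself a random object, measurable with respect to the history $\{\boldsymbol{\mathcal{S}}^0,\ldots,\boldsymbol{\mathcal{S}}^{k-1}\}$ (since $\x^k$ is determined by those earlier selections), while $\boldsymbol{\mathcal{S}}^k$ is independent of that history. So the right tool is a conditional Borel--Cantelli / Levy extension: conditioned on the past, the probability that $i^k_{\text{mx}}\in\boldsymbol{\mathcal{S}}^k$ is at least $p>0$ regardless of which block $i^k_{\text{mx}}$ happens to be.

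First I would fix the filtration $\mathcal{F}^k \triangleq \sigma(\boldsymbol{\mathcal{S}}^0,\ldots,\boldsymbol{\mathcal{S}}^{k-1})$ and observe that $i^k_{\text{mx}}$, and hence the event $A_k \triangleq \{i^k_{\text{mx}}\in\boldsymbol{\mathcal{S}}^k\} = \{k\in\mathcal{K}_{\text{mx}}\}$, satisfies $\mathbb{P}(A_k\mid\mathcal{F}^k) \ge p$ almost surely. This holds because, on the event $\{i^k_{\text{mx}}=j\}\in\mathcal{F}^k$, one has $\mathbb{P}(A_k\mid\mathcal{F}^k) = \mathbb{P}(j\in\boldsymbol{\mathcal{S}}^k\mid\mathcal{F}^k) = \mathbb{P}(j\in\boldsymbol{\mathcal{S}}^k)\ge p$ by independence (A6) and the uniform lower bound in A6; summing over the finitely many values $j\in\mathcal N$ gives the claim. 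Then $\sum_k \mathbb{P}(A_k\mid\mathcal{F}^k) = +\infty$ almost surely, and by the second (conditional) Borel--Cantelli lemma (Levy's extension of Borel--Cantelli, valid for the adapted sequence $\mathbb{1}_{A_k}$) it follows that $A_k$ occurs infinitely often almost surely, i.e. $|\mathcal{K}_{\text{mx}}| = \infty$ with probability one.

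An alternative, more self-contained route avoids quoting the conditional Borel--Cantelli lemma: for any finite $K$ and any $m$, estimate $\mathbb{P}(\text{none of }A_K,\ldots,A_{K+m}\text{ occurs})$ by conditioning successively, peeling off one factor at a time. Concretely, $\mathbb{P}\big(\bigcap_{\ell=K}^{K+m} A_\ell^c\big) = \mathbb{E}\big[\mathbb{1}_{\bigcap_{\ell=K}^{K+m-1}A_\ell^c}\,\mathbb{P}(A_{K+m}^c\mid\mathcal{F}^{K+m})\big] \le (1-p)\,\mathbb{P}\big(\bigcap_{\ell=K}^{K+m-1}A_\ell^c\big)$, and iterating yields the bound $(1-p)^{m+1}$, which tends to $0$ as $m\to\infty$. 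Hence $\mathbb{P}(\text{only finitely many }A_\ell\text{ occur}) = \mathbb{P}\big(\bigcup_K\bigcap_{\ell\ge K}A_\ell^c\big) = 0$, which is exactly $\mathbb{P}(|\mathcal{K}_{\text{mx}}| = \infty) = 1$.

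The main obstacle is purely a measurability/conditioning point rather than a computational one: one must be careful that $i^k_{\text{mx}}$ (as defined via the $\argmax$ in \eqref{i_max_k}, with ties broken by, say, smallest index) is indeed $\mathcal{F}^k$-measurable, so that the conditioning step $\mathbb{P}(A_k\mid\mathcal{F}^k)\ge p$ is legitimate; this in turn rests on the fact that $\x^k$ depends only on $\boldsymbol{\mathcal{S}}^0,\ldots,\boldsymbol{\mathcal{S}}^{k-1}$ (the step-sizes $\gamma^k$ and tolerances $\varepsilon_i^k$ being deterministic, and the choice of $\widehat{\mathcal S}^k$ in S.3 being a deterministic function of the current data). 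Once this is set up cleanly, the probabilistic estimate is the short telescoping computation above.
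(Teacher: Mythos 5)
Your proposal is correct and follows essentially the same route as the paper: the paper also bounds $\mathbb{P}\big(\{i^k_{\text{mx}}\notin\boldsymbol{\mathcal{S}}^k\}_{k\ge\bar k}\big)$ by writing it as a product of successive conditional probabilities, each at most $1-p$ by A6, and letting the product collapse to zero. Your version is in fact a slightly more careful rendering of that argument, since you make explicit the point the paper glosses over as ``independence of the events'': the events $\{i^k_{\text{mx}}\in\boldsymbol{\mathcal{S}}^k\}$ are not themselves independent (the argmax index depends on the history), and the correct statement is the conditional bound $\mathbb{P}(i^k_{\text{mx}}\in\boldsymbol{\mathcal{S}}^k\mid\mathcal{F}^k)\ge p$, which you justify by decomposing over the finitely many values of $i^k_{\text{mx}}$.
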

\begin{proof}
Suppose that the statement of the lemma is not true.
Then, with positive probability, there must exist some $\bar{k}$ such that for $k\ge \bar{k}$,
$i^k_\text{mx}\notin \boldsymbol{\mathcal{S}}^k$.
 But we can write
 \begin{align*}
 &\mathbb{P}\left(\left\{ i^k_{\text{mx}}\notin \boldsymbol{\mathcal{S}}^k \right\}_{k\geq \bar{k}}\right) \\
 & \;  =
\underset{k\geq\bar{k}}{\Pi}
  \mathbb{P}\left(i^k_{\text{mx}}\notin \boldsymbol{\mathcal{S}}^k \left |\right.
 ( i^{\bar k}_{\text{mx}}\notin \boldsymbol{\mathcal{S}}^{\bar k}), \;  \dots \, , \;
 ( i^{k-1}_{\text{mx}}\notin \boldsymbol{\mathcal{S}}^{k-1}) \right) \\
 & \; \leq \lim_{k \to \infty}  (1-p)^{k-\bar k} =0.
 \end{align*}
 where the inequality follows by A6 and the independence of the events. But this obviously gives a contradiction and concludes the proof.
\end{proof}


\begin{lemma}\label{lemma:almost_surely_nonsum} Let   $\{\gamma^k\}$ be a sequence satisfying assumptions i)-iii) of Theorem \ref{Theorem_convergence_inexact_Jacobi}. Then it holds that \begin{equation}\label{almost_surely_nonsum}
\mathbb{P}\left( \sum_{k\in \mathcal{K}_\text{\emph{mx}}} \gamma^k < \infty \right)=0.
\end{equation}
\end{lemma}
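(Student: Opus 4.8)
The plan is to show that $\sum_{k\in\Km}\gamma^k=\infty$ almost surely by exploiting the conditional-independence structure of the sampling together with the lower bound $\mathbb{P}(i\in\boldsymbol{\mathcal S}^k)\ge p$ from A6. The natural tool is a conditional Borel--Cantelli / second-moment argument applied to the indicator random variables $\chi^k\triangleq \mathbf 1\{i^k_{\text{mx}}\in\boldsymbol{\mathcal S}^k\}$, so that $\sum_{k\in\Km}\gamma^k=\sum_k \gamma^k\chi^k$. A subtlety is that $i^k_{\text{mx}}$ depends on $\x^k$, which in turn depends on the past realizations of the $\boldsymbol{\mathcal S}^j$, $j<k$; but since $\boldsymbol{\mathcal S}^k$ is independent of $\{\boldsymbol{\mathcal S}^j\}_{j<k}$ (assumption A6), conditioning on the filtration $\mathcal F^k$ generated by $\boldsymbol{\mathcal S}^0,\dots,\boldsymbol{\mathcal S}^{k-1}$ fixes $i^k_{\text{mx}}$ and leaves $\mathbb{P}(\chi^k=1\mid\mathcal F^k)=\mathbb{P}(i^k_{\text{mx}}\in\boldsymbol{\mathcal S}^k\mid\mathcal F^k)\ge p$.

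First I would set $T\triangleq\sum_k\gamma^k\chi^k$ and, arguing by contradiction, suppose $\mathbb{P}(T<\infty)>0$. On the event $\{T<\infty\}$ the partial sums converge; the idea is to derive a contradiction with the divergence $\sum_k\gamma^k=\infty$ (assumption iii) combined with $\gamma^k\to0$ (assumption ii). A clean way is the following martingale argument: consider $Y^k\triangleq\gamma^k(\chi^k-\mathbb{E}[\chi^k\mid\mathcal F^k])$, which is a martingale difference sequence with $|Y^k|\le\gamma^k$, hence $\sum_k\mathbb{E}[(Y^k)^2\mid\mathcal F^{k-1}]\le\sum_k(\gamma^k)^2<\infty$ by assumption iv). By the martingale convergence theorem (or Kolmogorov's theorem for martingales with summable conditional variances), $\sum_k Y^k$ converges almost surely. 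Therefore, almost surely,
\[
\sum_k \gamma^k\chi^k \;=\; \sum_k Y^k \;+\; \sum_k \gamma^k\,\mathbb{E}[\chi^k\mid\mathcal F^k]\;\ge\; \sum_k Y^k \;+\; p\sum_k\gamma^k,
\]
and since $\sum_k Y^k$ is a.s.\ finite while $p\sum_k\gamma^k=\infty$, we conclude $T=\infty$ almost surely, contradicting $\mathbb{P}(T<\infty)>0$. (Note assumption iv is only invoked to get almost-sure convergence of the martingale part; even a cruder truncation argument works, but this is the slickest.)

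The main obstacle, and the point that must be handled with care, is the measurability/independence bookkeeping: one has to make precise the filtration $\mathcal F^k$ with respect to which $\chi^k$ has conditional mean $\ge p$, namely that $i^k_{\text{mx}}$ is $\mathcal F^k$-measurable (because $\x^k$ is determined by $\x^0$ and the realized sets $\mathcal S^0,\dots,\mathcal S^{k-1}$ and the — deterministic given the sets — algorithmic updates, including the greedy subselection and the inexact solves, which we may assume are resolved by a fixed measurable rule) while $\boldsymbol{\mathcal S}^k$ is independent of $\mathcal F^k$. Once this is set up, the bound $\mathbb{P}(i^k_{\text{mx}}\in\boldsymbol{\mathcal S}^k\mid\mathcal F^k)\ge p$ is immediate from A6 (pointwise in the value of $i^k_{\text{mx}}$), and everything else is the standard martingale estimate above. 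An alternative, essentially equivalent route avoiding martingales: split the index set into consecutive blocks $B_\ell$ on which $\sum_{k\in B_\ell}\gamma^k\ge 1$ (possible since $\sum\gamma^k=\infty$), observe that on each block the conditional probability that $\chi^k=0$ for every $k\in B_\ell$ is at most $(1-p')^{|B_\ell|}$ for a suitable $p'>0$ when $\gamma^k$ are comparable, and invoke a conditional Borel--Cantelli lemma; I would present the martingale version as it is cleaner and uses assumption iv) that is available anyway.
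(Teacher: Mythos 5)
Your argument is correct in substance but follows a genuinely different route from the paper's. The paper fixes an integer $n$, defines indices $\hat K_i$ at which the partial sums $\sum_{j\le \hat K_i}\gamma^j$ first exceed $i\cdot n$, and then bounds $\mathbb{P}(\sum_{k\in\mathcal K_{\text{mx}}}\gamma^k<n)$ by splitting according to whether $|\mathcal K_{\text{mx}}\cap[0,\hat K_i]|$ is smaller or larger than $\hat K_i/\sqrt i$, controlling each piece with Chebyshev's inequality applied to the Bernoulli indicators $\mathbf 1\{k\in\mathcal K_{\text{mx}}\}$; crucially, it handles the square of the step-sizes via $\sum_k(\gamma^k)^2\le\sum_k\gamma^k$ (using $\gamma^k\le 1$), so that only hypotheses i)--iii) are ever invoked. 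Your conditional-martingale decomposition $\sum_k\gamma^k\chi^k=\sum_kY^k+\sum_k\gamma^k\mathbb{E}[\chi^k\mid\mathcal F^k]$ is cleaner and, in particular, your filtration bookkeeping (that $i^k_{\text{mx}}$ is $\mathcal F^k$-measurable while $\boldsymbol{\mathcal S}^k$ is independent of $\mathcal F^k$, so $\mathbb{P}(\chi^k=1\mid\mathcal F^k)\ge p$ pointwise) is actually more careful than the paper's treatment, which simply declares the indicators to be independent Bernoulli variables. The contradiction framing is superfluous: you prove $T=\infty$ a.s.\ directly.

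There is, however, one concrete mismatch with the lemma as stated: the lemma assumes only conditions i)--iii), whereas your martingale convergence step requires $\sum_k\mathbb{E}[(Y^k)^2\mid\mathcal F^k]\le\sum_k(\gamma^k)^2<\infty$, i.e.\ condition iv). Under i)--iii) alone (e.g.\ $\gamma^k=1/\log(k+2)$) the series $\sum_k(\gamma^k)^2$ may diverge and $\sum_kY^k$ need not converge, so your proof as written does not establish the lemma under its stated hypotheses (even though iv) is available where the lemma is applied). The repair within your framework is standard: set $b_T\triangleq\sum_{k\le T}\gamma^k\uparrow\infty$ and note $\sum_k\mathbb{E}[(Y^k)^2\mid\mathcal F^k]/b_k^2\le\sum_k\gamma^k/b_k^2<\infty$ (using $\gamma^k\le1$ and the usual $\sum(b_k-b_{k-1})/b_k^2<\infty$ bound), so the martingale strong law gives $\left(\sum_{k\le T}Y^k\right)/b_T\to0$ a.s., whence $\sum_{k\le T}\gamma^k\chi^k\ge b_T\left(p+o(1)\right)\to\infty$. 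With that modification your proof is complete and needs only i)--iii), matching the generality of the paper's.
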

\begin{proof}

It holds that,\vspace{-0.2cm}
\begin{eqnarray}\label{union_of_Pn}
\mathbb{P}\left(\sum_{k\in \Km} \gamma^k < \infty\right)&\le &
\mathbb{P}\left(\bigcup_{n\in \mathbb{N}}\sum_{k\in \Km} \gamma^k < n\right)\medskip\nonumber\\ &\le&
\sum_{n\in\mathbb{N}} \mathbb{P}\left(\sum_{k\in\Km} \gamma^k < n\right).\nonumber
\end{eqnarray}
To prove the lemma, it is then sufficient to show that  $\mathbb{P}\left(\sum_{k\in\Km} \gamma^k < n\right)=0$, as proved next.

Define $\hat{K}_i$, with $i\in \mathbb{N}_+$,  as the smallest index $\hat{K}_i$ such that \vspace{-0.1cm} \begin{equation}\label{K_i_hat}\sum_{j=0}^{\hat{K}_i} \gamma^j \ge i\cdot n.\vspace{-0.2cm}
\end{equation}   Note that  since   $\sum_{k=0}^\infty \gamma^k=+\infty$,
$\hat{K}_i$  is
well-defined for all $i$ and $\lim_{i\to \infty} \hat K_i = +\infty$.
 For any $n\in\mathbb{N}$, it holds:
 \begin{equation} \label{prob_gamma_set1}
\begin{aligned}
 &\mathbb{P}\left(\sum_{k\in \Km} \gamma^k < n\right)
 = \mathbb{P}\left(\bigcap_{m\in\mathbb{N}} \left( \sum_{k\in \Km}^m \gamma^k < n\right)\right)\qquad{}\medskip\\
 &  = \lim_{m\to \infty} \mathbb{P}\left(\sum_{k\in \Km}^m \gamma^k < n\right)
 {=} \lim_{i\to \infty} \mathbb{P}\left(\sum_{k\in \Km}^{\hat{K}_i} \gamma^k < n\right)\medskip\\
&= \lim_{i\to \infty}   \left[ \mathbb{P}\left(\sum_{k\in \Km}^{\hat{K}_i} \gamma^k < n, \,\,\,
|\Km\cap [0,\hat{K}_i]| < \frac{\hat{K}_i}{\sqrt{i}} \right)\right.\smallskip\\
&\qquad+ \left. \mathbb{P}\left(\sum_{k\in \Km}^{\hat{K}_i} \gamma^k < n, \,\,\,
|\Km\cap [0,\hat{K}_i]| \geq \frac{\hat{K}_i}{\sqrt{i}} \right)\right]\medskip\\
&\leq  \lim_{i\to \infty}  \left[ \underset{\mbox{term I}}{\underbrace{\mathbb{P}\left(
|\Km\cap [0,\hat{K}_i]| < \frac{\hat{K}_i}{\sqrt{i}} \right)}}\right.\\
& \qquad+ \left.
\underset{\mbox{term II}}{\underbrace{\mathbb{P}\left(\sum_{k\in \Km}^{\hat{K}_i} \gamma^k < n, \,\,\,
|\Km\cap [0,\hat{K}_i]| \geq \frac{\hat{K}_i}{\sqrt{i}} \right)}}\right].
 \end{aligned}
 \end{equation}
Let us bound next ``term I'' and ``term II'' separately.  

 \noindent \emph{Term I}: We have\vspace{-0.2cm}
  \begin{equation} \label{termI}
\begin{aligned}
&\mathbb{P}\left(
|\Km\cap [0,\hat{K}_i]| < \frac{\hat{K}_i}{\sqrt{i}} \right) \overset{(a)}{=}   \mathbb{P}\left(\sum_{k=0}^{\hat{K}_i} \X_k < \frac{\hat{K}_i}{\sqrt{i}} \right) \medskip\\
& {\leq}\, \mathbb{P}\left(\left|\sum_{k=0}^{\hat{K}_i} \X_k - \sum_{k=0}^{\hat{K}_i}p_k\right|> \sum_{k=0}^{\hat{K}_i}p_k-\frac{\hat{K}_i}{\sqrt{i}} \right)\bigskip\\
&\overset{(b)}{\leq}\,  \left(\frac{\sqrt{\sum_{k=0}^{\hat{K}_i} p_k (1-p_k)}}{\sum_{k=0}^{\hat{K}_i}p_k -\frac{\hat{K}_i}{\sqrt{i}}}\right)^2\overset{(c)}{\leq}\,  \left(\frac{\sqrt{\hat{K}_i}}{\hat{K}_i\left(p-\frac{1}{\sqrt{i}}\right)} \right)^2\bigskip\\
&=\,    \left(\frac{1}{\sqrt{\hat{K}_i}\left(p-\frac{1}{\sqrt{i}}\right)} \right)^2 \underset{i\to\infty}{\longrightarrow}0
 \end{aligned} \vspace{-0.2cm}
 \end{equation} 
where:

\noindent (a): $\X_0,\ldots, \X_{\hat{K}_i}$ are independent Bernoulli random variables, with parameter  $p_{{k}}\triangleq \mathbb{P}(k\in \Km)$. Note that, due to A6, $p_k\geq p$, for all $k$;

\noindent (b): it follows from  Chebyshev's inequality;

\noindent (c): we used the bounds $\sum_{k=0}^{\hat{K}_i} p_k (1-p_k)\leq \hat{K}_i$ and $\sum_{k=0}^{\hat{K}_i}p_k\geq p \hat{K}_i$.\smallskip

\noindent \emph{Term II}: Let us rewrite term II as
  \begin{equation} \label{termII}
  \hspace{-0.2cm}
\begin{aligned}
&\mathbb{P}\left(\frac{\sum_{k\in \Km}^{\hat{K}_i} \gamma^k}{|\Km\cap [0,\hat{K}_i]|} < \frac{n}{|\Km\cap [0,\hat{K}_i]|}\right.\smallskip\\
&\qquad \left.\left|\,
|\Km\cap [0,\hat{K}_i]| \geq \frac{\hat{K}_i}{\sqrt{i}}\right. \right)\smallskip
\cdot\mathbb{P}\left(|\Km\cap [0,\hat{K}_i]|\geq \frac{\hat{K}_i}{\sqrt{i}}\right)\medskip\\
& \overset{(a)}{\leq} \mathbb{P}\left(\frac{\sum_{k\in \Km}^{\hat{K}_i} \gamma^k}{|\Km\cap [0,\hat{K}_i]|} < \frac{n\,\sqrt{i}}{\hat{K}_i}\right.\left.\left|\,
|\Km\cap [0,\hat{K}_i]| \geq \frac{\hat{K}_i}{\sqrt{i}}\right. \right)\smallskip\\
&\quad \cdot\mathbb{P}\left(|\Km\cap [0,\hat{K}_i]|\geq \frac{\hat{K}_i}{\sqrt{i}}\right)\medskip\\
& \overset{(b)}{\leq}\, \mathbb{P}\left(\frac{\sum_{k\in \Km}^{\hat{K}_i} \gamma^k}{|\Km\cap [0,\hat{K}_i]|} < \frac{\sum_{k=0}^{\hat{K}_i} \gamma^k}{ \hat{K}_i\sqrt{i}}\right)\medskip\\
& \overset{(c)}{\leq}\, \mathbb{P}\left(\frac{{\sum_{k=0}^{\hat{K}_i} \gamma^k \X_k}}{\hat{K}_i} < \frac{\sum_{k=0}^{\hat{K}_i} \gamma^k}{ \hat{K}_i }\,\frac{1}{\sqrt{i}}\right)\medskip\\
& \,\,{\leq}\,\,\, \mathbb{P}\left(\left|\frac{{\sum_{k=0}^{\hat{K}_i} \gamma^k \X_k}}{\hat{K}} - \frac{\sum_{k=0}^{\hat{K}_i} \gamma^k\,p_k}{ \hat{K}_i}\right| \right.\smallskip\\
&\qquad\qquad \left. > \frac{\sum_{k=0}^{\hat{K}_i} \gamma^k\,p_k}{ \hat{K}_i}-\frac{\sum_{k=0}^{\hat{K}_i} \gamma^k}{ \hat{K}_i}\,\frac{1}{\sqrt{i}}\right)\medskip\\
& \,\,{\leq}\,\,\, \mathbb{P}\left(\left|\frac{{\sum_{k=0}^{\hat{K}_i} \gamma^k \X_k}}{\hat{K}_i} - \frac{\sum_{k=0}^{\hat{K}_i} \gamma^k\,p_k}{ \hat{K}_i}\right| \right. \left. > \left(p-\frac{1}{\sqrt{i}}\right)\frac{\sum_{k=0}^{\hat{K}_i} \gamma^k}{ \hat{K}_i}\right)\medskip\\
&  \overset{(d)}{\leq}   \left(\frac{\sqrt{{\sum_{k=0}^{\hat{K}_i} (\gamma^k)^2 \,p\,(1-p)}}} { \left(p-\frac{1}{\sqrt{i}}\right){\sum_{k=0}^{\hat{K}_i} \gamma^k}}\right)^2\leq   \left(\frac{\sqrt{{\sum_{k=0}^{\hat{K}_i} \gamma^k}}} { \left(p-\frac{1}{\sqrt{i}}\right){\sum_{k=0}^{\hat{K}_i} \gamma^k}}\right)^2\medskip\\
& = \left(\frac{1} { \left(p-\frac{1}{\sqrt{i}}\right)\sqrt{\sum_{k=0}^{\hat{K}_i} \gamma^k}}\right)^2 \underset{i\to\infty}{\longrightarrow}0,
\end{aligned}
 \end{equation}
where:

\noindent (a): we used  $|\Km\cap [0,\hat{K}_i]| \geq \frac{\hat{K}_i}{\sqrt{i}}$, by the conditioning event;

\noindent (b): it follows from \eqref{K_i_hat}, and $\mathbb{P}(A \bigcap B)\leq \mathbb{P}(A)$;

\noindent (c): $\X_{0},\ldots, \X_{\hat{K}_i}$ are independent Bernoulli random variables, with parameter  $p_{{k}}$. The bound is due to  $|\Km\cap [0,\hat{K}_i]| \leq \hat{K}_i$;

\noindent (d): it follows from the Chebyshev's inequality.

The desired result \eqref{almost_surely_nonsum} follows readily combining \eqref{prob_gamma_set1}, \eqref{termI}, and \eqref{termII}.
\end{proof}

\subsection{On the best-response map $\widehat{\mathbf{x}}(\bullet)$ and its properties}\label{sec:BR_properties}
We introduce now some key properties of the mapping $\widehat{\mathbf{x}}(\bullet)$ defined in (\ref{eq:decoupled_problem_i}). We also derive some bounds involving $\widehat{\mathbf{x}}(\bullet)$ along with the sequence $\{\x^k\}$ generated by Algorithm 1.

\begin{lemma}[\cite{FacSagScuTSP14}]\label{Prop_x_y} Consider  Problem (\ref{eq:problem 1}) under A1-A5, and F1-F3. Suppose that $G(\mathbf{x})$ is separable, i.e., $G(\mathbf{x})=\sum_i G_i(\x_i)$, with each $G_i$ convex on $X_i$.
Then the mapping  $X\ni\mathbf{y}\mapsto\widehat{\mathbf{x}}(\mathbf{y})$
is Lipschitz continuous on \emph{$X$}, i.e., there
exists a positive constant $\hat{{L}}$ such that\emph{
\begin{equation}
\left\Vert \widehat{\mathbf{x}}(\mathbf{y})-\widehat{\mathbf{x}}(\mathbf{z})\right\Vert \leq\,\hat{{L}}\,\left\Vert \mathbf{y}-\mathbf{z}\right\Vert ,\quad\forall\mathbf{y},\mathbf{z}\in X.\label{eq:Lipt_x_map}
\end{equation}
}
\end{lemma}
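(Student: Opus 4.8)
The goal is to show that the best-response map $\widehat{\mathbf{x}}(\bullet)$ is Lipschitz continuous when $G$ is separable, so it suffices to prove that each component map $\mathbf{y}\mapsto\widehat{\mathbf{x}}_i(\mathbf{y})$ is Lipschitz with a uniform constant; the full map then inherits Lipschitz continuity (with constant $\hat L=\sqrt{\sum_i\hat L_i^2}$ or simply $\max_i \hat L_i$ up to a dimension-dependent factor). Fix $i$ and two points $\mathbf{y},\mathbf{z}\in X$. The key tool is the standard variational/optimality characterization: since $\widehat{\mathbf{x}}_i(\mathbf{y})$ minimizes the strongly convex function $\widetilde F_i(\bullet;\mathbf{y})+G_i(\bullet)$ over $X_i$, the first-order optimality condition reads
\begin{equation}
\bigl(\nabla\widetilde F_i(\widehat{\mathbf{x}}_i(\mathbf{y});\mathbf{y})+\boldsymbol{\xi}_{\mathbf{y}}\bigr)^T(\mathbf{v}-\widehat{\mathbf{x}}_i(\mathbf{y}))\ge 0\quad\forall\,\mathbf{v}\in X_i,
\end{equation}
for some subgradient $\boldsymbol{\xi}_{\mathbf{y}}\in\partial G_i(\widehat{\mathbf{x}}_i(\mathbf{y}))$, and similarly at $\mathbf{z}$.

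\textbf{Main steps.} First I would write the two optimality inequalities, plug $\mathbf{v}=\widehat{\mathbf{x}}_i(\mathbf{z})$ into the one at $\mathbf{y}$ and $\mathbf{v}=\widehat{\mathbf{x}}_i(\mathbf{y})$ into the one at $\mathbf{z}$, and add them. The subgradient terms combine into $(\boldsymbol{\xi}_{\mathbf{y}}-\boldsymbol{\xi}_{\mathbf{z}})^T(\widehat{\mathbf{x}}_i(\mathbf{z})-\widehat{\mathbf{x}}_i(\mathbf{y}))$, which is $\le 0$ by monotonicity of $\partial G_i$, so it can be dropped to our advantage. This leaves
\begin{equation}
\bigl(\nabla\widetilde F_i(\widehat{\mathbf{x}}_i(\mathbf{y});\mathbf{y})-\nabla\widetilde F_i(\widehat{\mathbf{x}}_i(\mathbf{z});\mathbf{z})\bigr)^T(\widehat{\mathbf{x}}_i(\mathbf{z})-\widehat{\mathbf{x}}_i(\mathbf{y}))\ge 0.
\end{equation}
Second, I split the gradient difference by inserting $\pm\nabla\widetilde F_i(\widehat{\mathbf{x}}_i(\mathbf{z});\mathbf{y})$: the ``same second argument'' piece is handled by strong convexity F1, giving a lower bound $q\|\widehat{\mathbf{x}}_i(\mathbf{y})-\widehat{\mathbf{x}}_i(\mathbf{z})\|^2$; the ``same first argument'' piece is bounded by F3, the Lipschitz continuity of $\nabla\widetilde F_i(\mathbf{z};\bullet)$ with constant say $L_i$, giving $L_i\|\mathbf{y}-\mathbf{z}\|$ times $\|\widehat{\mathbf{x}}_i(\mathbf{y})-\widehat{\mathbf{x}}_i(\mathbf{z})\|$. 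Rearranging yields $q\|\widehat{\mathbf{x}}_i(\mathbf{y})-\widehat{\mathbf{x}}_i(\mathbf{z})\|\le L_i\|\mathbf{y}-\mathbf{z}\|$, i.e. $\hat L_i=L_i/q$, and then assemble over $i$.

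\textbf{Where to be careful.} There is a subtlety in applying F1 and F3 as stated: F1 gives strong convexity of $\widetilde F_i(\bullet;\mathbf{w})$ in the \emph{first} argument for each fixed $\mathbf{w}$, and F3 gives Lipschitz continuity of $\nabla\widetilde F_i(\mathbf{z};\bullet)$ in the \emph{second} argument for each fixed first argument $\mathbf{z}\in X_i$ — so after the $\pm$ split I must keep the first argument equal to $\widehat{\mathbf{x}}_i(\mathbf{z})$ (which indeed lies in $X_i$) when invoking F3, and keep the second argument equal to $\mathbf{y}$ when invoking F1. I also need to note the full-vector second argument $\mathbf{y}\in X$ in F2/F3, so the relevant quantity is $\|\mathbf{y}-\mathbf{z}\|$ on all of $X$; the contraction is uniform because $q$ is uniform (F1) and the Lipschitz constant in F3 is stated to be independent of $\mathbf{z}$. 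The main obstacle, such as it is, is purely bookkeeping: making sure the monotonicity argument for the subgradients is correctly invoked (separability of $G$ is exactly what makes $\partial_{\mathbf{x}_i}G$ depend only on $\mathbf{x}_i$, so that $\partial G_i$ is a monotone operator on $X_i$) and that the insertion/splitting is done with arguments in the correct domains. Since this lemma is quoted from \cite{FacSagScuTSP14}, one could alternatively just cite it, but the above is the self-contained argument.
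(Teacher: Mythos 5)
Your argument is correct and is essentially the standard one: the paper itself does not prove this lemma but imports it from \cite{FacSagScuTSP14}, where the proof proceeds exactly as you describe (cross-testing the two variational inequalities, dropping the subgradient term by monotonicity of $\partial G_i$ — which is precisely where separability of $G$ is needed — and then splitting the gradient difference to play F1 against F3). The only point worth making explicit is that F3 must be read as giving a Lipschitz constant \emph{uniform} in the first argument $\mathbf{z}\in X_i$ (as it is in the cited reference); otherwise the per-block constants $\hat L_i=L_i/q$, and hence $\hat L=\bigl(\sum_i \hat L_i^2\bigr)^{1/2}$, need not be finite.
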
 \smallskip

\begin{lemma}\label{lem:sampledesc}
Let $\{\x^k\}$ be the sequence generated by Algorithm 1. For every  $k\in \mathcal{K}_{\text{mx}}$ and $\hat{\mathcal{S}}^k$ generated as  in   step S.3 of Algorithm 1, the following holds: there exists a positive constant $c_1$ such that,
\begin{equation}\label{eq:samplediff}
||\hat{\x}_{\hat{\mathcal{S}}^k}(\x^k)-\x^k_{\hat{\mathcal{S}}^k}|| \ge c_1\,||\hat{\x}(\x^k)-\x^k||.
\end{equation}
\end{lemma}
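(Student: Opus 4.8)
The plan is to exploit the greedy selection rule in step S.3 together with the error-bound property \eqref{eq:error bound} and the membership $i^k_{\text{mx}}\in\boldsymbol{\mathcal S}^k$ guaranteed on $\mathcal K_{\text{mx}}$. The point is that, for $k\in\mathcal K_{\text{mx}}$, the block $i^k_{\text{mx}}$ is in the pool $\mathcal S^k$, so its error-bound value $E_{i^k_{\text{mx}}}(\x^k)$ contributes to $M^k=\max_{i\in\mathcal S^k}E_i(\x^k)$; thus $M^k\ge E_{i^k_{\text{mx}}}(\x^k)\ge \underline s_{i^k_{\text{mx}}}\,\|\widehat{\x}_{i^k_{\text{mx}}}(\x^k)-\x^k_{i^k_{\text{mx}}}\|$. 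On the other hand, by construction $\hat{\mathcal S}^k$ contains at least one index $j$ with $E_j(\x^k)\ge\rho M^k$, and for that index $\|\widehat{\x}_j(\x^k)-\x^k_j\|\ge E_j(\x^k)/\bar s_j \ge (\rho/\bar s_j) M^k$.

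First I would chain these inequalities. Since $\|\widehat{\x}_{\hat{\mathcal S}^k}(\x^k)-\x^k_{\hat{\mathcal S}^k}\|\ge \|\widehat{\x}_j(\x^k)-\x^k_j\|$ for the particular $j\in\hat{\mathcal S}^k$ above, and since $i^k_{\text{mx}}$ is by definition \eqref{i_max_k} the block achieving the largest block-distance, so that $\|\widehat{\x}(\x^k)-\x^k\|\le \sqrt N\,\|\widehat{\x}_{i^k_{\text{mx}}}(\x^k)-\x^k_{i^k_{\text{mx}}}\|$, I would obtain
\[
\|\widehat{\x}_{\hat{\mathcal S}^k}(\x^k)-\x^k_{\hat{\mathcal S}^k}\|
\;\ge\; \frac{\rho}{\bar s_j} M^k
\;\ge\; \frac{\rho\,\underline s_{i^k_{\text{mx}}}}{\bar s_j}\,\|\widehat{\x}_{i^k_{\text{mx}}}(\x^k)-\x^k_{i^k_{\text{mx}}}\|
\;\ge\; \frac{\rho\,\underline s_{i^k_{\text{mx}}}}{\bar s_j\sqrt N}\,\|\widehat{\x}(\x^k)-\x^k\|.
\]
Setting $c_1\triangleq \rho\,(\min_i\underline s_i)/((\max_i\bar s_i)\sqrt N)>0$, which is independent of $k$ and of the particular realization, gives \eqref{eq:samplediff}.

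The only genuinely delicate point is making the two-norm comparison between the aggregated block vector and the full vector precise: one must note that $\|\widehat{\x}_{\hat{\mathcal S}^k}(\x^k)-\x^k_{\hat{\mathcal S}^k}\|^2=\sum_{i\in\hat{\mathcal S}^k}\|\widehat{\x}_i(\x^k)-\x^k_i\|^2$ is at least the single term $\|\widehat{\x}_j(\x^k)-\x^k_j\|^2$, and that $\|\widehat{\x}(\x^k)-\x^k\|^2=\sum_{i\in\mathcal N}\|\widehat{\x}_i(\x^k)-\x^k_i\|^2\le N\,\max_i\|\widehat{\x}_i(\x^k)-\x^k_i\|^2 = N\,\|\widehat{\x}_{i^k_{\text{mx}}}(\x^k)-\x^k_{i^k_{\text{mx}}}\|^2$; the rest is routine. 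If one uses the canonical error bound $E_i(\x)=\|\widehat{\x}_i(\x)-\x^k_i\|$, i.e. $\underline s_i=\bar s_i=1$, the constant simplifies to $c_1=\rho/\sqrt N$, and I would state the general case and remark on this specialization.
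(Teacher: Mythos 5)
Your argument is correct and is essentially the paper's own proof: the same chain (pick an index $j=i^k_\rho\in\hat{\mathcal S}^k$ with $E_j(\x^k)\ge\rho M^k$, apply the two sides of the error bound \eqref{eq:error bound}, and use $i^k_{\text{mx}}\in\mathcal S^k$ for $k\in\mathcal K_{\text{mx}}$ to relate $M^k$ to the largest block residual). The only difference is cosmetic: the paper bounds $\|\widehat{\x}(\x^k)-\x^k\|\le N\max_i\|\widehat{\x}_i(\x^k)-\x_i^k\|$ and so gets $c_1=\rho(\min_i\underline s_i)/(N\max_i\bar s_i)$, whereas your $\sqrt N$ gives a slightly sharper (equally valid) constant.
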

\begin{proof}
The following chain of inequalities holds:  \[
\begin{aligned}
&\left(\max_{i\in \mathcal{N}} \bar{s}_{i}\right) \left\|\hat{\x}_{\hat{\mathcal{S}}^k}(\x^k)-\x^k_{\hat{\mathcal{S}}^k}\right\|  \overset{(a)}{\ge} \bar{s}_{i^k_\rho}\, \left\|\hat{\x}_{i^k_\rho}(\x^k)-\x^k_{{i^k_\rho}}\right\| \medskip\\
& \quad\quad\overset{(b)}{\ge} E_{i^k_\rho}(\x^k) \overset{(c)}{\ge} \rho\, E_{i^k_{\text{mx}}}(\x^k)\medskip\\
&\quad\quad\overset{(d)}{\ge}  {\rho}  \,\left({\min_{i\in \mathcal{N}}\underline{s}_{i}}\right)\, \left(\max_{i\in \mathcal{N}}\left\|\hat{\x}_i(\x^k)-\x_i^k\right\|\right)\medskip\\
&\quad\quad\,\,\,{\ge}\, \frac{\rho}{N} \,\left({\min_{i\in \mathcal{N}}\underline{s}_{i}}\right)\, \left\|\hat{\x}(\x^k)-\x^k\right\|
\end{aligned}
\]
where: in (a)  $i^k_{\rho}$  is any index in  $\hat{\mathcal{S}}^k$  such that  $E_{i^k_{\rho}}(\x^k)\geq \rho\, \max_{i\in \mathcal{S}^k} E_i(\x^k)$.   Note that by definition of $\hat{\mathcal{S}}^k$ (cf. step S.3 of Algorithm 1), such a index  always exists;   (b) is due to \eqref{eq:error bound};    (c) follows from the definition of ${i^k_\rho}$, and $\max_{i\in \mathcal{S}^k} E_i(\x^k)= E_{i^k_{\text{mx}}}(\x^k)$, the latter due to $i^k_{\text{mx}}\in \mathcal{S}^k\supseteq \hat{\mathcal{S}}^k$ (recall that $k\in \Km$); and (d) follows from \eqref{eq:error bound}.
\end{proof} \vspace{-0.3cm}


\begin{lemma}\label{lemma on errors} Let $\{\x^k\}$ be the sequence generated by Algorithm 1. For every  $k\in \mathbb{N}_+$, and  $\hat{\mathcal{S}}^k$ generated as  in step S.3,   the following holds:
\begin{equation}\label{error_descent}
\begin{array}{ll}
\left( \nabla_{\x}F (\x^{k})\right)^{T}_{\tiny { \hat{\mathcal{S}}^k}} \left(\widehat{\x}(\x^k) -\x^k \right)_{\tiny { \hat{\mathcal{S}}^k}}\leq - q \,\| \left(\widehat{\x}(\x^k)-\x^k\right)_{\tiny { \hat{\mathcal{S}}^k}}\|^2 \medskip\\\hspace{3cm}    +\displaystyle{\sum_{i \in {\tiny { \hat{\mathcal{S}}^k}}}} \left[G(\x^k)-G(\widehat{\x}_i(\x^k),\x_{-i}^k)\right].
\end{array} \vspace{-0.3cm}
\end{equation} 
\end{lemma}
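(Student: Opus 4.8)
The plan is to derive \eqref{error_descent} by exploiting the optimality of each $\widehat{\x}_i(\x^k)$ in the subproblem \eqref{eq:decoupled_problem_i}, summed over $i\in\hat{\mathcal S}^k$, together with the defining properties F1--F2 of the approximants $\widetilde F_i$. First I would fix $k$, write $\widehat{\x}_i \triangleq \widehat{\x}_i(\x^k)$ for brevity, and recall that $\widehat{\x}_i$ minimizes $\tilde h_i(\bullet;\x^k) = \widetilde F_i(\bullet;\x^k) + G(\bullet,\x^k_{-i})$ over $X_i$. Since $\widetilde F_i(\bullet;\x^k)$ is (strongly) convex and differentiable and $G(\bullet,\x^k_{-i})$ is convex, the minimum principle gives, for the feasible point $\x_i^k\in X_i$,
\[
\nabla\widetilde F_i(\widehat{\x}_i;\x^k)^T(\x_i^k-\widehat{\x}_i) + G(\x^k) - G(\widehat{\x}_i,\x^k_{-i}) \ge 0,
\]
where I used $G(\x_i^k,\x^k_{-i}) = G(\x^k)$. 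Equivalently,
\[
\nabla\widetilde F_i(\widehat{\x}_i;\x^k)^T(\widehat{\x}_i-\x_i^k) \le G(\x^k) - G(\widehat{\x}_i,\x^k_{-i}).
\]

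Next I would bring in strong convexity (F1): for the strongly convex function $\widetilde F_i(\bullet;\x^k)$ with modulus $q$,
\[
\bigl(\nabla\widetilde F_i(\widehat{\x}_i;\x^k) - \nabla\widetilde F_i(\x_i^k;\x^k)\bigr)^T(\widehat{\x}_i-\x_i^k) \ge q\,\|\widehat{\x}_i-\x_i^k\|^2,
\]
so that $\nabla\widetilde F_i(\x_i^k;\x^k)^T(\widehat{\x}_i-\x_i^k) \le \nabla\widetilde F_i(\widehat{\x}_i;\x^k)^T(\widehat{\x}_i-\x_i^k) - q\,\|\widehat{\x}_i-\x_i^k\|^2$. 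Now the crucial gradient-consistency property F2 gives $\nabla\widetilde F_i(\x_i^k;\x^k) = \nabla_{\x_i}F(\x^k)$, so combining with the optimality inequality above yields, for each $i\in\hat{\mathcal S}^k$,
\[
\nabla_{\x_i}F(\x^k)^T(\widehat{\x}_i-\x_i^k) \le -q\,\|\widehat{\x}_i-\x_i^k\|^2 + G(\x^k) - G(\widehat{\x}_i,\x^k_{-i}).
\]

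Finally I would sum this inequality over all $i\in\hat{\mathcal S}^k$. The left side sums to $\bigl(\nabla_{\x}F(\x^k)\bigr)_{\hat{\mathcal S}^k}^T\bigl(\widehat{\x}(\x^k)-\x^k\bigr)_{\hat{\mathcal S}^k}$ by the block structure of the gradient and the zero-padding convention for the subscript $\hat{\mathcal S}^k$; the first term on the right sums to $-q\sum_{i\in\hat{\mathcal S}^k}\|\widehat{\x}_i-\x_i^k\|^2 = -q\,\|(\widehat{\x}(\x^k)-\x^k)_{\hat{\mathcal S}^k}\|^2$; and the remaining terms give exactly $\sum_{i\in\hat{\mathcal S}^k}\bigl[G(\x^k)-G(\widehat{\x}_i(\x^k),\x_{-i}^k)\bigr]$. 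This is \eqref{error_descent}. I do not expect a serious obstacle here: the only mild subtlety is being careful that the optimality condition is the one-sided variational inequality (because $G$ is nonsmooth, one should phrase it via the subdifferential of $\tilde h_i$ or directly via the convexity inequality $\tilde h_i(\x_i^k;\x^k)\ge\tilde h_i(\widehat{\x}_i;\x^k) + (\text{subgradient term})$), and that no convexity or separability of $G$ across blocks is needed—only convexity in the single block $\x_i$, which holds by A4. The identity $G(\x_i^k,\x_{-i}^k)=G(\x^k)$ is what makes the nonsmooth terms telescope into the stated form.
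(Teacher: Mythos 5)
Your proposal is correct and follows essentially the same route as the paper: the variational-inequality/minimum-principle characterization of $\widehat{\x}_i(\x^k)$ tested at $\x_i^k$, the strong monotonicity of $\nabla\widetilde F_i(\bullet;\x^k)$ from F1, the gradient consistency F2, and a sum over $i\in\hat{\mathcal S}^k$. The only cosmetic difference is that you fold the subgradient term and the convexity of $G(\bullet,\x^k_{-i})$ into a single minimum-principle inequality, whereas the paper keeps them as two separately bounded terms; the content is identical.
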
 
\begin{proof}
Optimality of $\widehat{\x}_i(\x^k)$ for the subproblem $i$ implies
\begin{equation}\label{eq:optsubprob}
\left(\nabla_{\x_i} \widetilde{F}_i(\widehat{\x}_i(\x^k);\x^k)+\boldsymbol{\xi}_i(\widehat{\x}_i(\x^k),\x_{-i}^k)\right)^T\left(\mathbf{y}_i-\widehat{\x}_i(\x^k)\right)\ge 0,\nonumber
\end{equation}
for all $\y_i\in X_i$, and some $\boldsymbol{\xi}_i(\widehat{\x}_i(\x^k),\x_{-i}^k)\in \partial_{\x_i} G (\widehat{\x}_i(\x^k),\x_{-i}^k)$. Therefore,
\begin{equation}\label{eq:optsubprob2}
\begin{array}{ll}
0\ge  \nabla_{\x_i} \widetilde{F}_i(\widehat{\x}_i(\x^k);\x^k)^T\,\left(\widehat{\x}_i(\x^k)-\x_i^k\right)\medskip\\
\hspace{2cm} +\,\boldsymbol{\xi}_i(\widehat{\x}_i(\x^k),\x_{-i}^k)^T\,\left(\widehat{\x}_i(\x^k)-\x_i^k\right).
\end{array}
\end{equation}

Let us (lower) bound next the two terms on the RHS of \eqref{eq:optsubprob2}. The uniform strong monotonicity of $\widetilde{F}_i(\bullet;\x^k)$ (cf. F1),\vspace{-0.1cm}
\begin{equation}\label{eq:monotone}
\begin{array}{ll}
\left(\nabla_{\x_i} \widetilde{F}_i(\widehat{\x}_i(\x^k);\x^k)-\nabla_{\x_i} \widetilde{F}_i(\x_i^k;\x^k)\right)^T (\widehat{\x}_i(\x^k)-\x^k_i)\medskip\\\hspace{2cm}\ge \,q \,||\widehat{\x}_i(\x^k)-\x^k_i||^2,
\end{array}
\end{equation}
along with  the gradient consistency condition (cf. F2) $\nabla_{\x_i} \widetilde{F}_i(\x_i^k;\x^k) = \nabla_{\x_i} F(\x^k)$ imply\\\vspace{-0.2cm}
\begin{equation}\label{eq:monotone2}
\begin{array}{l}
\nabla_{\x_i} \widetilde{F}_i(\widehat{\x}_i(\x^k);\x^k)^T\,\left(\widehat{\x}_i(\x^k)-\x_i^k\right)\medskip\\
=\left(\nabla_{\x_i} \widetilde{F}_i(\widehat{\x}_i(\x^k);\x^k)-\nabla_{\x_i} \widetilde{F}_i(\x_i^k;\x^k)\right)^T
\left(\widehat{\x}_i(\x^k)-\x^k_i\right) \medskip\\ \quad\,\,+  \nabla_{\x_i} \widetilde{F}_i(\x_i^k;\x^k)^T\left(\widehat{\x}_i(\x^k)-\x^k_i\right)\medskip\\  \ge \nabla_{\x_i} F(\x^k)^T\left(\widehat{\x}_i(\x^k)-\x^k_i\right) + q\,||\widehat{\x}_i(\x^k)-\x^k_i||^2.
\end{array}
\end{equation}
To bound the second term on the RHS of \eqref{eq:optsubprob2}, let us invoke the convexity of $G(\bullet,\x^k_{-i})$:
\begin{equation}
\label{eq:convexity}\nonumber
\begin{array}{ll}
G(\x^k_i,\x^k_{-i})-G(\widehat{\x}_i(\x^k),\x^k_{-i})\medskip\\\qquad\qquad \ge \boldsymbol{\xi}_i(\widehat{\x}_i(\x^k),\x_{-i}^k)^T\, \left(\x^k_i-\widehat{\x}_i(\x^k)\right),
\end{array}
\end{equation}
which yields
\begin{equation}
\label{eq:convexity2}
\begin{array}{ll}
\boldsymbol{\xi}_i(\widehat{\x}_i(\x^k),\x_{-i}^k)^T\, \left(\widehat{\x}_i(\x^k)-\x^k_i\right)\medskip\\\qquad\qquad
\geq G(\widehat{\x}_i(\x^k),\x^k_{-i})-G(\x^k).
\end{array}
\end{equation}

The desired result (\ref{error_descent}) is readily obtained by combining~\eqref{eq:optsubprob2} with~\eqref{eq:monotone2} and~\eqref{eq:convexity2}, and summing over $i\in \hat{\mathcal{S}}^k$.
\end{proof}

\begin{lemma}\label{bounds_on_G} Let $\{\x^k\}$ be the sequence generated by Algorithm 1,   and $\{\gamma^k\}\!\downarrow 0$. For every  $k\in \mathbb{N}_+$ sufficiently large, and $\hat{\mathcal{S}}^k$ generated as in step S.3, the following holds:
\begin{equation}\label{G_k_bound}
\begin{array}{ll}
G(\x^{k+1})\leq G(\x^k)+ \gamma^k\,L_G \sum_{i \in {\hat{\mathcal{S}}^k}}\varepsilon_{i}^{k}\medskip\\\hspace{2.8cm}    + {\gamma}^k\, \displaystyle{\sum_{i\in \hat{\mathcal{S}}^k}} \left[G(\widehat{\x}_i(\x^k),\x^k_{-i})-G({\x}^k)\right].
\end{array}
\end{equation} 
\end{lemma}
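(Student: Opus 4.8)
The plan is to exploit the convex-combination update in step S.5 together with the convexity and global Lipschitz continuity of $G$, plus the fact that the inexact solution $\z_i^k$ lies within $\varepsilon_i^k$ of the exact best response $\widehat{\x}_i(\x^k)$. First I would write out $\x^{k+1}$ componentwise: for $i\in\hat{\mathcal{S}}^k$ we have $\x_i^{k+1}=\x_i^k+\gamma^k(\z_i^k-\x_i^k)=(1-\gamma^k)\x_i^k+\gamma^k\z_i^k$, while for $i\notin\hat{\mathcal{S}}^k$ we have $\x_i^{k+1}=\x_i^k$. Since $\{\gamma^k\}\downarrow 0$, for $k$ large enough $\gamma^k\in(0,1]$, so each updated block is a genuine convex combination of $\x_i^k$ and $\z_i^k$.

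The key step is to replace $\z_i^k$ by $\widehat{\x}_i(\x^k)$ at the cost of an $O(\varepsilon_i^k)$ error. Writing $\z_i^k=\widehat{\x}_i(\x^k)+(\z_i^k-\widehat{\x}_i(\x^k))$ with $\|\z_i^k-\widehat{\x}_i(\x^k)\|\le\varepsilon_i^k$, and using the global Lipschitz continuity of $G$ with constant $L_G$, I would bound $G(\x^{k+1})$ by $G(\widehat{\z}^{k,\text{exact}})+\gamma^k L_G\sum_{i\in\hat{\mathcal{S}}^k}\varepsilon_i^k$, where $\widehat{\z}^{k,\text{exact}}$ is the point with blocks $\widehat{\x}_i(\x^k)$ on $\hat{\mathcal{S}}^k$ and $\x_i^k$ elsewhere — here one must be a little careful since $G$ is nonseparable, so the Lipschitz estimate is applied to the full vector whose displacement from $\x^{k+1}$ has norm at most $\gamma^k\sum_{i\in\hat{\mathcal{S}}^k}\varepsilon_i^k$ (sum of the per-block displacement norms, or their $\ell_2$ aggregate, which is dominated by the sum). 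Then, because $\widehat{\z}^{k,\text{exact}}-\x^k$ equals $\gamma^k$ times the vector that is $(\widehat{\x}_i(\x^k)-\x_i^k)$ on $\hat{\mathcal{S}}^k$ and zero elsewhere, convexity of $G$ gives
\[
G(\widehat{\z}^{k,\text{exact}})\le(1-\gamma^k)G(\x^k)+\gamma^k\,G\!\Big((\widehat{\x}_i(\x^k))_{i\in\hat{\mathcal{S}}^k},(\x_i^k)_{i\notin\hat{\mathcal{S}}^k}\Big).
\]

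Finally I would handle the last term. This is where nonseparability of $G$ again forces a small argument: $G$ evaluated at the point that differs from $\x^k$ simultaneously in all blocks of $\hat{\mathcal{S}}^k$ is not obviously controlled by the one-block-at-a-time quantities $G(\widehat{\x}_i(\x^k),\x_{-i}^k)$. The natural route is an induction on the blocks in $\hat{\mathcal{S}}^k$: change one block at a time from $\x_i^k$ to $\widehat{\x}_i(\x^k)$, using convexity of $G$ along each such segment and re-telescoping, so that the difference from $G(\x^k)$ is bounded by $\sum_{i\in\hat{\mathcal{S}}^k}[G(\widehat{\x}_i(\x^k),\x_{-i}^k)-G(\x^k)]$ — essentially the same device used in the proof of Lemma~\ref{lemma on errors} when summing the subgradient inequalities over $i\in\hat{\mathcal{S}}^k$. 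Substituting this bound back, the $(1-\gamma^k)G(\x^k)$ and the $\gamma^k G(\x^k)$ pieces recombine to give $G(\x^k)$ plus $\gamma^k\sum_{i\in\hat{\mathcal{S}}^k}[G(\widehat{\x}_i(\x^k),\x_{-i}^k)-G(\x^k)]$, and adding the earlier $\gamma^k L_G\sum_{i\in\hat{\mathcal{S}}^k}\varepsilon_i^k$ term yields exactly \eqref{G_k_bound}.

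I expect the main obstacle to be the nonseparability of $G$: getting from the single full-vector evaluation $G\big((\widehat{\x}_i(\x^k))_{i\in\hat{\mathcal{S}}^k},(\x_i^k)_{i\notin\hat{\mathcal{S}}^k}\big)$ down to the sum $\sum_{i\in\hat{\mathcal{S}}^k}[G(\widehat{\x}_i(\x^k),\x_{-i}^k)-G(\x^k)]$ requires the block-by-block convexity/telescoping argument rather than a one-line application of separability. Everything else — the convex-combination step, the $\gamma^k\le 1$ requirement for large $k$, and the $\varepsilon_i^k$-perturbation via $L_G$ — is routine.
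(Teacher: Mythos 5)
Your first two steps (replacing $\z_i^k$ by $\widehat{\x}_i(\x^k)$ at a cost of $\gamma^k L_G\sum_{i\in\hat{\mathcal{S}}^k}\varepsilon_i^k$, and then writing the exact iterate as the convex combination $(1-\gamma^k)\x^k+\gamma^k\y$ with $\y$ the point whose blocks in $\hat{\mathcal{S}}^k$ are all set to $\widehat{\x}_i(\x^k)$ simultaneously) match the paper. The gap is in your final step: the inequality
\[
G(\y)-G(\x^k)\;\le\;\sum_{i\in\hat{\mathcal{S}}^k}\bigl[G(\widehat{\x}_i(\x^k),\x^k_{-i})-G(\x^k)\bigr]
\]
is \emph{false} for a general nonseparable convex $G$, and no block-by-block telescoping rescues it. Take $G(x_1,x_2)=(x_1+x_2)^2$, $\x^k=(0,0)$, $\widehat{\x}_1=\widehat{\x}_2=1$: the left side is $4$ while the right side is $1+1=2$. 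The telescoping you propose writes $G(\y)-G(\x^k)=\sum_j\bigl[G(\y^{(j)})-G(\y^{(j-1)})\bigr]$ with one block changed per step, but each increment is taken from the base point $\y^{(j-1)}$, not from $\x^k$, and convexity gives you no way to compare $G(\y^{(j)})-G(\y^{(j-1)})$ with $G(\widehat{\x}_{i_j}(\x^k),\x^k_{-i_j})-G(\x^k)$ (in the example above the second increment is $3$, not $1$). This is also not ``the same device'' as in Lemma~\ref{lemma on errors}: there every subgradient inequality involves only a single-block perturbation of the common base point $\x^k$, so no simultaneous multi-block point ever appears.

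The paper circumvents this by exploiting the smallness of the step rather than comparing $\y$ itself to single-block updates. Setting $\bar\gamma^k=N\gamma^k$ (which is $<1$ for $k$ large, and this is exactly where the ``$k$ sufficiently large'' hypothesis is consumed), the exact iterate $\bar\x^k=\x^k+\gamma^k(\widehat{\x}(\x^k)-\x^k)_{\hat{\mathcal{S}}^k}$ is rewritten as the average $\frac1N\sum_{i\in\mathcal N}(\check\x_i^k,\x^k_{-i})$, where each summand differs from $\x^k$ in \emph{one} block only, moving a fraction $\bar\gamma^k$ of the way to $\widehat{\x}_i(\x^k)$. Jensen's inequality then gives $G(\bar\x^k)\le\frac1N\sum_i G(\check\x_i^k,\x^k_{-i})$, and convexity along each single block gives $G(\check\x_i^k,\x^k_{-i})\le\bar\gamma^k G(\widehat{\x}_i(\x^k),\x^k_{-i})+(1-\bar\gamma^k)G(\x^k)$; combining these yields precisely the $\gamma^k$-scaled bound in \eqref{G_k_bound} without ever needing the false unscaled inequality. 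You would need to replace your last step by this convex-combination decomposition for the argument to go through.
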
\vspace{-0.4cm}
\begin{proof}
 Given $k\geq 0$ and $\hat{\mathcal{S}}^{k}$, define
  $\bar{\x}^{k}\triangleq (\bar{\x}_i^k)_{i\in \mathcal{N}}$, with \begin{equation} \bar{\mathbf{x}}_{i}^{k}\triangleq\left\{ \begin{array}{ll}
\x_i^k+\gamma^k\,\left(\widehat{\mathbf{x}}_{i}(\mathbf{x}^{k})-\x^k_i\right),\ensuremath{} & \mbox{if }i\in\hat{\mathcal{S}}^{k}\\
\mathbf{x}_{i}^{k} & \mbox{otherwise}.
\end{array}\right.\nonumber\end{equation}

 By the convexity and Lipschitz continuity of $G$, it follows
\begin{equation}\label{eq:pinexactbreak}
\begin{array}{ll}
G(\x^{k+1}) &=\,\, G(\x^k)+\left(G(\x^{k+1})-G(\bar{\x}^k)\right) \medskip \\ &\quad \,\,+ \left(G(\bar{\x}^k)-G(\x^k)\right)\medskip \\ & \le
\,\,G(\x^k)+\gamma^k\, L_G\,\sum_{i\in\hat{\mathcal{S}}^k} \varepsilon^k_i\medskip\\ & \quad\,\, + \left(G(\bar{\x}^k)-G(\x^k)\right),
\end{array}
\end{equation}
 where $L_G$ is a (global) Lipschitz constant of $G$. We bound next the last term on the RHS of \eqref{eq:pinexactbreak}.

 Let $\bar{\gamma}^k=\gamma^k N$, for $k$ large enough so that $0<\bar{\gamma}^k<1$. Define $\check{\x}^{k}\triangleq (\check{\x}_i^k)_{i\in \mathcal{N}}$, with $\check{\x}_i^k=\x_i^k$ if $i\notin \hat{\mathcal{S}}^k$, and 
\begin{equation}\label{eq:def_x_i_check}
\check{\x}^{k}_i\triangleq \bar{\gamma}^k\, \widehat{\x}_i(\x^k)+(1-\bar{\gamma}^k)\,\x^k_i
\end{equation}
otherwise. Using the definition of $\bar{\x}^k$ it is not difficult to see  that\vspace{-0.2cm}
\begin{equation}\label{x_bar}
\bar{\x}^k= \frac{N-1}{N}\,\x^k+\frac{1}{N}\, \check{\x}^k.
\end{equation}

Using (\ref{x_bar}) and invoking the convexity of $G$, the following recursion holds for sufficiently large $k$:
\begin{equation}
\begin{array}{l}
G(\bar{\x}^k) = G\left(\frac{1}{N}(\check{\x}^k_1,\x^k_{-1})+\frac{1}{N}(\x^k_1,\check{\x}^k_{-1}) +
\frac{N-2}{N}\x^k\right)\medskip \\ \quad =G\left(\frac{1}{N}\,(\check{\x}^k_1,\x^k_{-1}) +\frac{N-1}{N}\,\left(\x^k_1,\frac{1}{N-1}\,\check{\x}^k_{-1}+\frac{N-2}{N-1}\,\x^k_{-1}\right)\right)\medskip \\
\quad \leq \frac{1}{N}\,G\left(\check{\x}^k_1,\x^k_{-1}\right) +\frac{N-1}{N}\,G\left(\x^k_1,\frac{1}{N-1}\,\check{\x}^k_{-1}+\frac{N-2}{N-1}\,\x^k_{-1}\right)\medskip\\
\quad = \frac{1}{N}\,G\left(\check{\x}^k_1,\x^k_{-1}\right)+\frac{N-1}{N}\,G\left(\frac{1}{N-1}\left(\x^k_1,\check{\x}^k_{-1}\right)+\frac{N-2}{N-1}\x^k\right)
\end{array}\nonumber
\end{equation}
\begin{equation}
\label{eq:pconvexxbar}
\begin{array}{l}
\quad =
\frac{1}{N} G\left(\check{\x}^k_1,\x^k_{-1}\right) +\frac{N-1}{N} \,G\left(\frac{1}{N-1}\,
\left(\check{\x}^k_2,\x^k_{-2}\right)\right.\medskip\\
\hspace{3.3cm}
\left.+\frac{1}{N-1}\left(\x^k_1,\x^k_2,\check{\x}^k_{-(1,2)}\right)
  + \frac{N-3}{N-1}\,\x^k\right)\medskip \\
  
  \quad =\frac{1}{N} G\left(\check{\x}^k_1,\x^k_{-1}\right) +\frac{N-1}{N} \,G\left(\frac{1}{N-1}\,
\left(\check{\x}^k_2,\x^k_{-2}\right)\right.\medskip\\
\hspace{1.8cm}
\left.+\frac{N-2}{N-1}\left(\x^k_1,\x^k_2,\frac{1}{N-2}\,\check{\x}^k_{-(1,2)}+\frac{N-3}{N-2}\,\x^k_{-(1,2)}\right)\right)\medskip\\
\quad \le \frac{1}{N} G\left(\check{\x}^k_1,\x^k_{-1}\right) + \frac{1}{N}\,G\left(\check{\x}^k_2,\x^k_{-2}\right)\medskip \\
\quad  \quad + \frac{N-2}{N-1}\,G\left(\x^k_1,\x^k_2,\frac{1}{N-2}\,\check{\x}^k_{-(1,2)}+\frac{N-3}{N-2}\,\x^k_{-(1,2)}\right) \medskip \\
\quad \le \quad ... \quad \le \dfrac{1}{N}\, \displaystyle{\sum_{i\in \mathcal{N}}} G(\check{\x}^k_i,\x^k_{-i}).
\end{array} 
\end{equation}

Using \eqref{eq:pconvexxbar}, the last term on the RHS of \eqref{eq:pinexactbreak} can be upper bounded for $k$ sufficiently large as
\begin{equation}\label{bound_G_bar_G_k}
\begin{array}{ll}
&\!\!\!\!\!\!\!G(\bar{\x}^k)-G({\x}^k) \leq \dfrac{1}{N}\,\displaystyle{\sum_{i\in \mathcal N}} \left[G(\check{\x}^k_i,\x^k_{-i})-G({\x}^k)\right]\medskip\\
&= \dfrac{1}{N}\,\displaystyle{\sum_{i\in \hat{\mathcal{S}}^k}} \left[G(\check{\x}^k_i,\x^k_{-i})-G({\x}^k)\right]\medskip\\
&\overset{(a)}{\leq} \dfrac{1}{N} \displaystyle{\sum_{i\in \hat{\mathcal{S}}^k}} \left[\bar{\gamma}^k G(\widehat{\x}_i(\x^k),\x^k_{-i})+ (1-\bar{\gamma}^k) G(\x^k) -G({\x}^k)\right]\medskip\\
&= {\gamma}^k\, \displaystyle{\sum_{i\in \hat{\mathcal{S}}^k}} \left[G(\widehat{\x}_i(\x^k),\x^k_{-i})-G({\x}^k)\right],
\end{array}
\end{equation}
  where (a) is due to the convexity  of $G(\bullet,\x^k_{-i})$ and the definition of $\check{\x}^k_i$ [cf. (\ref{eq:def_x_i_check})].

  The desired inequality (\ref{G_k_bound}) follows readily by combining (\ref{eq:pinexactbreak}) with  (\ref{bound_G_bar_G_k}).
\end{proof}

\begin{lemma} \emph{\cite[Lemma 3.4, p.121]{Bertsekas-Tsitsiklis_bookNeuro11}}\label{lemma_Robbinson_Siegmunt}
Let $\{X^{k}\}$, $\{Y^{k}\}$, and $\{Z^{k}\}$ be three sequences
of numbers such that $Y^{k}\geq0$ for all $k$. Suppose that
\[
X^{k+1}\leq X^{k}-Y^{k}+Z^{k},\quad\forall k=0,1,\ldots
\]
and $\sum_{k=0}^\infty Z^{k}<\infty$. Then either $X^{k}\rightarrow-\infty$
or else $\{X^{k}\}$ converges to a finite value and $\sum_{k=0}^\infty Y^{k}<\infty$.
\end{lemma}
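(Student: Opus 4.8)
The plan is to reduce this deterministic Robbins--Siegmund-type statement to the elementary fact that a nonincreasing sequence of real numbers either converges to a finite limit or diverges to $-\infty$. The entire argument hinges on identifying an auxiliary sequence that absorbs the perturbation terms $Z^k$ and is monotone; once that sequence is found, nothing beyond standard properties of real sequences is needed.

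First I would set $T^k \triangleq \sum_{j=0}^{k-1} Z^j$ (with $T^0 = 0$) and define $B^k \triangleq X^k - T^k$. Since $\sum_k Z^k < \infty$, the partial sums $T^k$ converge to a finite limit $T^\infty$ and are therefore bounded. The key computation is to substitute the hypothesis $X^{k+1} \leq X^k - Y^k + Z^k$ into the identity $B^{k+1} = X^{k+1} - T^k - Z^k$; after the $Z^k$ terms cancel one obtains $B^{k+1} \leq B^k - Y^k$. Because $Y^k \geq 0$ for all $k$, this shows that $\{B^k\}$ is nonincreasing, which is exactly the structure I want.

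Next I would invoke monotone convergence of real sequences: the nonincreasing sequence $\{B^k\}$ either converges to a finite value or tends to $-\infty$. I then transfer this dichotomy back to $X^k = B^k + T^k$ using the boundedness of $T^k$: if $B^k \to -\infty$ then $X^k \to -\infty$, whereas if $B^k \to B^\infty$ finite then $X^k \to B^\infty + T^\infty$, a finite value. Finally, in the convergent case, telescoping the inequality $Y^k \leq B^k - B^{k+1}$ over $k = 0, \ldots, m-1$ gives $\sum_{k=0}^{m-1} Y^k \leq B^0 - B^m$, and letting $m \to \infty$ yields $\sum_{k=0}^\infty Y^k \leq B^0 - B^\infty < \infty$, which completes the claim.

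The argument is short, and the only point requiring care is the reading of the hypothesis $\sum_k Z^k < \infty$: I interpret it as asserting that the partial sums $T^k$ form a convergent, hence bounded, sequence, which is precisely what the transfer step uses. This bookkeeping is the main subtlety rather than a genuine obstacle; the substitution $B^k = X^k - T^k$ does all the work, converting a perturbed recursion into a clean monotonicity statement from which both the convergence dichotomy and the summability of $\{Y^k\}$ follow immediately.
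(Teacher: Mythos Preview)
Your argument is correct and is essentially the standard proof of this deterministic Robbins--Siegmund lemma. Note, however, that the paper does not supply its own proof of this statement: it is simply quoted as \cite[Lemma 3.4, p.~121]{Bertsekas-Tsitsiklis_bookNeuro11}, so there is nothing in the paper to compare against beyond the citation itself.
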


\subsection{Proof of Theorem \ref{Theorem_convergence_inexact_Jacobi}}\label{proof_Th1} 
%

For any given $k\geq 0$, the Descent Lemma \cite{Bertsekas_Book-Parallel-Comp}
yields: with $\widehat{\mathbf{z}}^{k}\triangleq(\widehat{\mathbf{z}}_{i}^{k})_{i\in \mathcal{N}}$ and
$\mathbf{z}^{k}\triangleq(\mathbf{z}_{i}^{k})_{i\in \mathcal{N}}$ defined in step S.4  of Algorithm \ref{alg:general},
{
\begin{equation}
\begin{array}{lll}
  F\left(\mathbf{x}^{k+1}\right)  & \leq &  F\left(\mathbf{x}^{k}\right)+\gamma^{k}\,\nabla_{\mathbf{x}}F\left(\mathbf{x}^{k}\right)^{T}\left(\widehat{\mathbf{z}}^{k}-\mathbf{x}^{k}\right)\smallskip\\
  && +\dfrac{\left(\gamma^{k}\right)^{2}{L_{\nabla F}}}{2}\,\left\Vert \widehat{\mathbf{z}}^{k}-\mathbf{x}^{k}\right\Vert ^{2}.
\end{array}\label{eq:descent_Lemma}
\end{equation}
 We bound next the second and third terms on the RHS of \eqref{eq:descent_Lemma}.
Denoting by $\overline{\hat{\mathcal{S}}}^k$ the complement of $\hat{\mathcal{S}}^k$, we   have, 
\begin{equation}\label{nabla_times_z_hat}
\begin{array}{l}
\nabla_{\mathbf{x}}F\left(\mathbf{x}^{k}\right)^{T}\left(\widehat{\mathbf{z}}^{k}-\mathbf{x}^{k}\right)\smallskip\\
\qquad\qquad   =
\nabla_{\mathbf{x}}F\left(\mathbf{x}^{k}\right)^{T}\left(\widehat{\mathbf{z}}^{k}-\widehat{\mathbf{x}}(\mathbf{x}^{k})
+\widehat{\mathbf{x}}(\mathbf{x}^{k})-\mathbf{x}^{k}\right)\smallskip\\
\qquad\qquad \overset{(a)}{=}  \nabla_{\mathbf{x}}F\left(\mathbf{x}^{k}\right)^{T}_{\hat{\mathcal{S}}^k} (\mathbf{z}^k - \widehat{\mathbf{x}}(\mathbf{x}^k))_{\hat{\mathcal{S}}^k}\smallskip\\ \qquad\qquad\quad
 +
 \nabla_{\mathbf{x}}F\left(\mathbf{x}^{k}\right)^{T}_{\overline{\hat{\mathcal{S}}}^k} (\mathbf{x}^k - \widehat{\mathbf{x}}(\mathbf{x}^k))_{\overline{\hat{\mathcal{S}}}^k}\smallskip\\
\qquad \qquad \quad   +  \nabla_{\mathbf{x}}F\left(\mathbf{x}^{k}\right)^{T}_{\hat{\mathcal{S}}^k} (\widehat{\mathbf{x}}(\mathbf{x}^k)-\mathbf{x}^k)_{\hat{\mathcal{S}}^k}\smallskip\\ \qquad\qquad\quad
 +    \nabla_{\mathbf{x}}F\left(\mathbf{x}^{k}\right)^{T}_{\overline{\hat{\mathcal{S}}}^k} (\widehat{\mathbf{x}}(\mathbf{x}^k)-\mathbf{x}^k)_{\overline{\hat{\mathcal{S}}}^k}\smallskip\\
\qquad\qquad   = \nabla_{\mathbf{x}}F\left(\mathbf{x}^{k}\right)^{T}_{\hat{\mathcal{S}}^k} (\mathbf{z}^k - \widehat{\mathbf{x}}(\mathbf{x}^k))_{\hat{\mathcal{S}}^k} \smallskip\\ \qquad\qquad\quad +
 \nabla_{\mathbf{x}}F\left(\mathbf{x}^{k}\right)^{T}_{\hat{\mathcal{S}}^k} (\widehat{\mathbf{x}}(\mathbf{x}^k)-\mathbf{x}^k)_{\hat{\mathcal{S}}^k}\medskip\\
 \qquad\qquad   \overset{(b)}{\leq} \displaystyle{\sum_{i \in \hat{\mathcal{S}}^k}}\varepsilon_{i}^{k}\left\Vert \nabla_{\mathbf{x}_{i}}F(\mathbf{x}^{k})\right\Vert\medskip\\
 \ \qquad\qquad\quad +\nabla_{\mathbf{x}}F\left(\mathbf{x}^{k}\right)^{T}_{\hat{\mathcal{S}}^k} (\widehat{\mathbf{x}}(\mathbf{x}^k)-\mathbf{x}^k)_{\hat{\mathcal{S}}^k}\medskip\\
 \qquad\qquad  \overset{(c)}{\leq}   \displaystyle{\sum_{i \in \hat{\mathcal{S}}^k}}\varepsilon_{i}^{k}\left\Vert \nabla_{\mathbf{x}_{i}}F(\mathbf{x}^{k})\right\Vert\medskip\\\qquad\qquad\quad- q \,\| \left(\widehat{\x}(\x^k)-\x^k\right)_{\tiny { \hat{\mathcal{S}}^k}}\|^2  \medskip\\\qquad\qquad\quad    +\displaystyle{\sum_{i \in {\tiny { \hat{\mathcal{S}}^k}}}} \left[G(\x^k)-G(\widehat{\x}_i(\x^k),\x_{-i}^k)\right]
\end{array}
\end{equation}
where in (a) we used the definition of $\widehat{\mathbf{z}}^k$ and of the set $\hat{\mathcal{S}}^k$;  in (b)  we used $\left\Vert \mathbf{z}_{i}^{k}-\widehat{\mathbf{x}}_{i}(\mathbf{x}^{k})\right\Vert \leq\varepsilon_{i}^{k}$; and (c) follows from (\ref{error_descent}) (cf. Lemma \ref{lemma on errors}).

The third term on the RHS of \eqref{eq:descent_Lemma} can be bounded as
\begin{equation}\label{eq:49bis} \begin{array}{rcl}
 \left\Vert \widehat{\mathbf{z}}^{k}-\mathbf{x}^{k}\right\Vert ^{2}&\leq& 2\,
 \left\Vert \left(\mathbf{z}^{k}-\hat{\mathbf{x}}(\x^{k})\right)_{\hat{\mathcal{S}}^k}\right\Vert ^{2}\smallskip\\
 & &+   2\, \left\Vert \left(\hat{\mathbf{x}}(\x^{k})-\x^k\right)_{\hat{\mathcal{S}}^k}\right\Vert ^{2}\medskip\\
 & = &
 +2\sum_{i\in  {\hat{\mathcal{S}}^k}}\left\Vert \mathbf{z}_{i}^{k}-\widehat{\mathbf{x}}_{i}(\mathbf{x}^{k})\right\Vert ^{2}\smallskip
\\ && + 2\, \left\Vert \left(\hat{\mathbf{x}}(\x^{k})-\x^k\right)_{\hat{\mathcal{S}}^k}\right\Vert ^{2}\medskip\\
& \leq&  2\,\displaystyle{\sum_{i \in {\hat{\mathcal{S}}^k}}}(\varepsilon_{i}^{k})^{2} + 2\left\Vert \left(\hat{\mathbf{x}}(\x^{k})-\x^k\right)_{\hat{\mathcal{S}}^k}\right\Vert ^{2},
 \end{array}
 \end{equation}
 where the first inequality follows from the definition of $ \mathbf{z}^{k}$ and $ \widehat{\mathbf{z}}^{k}$, and
 in the last inequality we used $\left\Vert \mathbf{z}_{i}^{k}-\widehat{\mathbf{x}}_{i}(\mathbf{x}^{k})\right\Vert \leq\varepsilon_{i}^{k}$.

Now, we combine the above results to get the descent property of $V$ along $\{\x^k\}$. For sufficiently large $k\in \mathbb{N}_+$, it  holds
\begin{equation}
\begin{array}{l}
\hspace{-0.4cm}V(\mathbf{x}^{k+1})  =  F(\mathbf{x}^{k+1}) + G(\mathbf{x}^{k+1}) \\[0.3em]
{\leq}
V\left(\mathbf{x}^{k}\right)-\gamma^{k}\left(q -\gamma^{k}{L_{\nabla F}}\right)\left\Vert \left(\widehat{\mathbf{x}}(\mathbf{x}^{k})-\mathbf{x}^{k}\right)_{\hat{\mathcal S}^k}\right\Vert ^{2}+T^{k},
\end{array}\label{eq:descent_Lemma_2}
\end{equation}
where the inequality follows from (\ref{error_descent}), (\ref{eq:descent_Lemma}), (\ref{nabla_times_z_hat}),  and (\ref{eq:49bis}),   and  $T^{k}$ is given by
$$T^{k}\triangleq\gamma^{k}\,
\sum_{i \in\cal N}\varepsilon_{i}^{k}\left( L_G +
\left\Vert \nabla_{\mathbf{x}_{i}}F(\mathbf{x}^{k})\right\Vert\right) +\left(\gamma^{k}\right)^{2}{L_{\nabla F}}\,\sum_{i\in \cal N}(\varepsilon_{i}^{k})^{2}.$$
By assumption (iv) in Theorem \ref{Theorem_convergence_inexact_Jacobi}, it is not difficult to show that
 $\sum_{k=0}^{\infty}T^{k}<\infty$.
Since $\gamma^{k}\rightarrow 0$, it follows from (\ref{eq:descent_Lemma_2}) that there exist  some positive constant
$\beta_{1}$ and a sufficiently large $k$, say $\bar{{k}}$, such that
\begin{equation}
V(\mathbf{x}^{k+1})\leq V(\mathbf{x}^{k})-\gamma^{k}\beta_{1}\left\Vert \left(\widehat{\mathbf{x}}(\mathbf{x}^{k})-\mathbf{x}^{k}\right)_{\hat{\mathcal S}^k}\right\Vert ^{2}+T^{k},\label{eq:descent_Lemma_3_}
\end{equation}
for all $k\geq\bar{{k}}$. Invoking Lemma \ref{lemma_Robbinson_Siegmunt} 
while using $\sum_{k=0}^\infty T^{k}<\infty$ and the coercivity of $V$, we deduce
from (\ref{eq:descent_Lemma_3_}) that
\begin{equation}
\begin{array}{ll}
\displaystyle{\lim_{t\rightarrow\infty}}\,\sum_{k=\bar{{k}}}^{t}\gamma^{k}\left\Vert \left(\widehat{\mathbf{x}}(\mathbf{x}^{k})-\mathbf{x}^{k}\right)_{\hat{\mathcal S}^k}\right\Vert ^{2}<+\infty,
\end{array}
\end{equation}
and thus also
\begin{equation}
\begin{array}{ll}
\displaystyle{\lim_{t\rightarrow\infty}}\,\sum_{{\mathcal{K}_{\text{mx}}}\ni \,k\,\geq\, \bar{{k}}}^{t}\gamma^{k}\left\Vert \left(\widehat{\mathbf{x}}(\mathbf{x}^{k})-\mathbf{x}^{k}\right)_{\hat{\mathcal S}^k}\right\Vert ^{2}<+\infty.
\end{array}\label{eq:finite_sum_series}
\end{equation}

 Lemma \ref{lemma:almost_surely_nonsum}  together with~\eqref{eq:finite_sum_series}   imply
\[
\liminf_{k\in\mathcal{K}_{\text{mx}}} \left\Vert \left(\widehat{\mathbf{x}}(\mathbf{x}^{k})-\mathbf{x}^{k}\right)_{\hat{\mathcal S}^k}\right\Vert  = 0, \qquad \text{w.p.\,} 1,
\]
which by Lemma~\ref{lem:sampledesc} implies
\begin{equation}\label{lim_inf_convergence}
\liminf_{k\to\infty} \left\Vert \widehat{\mathbf{x}}(\mathbf{x}^{k})-\mathbf{x}^{k} \right\Vert = 0,\qquad \text{w.p.\,} 1.
\end{equation}
 Therefore,  the limit point of the infimum sequence is a fixed point of $\widehat{\x}(\cdot)$ w.p.1.

\subsection{Proof of Theorem 3}\label{proof_Th2}

The proof follows similar ideas as the one of Theorem  1 in our recent work \cite{FacSagScuTSP14}, but with the nontrivial complication of dealing with randomness in the block selection.

Given \eqref{lim_inf_convergence},  we show next that, under the separability assumption on $G$, it holds that $\lim_{k\rightarrow\infty}\left\Vert \widehat{\mathbf{x}}(\mathbf{x}^{k})-\mathbf{x}^{k}\right\Vert =0$ w.p.1. For notational  simplicity, let us define  $\triangle\widehat{\mathbf{x}}(\mathbf{x}^{k})\triangleq\widehat{\mathbf{x}}(\mathbf{x}^{k})-\mathbf{x}^{k}$.

Note first  that for any finite but arbitrary sequence  $\{k, k+1, ..., i_k-1\}$, it holds that
\[
\mathbb{E}\left[\sum_{{{\mathcal K}_{\text{mx}}}\ni t=k}^{i_k-1}\gamma^t\right] =
\sum_{t=k}^{i_k-1}\gamma^t \left[\mathbb{P}(t\in {\mathcal K}_{\text{mx}})\right] \ge
p \,\sum_{t=k}^{i_k-1}\gamma^t,
\]
 and thus
\[
\mathbb{P}\left(\sum_{{{\mathcal K}_{\text{mx}}}\ni t=k}^{i_k-1}\gamma^t > {\beta} \sum_{t=k}^{i_k-1}\gamma^t\right) > 0,
\]
for all $ k\in\mathcal{K}$ and $0<\beta<p$.
This implies that, w.p.1, there exists  an infinite sequence of indexes, say $\mathcal{K}_1\subseteq \mathcal{K}$, such that
\begin{equation}\label{lower_bound_sum_gamma}
\sum_{{{\mathcal K}_{\text{mx}}}\ni t=k}^{i_k-1}\gamma^t > {\beta} \sum_{t=k}^{i_k-1}\gamma^t,\quad \forall k\in \mathcal{K}_1.
\end{equation}
 Suppose now, by contradiction, that $\limsup_{k\rightarrow\infty}\left\Vert \triangle\widehat{\mathbf{x}}(\mathbf{x}^{k})\right\Vert >0$ with a positive probability.
 Then we can find a realization such that at the same time \eqref{lower_bound_sum_gamma} holds for some $\mathcal{K}_1$ and $\limsup_{k\rightarrow\infty}\left\Vert \triangle\widehat{\mathbf{x}}(\mathbf{x}^{k})\right\Vert >0$. In the rest of the proof we focus on this realization and get a contradiction, thus proving
 that $\limsup_{k\rightarrow\infty}\left\Vert \triangle\widehat{\mathbf{x}}(\mathbf{x}^{k})\right\Vert =0$ w.p.1.

If $\limsup_{k\rightarrow\infty}\left\Vert \triangle\widehat{\mathbf{x}}(\mathbf{x}^{k})\right\Vert >0$
then there exists a $\delta>0$ such that $\left\Vert \triangle\widehat{\mathbf{x}}(\mathbf{x}^{k})\right\Vert >2\delta$
for infinitely many $k$ and also $\left\Vert \triangle\widehat{\mathbf{x}}(\mathbf{x}^{k})\right\Vert <\delta$
for infinitely many $k$. Therefore, one can always find an infinite
set of indexes, say $\mathcal{K}$, having the following properties:
for any $k\in\mathcal{K}$, there exists an integer $i_{k}>k$ such
that
\begin{eqnarray}
\left\Vert \triangle\widehat{\mathbf{x}}(\mathbf{x}^{k})\right\Vert <\delta, &  & \left\Vert \triangle\widehat{\mathbf{x}}(\mathbf{x}^{i_{k}})\right\Vert >2\delta\medskip\label{eq:outside_interval}\\
\delta\leq\left\Vert \triangle\widehat{\mathbf{x}}(\mathbf{x}^{j})\right\Vert \leq2\delta &  & k<j<i_{k}.\label{eq:inside_interval}
\end{eqnarray}


 {Proceeding now as in  the proof of Theorem \ref{Theorem_convergence_inexact_Jacobi} in~\cite{FacSagScuTSP14}, we have: for  $k\in\mathcal{K}_1$,
\begin{eqnarray}
\delta & \overset{(a)}{<} & \left\Vert \triangle\widehat{\mathbf{x}}(\mathbf{x}^{i_{k}})\right\Vert -\left\Vert \triangle\widehat{\mathbf{x}}(\mathbf{x}^{k})\right\Vert \medskip\nonumber \\
 & \leq & \left\Vert \widehat{\mathbf{x}}(\mathbf{x}^{i_{k}})-\widehat{\mathbf{x}}(\mathbf{x}^{k})\right\Vert +\left\Vert \mathbf{x}^{i_{k}}-\mathbf{x}^{k}\right\Vert \\
 & \overset{(b)}{\leq} & (1+\hat{{L}})\left\Vert \mathbf{x}^{i_{k}}-\mathbf{x}^{k}\right\Vert \\
 & \overset{(c)}{\leq} &      (1+\hat{{L}})\sum_{t=k}^{i_{k}-1}\gamma^{t}\left(\left\Vert \triangle\widehat{\mathbf{x}}(\mathbf{x}^{t})_{S^t}\right\Vert +\left\Vert (\mathbf{z}^{t}-\widehat{\mathbf{x}}(\mathbf{x}^{t}))_{S^t}\right\Vert \right)\vspace{-0.3cm}\nonumber \\
 & \overset{(d)}{\leq} & (1+\hat{{L}})\,(2\delta+\varepsilon^{\max})\sum_{t=k}^{i_{k}-1}\gamma^{t},\label{eq:lower_bound_sum}
\end{eqnarray}
where (a) follows from (\ref{eq:outside_interval});
(b) is due to Lemma \ref{Prop_x_y}; (c) comes from the triangle
inequality, the updating rule of the algorithm   and the definition of $\widehat{\z}^k$; and in (d) we used
(\ref{eq:outside_interval}), (\ref{eq:inside_interval}), and $\left\Vert \mathbf{z}^{t}-\widehat{\mathbf{x}}(\mathbf{x}^{t})\right\Vert \leq\sum_{i\in \cal N}\varepsilon_{i}^{t}$,
where $\varepsilon^{\max}\triangleq\max_{k}\sum_{i\in\cal N}\varepsilon_{i}^{k}<\infty$.
It follows from (\ref{eq:lower_bound_sum}) that
\begin{equation}
\liminf_{{\mathcal{K}_1}\ni k\rightarrow\infty}\sum_{t=k}^{i_{k}-1}\gamma^{t}\geq\dfrac{{\delta}}{(1+\hat{{L}})(2\delta+\varepsilon^{\max})}>0.\label{eq:lim_inf_bound}
\end{equation}

{We show next that (\ref{eq:lim_inf_bound}) is in contradiction with
the convergence of $\{V(\mathbf{x}^{k})\}$. To do that, we preliminary
prove that, for sufficiently large $k\in\mathcal{K}$, it must be
$\left\Vert \triangle\widehat{\mathbf{x}}(\mathbf{x}^{k})\right\Vert \geq\delta/2$.
Proceeding as in (\ref{eq:lower_bound_sum}), we have: for any given
$k\in\mathcal{K}$,
\[
\begin{array}{l}
\left\Vert \triangle\widehat{\mathbf{x}}(\mathbf{x}^{k+1})\right\Vert -\left\Vert \triangle\widehat{\mathbf{x}}(\mathbf{x}^{k})\right\Vert \leq(1+\hat{{L}})\left\Vert \mathbf{x}^{k+1}-\mathbf{x}^{k}\right\Vert \smallskip\\
\qquad\qquad\qquad\qquad\qquad\leq(1+\hat{{L}})\gamma^{k}\left(\left\Vert \triangle\widehat{\mathbf{x}}(\mathbf{x}^{k})\right\Vert +\varepsilon^{\max}\right).
\end{array}
\]
 It turns out that for sufficiently large $k\in\mathcal{K}_1$ so that
$(1+\hat{{L}})\gamma^{k}<\delta/(\delta+2\varepsilon^{\max})$, it
must be
\begin{equation}
\left\Vert \triangle\widehat{\mathbf{x}}(\mathbf{x}^{k})\right\Vert \geq\delta/2;\label{eq:lower_bound_delta_x_n}
\end{equation}
otherwise the condition $\left\Vert \triangle\widehat{\mathbf{x}}(\mathbf{x}^{k+1})\right\Vert \geq\delta$
would be violated [cf. (\ref{eq:inside_interval})]. Hereafter
we assume without loss of generality  that (\ref{eq:lower_bound_delta_x_n}) holds for
all $k\in\mathcal{K}_1$ (in fact, one can always restrict $\{\mathbf{x}^{k}\}_{k\in\mathcal{K}_1}$
to a proper subsequence).
}

We can show now that (\ref{eq:lim_inf_bound}) is in contradiction
with the convergence of $\{V(\mathbf{x}^{k})\}$. Using (\ref{eq:descent_Lemma_3_})
(possibly over a subsequence), we have: for sufficiently large $k\in\mathcal{K}_1$,
\begin{equation}\label{eq:gammat}
\begin{array}{l}
V(\x^{i_k}) \le V(\x^k)-\beta_1 \displaystyle{\sum_{{\mathcal{K}_\text{mx}}\ni t=k}^{i_k-1}}\!\! \gamma^t \left\Vert \left(\triangle\widehat{\mathbf{x}}(\mathbf{x}^{t})\right)_{\hat{\mathcal{S}}^t}\right\Vert^2
+\!\!\displaystyle{\sum_{{\mathcal{K}_\text{mx}}\ni t=k}^{i_k-1}}  T^t \medskip
\\ \qquad \overset{(a)}{\le} V(\x^k)- \beta_2 \displaystyle{\sum_{{\mathcal{K}_\text{mx}}\ni t=k}^{i_k-1}}\!\! \gamma^t \left\Vert  \triangle\widehat{\mathbf{x}}(\mathbf{x}^{t}) \right\Vert^2
+\sum_{t=k}^{i_k-1} T^t \medskip
\\ \qquad \overset{(b)}{\le}  V(\x^k)- \beta_3 \,\displaystyle{\sum_{t=k}^{i_k-1}} \gamma^t
+\sum_{t=k}^{i_k-1} T^t,
\end{array}
\end{equation}
where (a) follows from   Lemma~\ref{lem:sampledesc} and $\beta_2=c_1\,\beta_1>0$; and (b) is due to (\ref{eq:lower_bound_delta_x_n}) and (\ref{lower_bound_sum_gamma}), with $\beta_3=\beta\,\beta_2\, (\delta^2/4)$.

Since $\{V(\x^k)\}$ converges and $\sum_{k=0}^{\infty} T^k<\infty$,
it holds that $\lim_{\mathcal{K}_1 \ni k\to\infty}\sum_{t=k}^{i_k-1} \gamma^t=0$, contradicting~\eqref{eq:lim_inf_bound}. Therefore  $\lim_{k\rightarrow\infty}\left\Vert \widehat{\mathbf{x}}(\mathbf{x}^{k})-\mathbf{x}^{k}\right\Vert =0$ w.p.1. Since $\{\x^k\}$ is bounded by the coercivity of $V$ and the convergence of $\{V(\x^k)\}$, it has at least
one limit point $\bar{\x}\in X$. By the continuity of $\widehat{\x}(\bullet)$  (cf. Lemma~\ref{Prop_x_y}) it holds that
$\widehat{\x}(\bar{\x})=\bar{\x}$. By Proposition  \ref{prop_fixed-points} $\bar{{\mathbf{x}}}$ is also a stationary
solution of Problem  (\ref{eq:problem 1}).\hfill$\square$

\bibliographystyle{IEEEtran}
\balance
\bibliography{scutari_facchinei_refs}

\end{document}